\documentclass[a4paper,11pt]{article}

\usepackage[dvipsnames]{xcolor}
\usepackage{amsfonts,amsthm,amssymb}
\usepackage{dsfont}
\usepackage{tikz}
\usepackage{setspace}
\doublespacing
\usepackage{subfigure}
\usepackage[flushleft]{threeparttable}
\usepackage{endnotes}

\usepackage{bm}
\usepackage{mathtools}
\mathtoolsset{showonlyrefs=true}

\usepackage[authoryear]{natbib}

\usepackage[plainpages=false, pdfpagelabels]{hyperref} 
\hypersetup{
	colorlinks   = true,
	citecolor    = RoyalBlue, 
	linkcolor    = RubineRed, 
	urlcolor     = Turquoise
}

\oddsidemargin 0cm \evensidemargin 0cm \topmargin -1cm \textwidth
17cm \textheight 23cm

\newcommand\COMP{\hbox{C\kern -.58em {\raise .54ex \hbox{$\scriptscriptstyle |$}}
\kern-.55em {\raise .53ex \hbox{$\scriptscriptstyle |$}} }}
\newcommand\NN{\hbox{I\kern-.2em\hbox{N}}}
\newcommand\RR{\hbox{I\kern-.2em\hbox{R}}}
\newcommand\sRR{{\it \hbox{I\kern-.2em\hbox{R}}}}
\newcommand\QQ{\hbox{I\kern-.53em\hbox{Q}}}
\newcommand\PP{\hbox{I\kern-.53em\hbox{P}}}
\newcommand\EE{\hbox{I\kern-.53em\hbox{E}}}
\newcommand\ZZ{{{\rm Z}\kern-.28em{\rm Z}}}
\newcommand\be{\begin{equation}}
\newcommand\ee{\end{equation}}

\newtheorem{theorem}{Theorem}[section]
\newtheorem{assum}[theorem]{Assumption}
\newtheorem{proposition}[theorem]{Proposition}
\newtheorem{remark}[theorem]{Remark}

\newtheorem{lemma}[theorem]{Lemma}

\newtheorem{definition}[theorem]{Definition}
\newtheorem{corollary}[theorem]{Corollary}

\newcommand\beq{\begin{eqnarray}}
\newcommand\eeq{\end{eqnarray}}
\newcommand\bq{\begin{eqnarray*}}
\newcommand\eq{\end{eqnarray*}}

\numberwithin{equation}{section}

\newcommand{\N}{\mathcal{N}}
\newcommand{\dsone}{\mathds{1}}
\newcommand{\F}{\mathcal{F}}
\newcommand{\Fb}{\mathbb{F}}
\newcommand{\Po}{\mathbb{P}}
\newcommand{\dd}{\mathrm{d}}
\newcommand{\Nb}{\mathbb{N}}
\newcommand{\Lc}{\mathcal{L}}

\newcommand{\lamb}{\bm{\lambda}}
\newcommand{\Lamb}{\bm{\Lambda}}
\newcommand{\Mb}{\bm{M}}
\newcommand{\Mhat}{\widehat{M}}

\allowdisplaybreaks

\newcommand{\blue}{\color{blue}}

\begin{document}
\title{An Explicit Default Contagion Model and Its Application to Credit Derivatives Pricing
\footnote{We would like to thank Tahir Choulli, Liuren Wu, Zhengyu Cui, Ping Li, Shiqi Song, Xianhua Peng, Chao Shi for their insightful comments. The research of Jun Deng is supported by the National Natural Science Foundation of China (11501105) and UIBE Excellent Young Research Funding (302/871703).
The research of Bin Zou is supported by a start-up grant from the University of Connecticut. 
Declarations of interest: none}
}

\author{
Dianfa Chen\thanks{School of Finance, Nankai University, Tianjin, China. Email: dfchen@nankai.edu.cn.}, \quad
Jun Deng\thanks{School of Banking and Finance, University of International Business and Economics, Beijing, China. Email: jundeng@uibe.edu.cn} , \quad
Jianfen Feng\thanks{School of Banking and Finance, University of International Business and Economics, Beijing, China.  Email: \rm{danxin97@163.com}} , \quad
Bin Zou\thanks{Corresponding author. 341 Mansfield Road U1009
 Storrs, CT 06269-1069, Department of Mathematics, University of Connecticut, Storrs, CT, USA. Email: bin.zou@uconn.edu}
}


\maketitle

\begin{abstract}
	We propose a novel credit default model that takes into account the impact of macroeconomic information and contagion effect on the defaults of obligors.
	We use a set-valued Markov chain to model the default process, which is the set of all defaulted obligors in the group.
	We obtain analytic characterizations for the default process, and use them to derive pricing formulas in explicit forms for synthetic collateralized debt obligations (CDOs).
	Furthermore, we use market data to calibrate the model and conduct numerical studies on the tranche spreads of CDOs.
	We find evidence to support that systematic default risk coupled with  default contagion could have the leading component of the total default risk.
	
\noindent
{\bf Key words:}  credit risk; collateral default obligation (CDO); Markov chain; jump diffusion; tranche spread
\end{abstract}

\section{Introduction}

In the aftermath of the financial crisis of 2007-2009, there have been burgeoning interests in studying the causes and remedies of this crisis from academic scholars, practitioners, and regulators around the world.
\cite{fcic2011} concludes that this financial crisis is avoidable and mainly caused by the failure of financial regulations and supervisions, especially on structured finance products.
The  collapsing mortgage-lending standards and mortgage related financial products, such as mortgage-backed securities (MBS) and collateralized debt obligations (CDOs),
lit and spread the flame of contagion and crisis.
In Basel Accords II, the calibration  of default risk neglects the significant impact of default contagion among different obligors.
After the financial crisis, Basel Accords III is proposed, and one of its key principles is to emphasize the modeling of default contagion, which contributes significantly to the collapse of financial systems during the crisis.

In the literature, there are three main strands on the modeling of default risk. 
We review them briefly in what follows, but are by no means exhaustive here.  
The first line includes the structure models, pioneered by   \cite{merton1974pricing},  which follows from the   option pricing theory of \cite{black1973pricing}. 
A  default event occurs if the company's asset value is below its debt  at the time maturity.  \cite{black1976valuing} extends Merton's structure model by  postulating  that a default occurs at the first passage time when the firm's asset value drops below a certain time-dependent barrier. 
However, structure models lack accuracy in
explaining the cross-section of credit spreads, measured by the yield difference between
risky corporate bonds and riskless bonds,  and  underpredict short-term default probabilities, see, e.g.,  \cite{eom2004structural}.   We refer to \cite{sundaresan2013review} for an excellent review on  structure models.  The second strand is  the Copula models,  first  proposed  by  \cite{li1999}.
In the seminal work of \cite{li1999}, the author uses Gaussian Copula to model the default correlations and   joint distribution, and  applies the results to credit derivative pricing and hedging. However, Copula models do not fit well with the market data and have difficulty in explaining model parameters. 
Representative works using copula for credit modeling include, but certainly are not limited to, 
\cite{frey2001}, \cite{schonbucher2001},   \cite{laurent2005}, and \cite{hullwhite2006}.

Our paper falls into the third strand, which includes the intensity based models. In the intensity based credit modeling literature, two well-established modeling approaches are the \emph{top-down} approach and the \emph{bottom-up} approach, both of them are capable of fitting the market data.\footnote{More mathematical details about these two methods are presented in Section \ref{secPreliminary}.} 
In the top-down approach, models are built for the cumulative default intensity of the whole portfolio, without specifying the underlying single obligor. 
While in comparison, under the bottom-up approach, models are constructed with specified individual  default intensity.
The default time, either for the whole portfolio or individual obligor,  is usually modeled as the first jump time of a process, such as Cox process or doubly-stochastic Poisson process. Top-down models are introduced and investigated in \cite{EKGB2007},  \cite{errais2010}, \cite{giesecke2010}, and \cite{rama2013},  among others. Under the framework of bottom-up approach, single name default intensity based models are introduced in  \cite{jarrowTurbul1995},  \cite{lando1998}, and \cite{DuffieSingleton1999}. 
\cite{Mortensen2005} postulates that the default risk of obligor $i$ is given in the form of  $\overline{\lambda}^i(t) = \text{constant} \cdot \lambda(t) + \lambda^i(t)$,   where $\lambda$ captures the systematic default risk component and  $\lambda^i$ captures the idiosyncratic default risk component which is assumed  to be independent of $\lambda$ and $\lambda^j$, for all obligors $j$. The default propagation and correlations are channeled by the common systematic component $\lambda$. However, both approaches can not account for the contagion effect on idiosyncratic default risk induced by systematic default risk $\lambda(t)$. \cite{DasDuffie2007}  and  \cite{duffieEHS2009} demonstrate the presence of  frailty correlated default and the incapability of doubly stochastic assumption to capture default contagion or frailty (unobservable explanatory variables that are correlated across firms).

In this paper,  we consider a group of  $N$ defaultable obligors (or names), labeled by $(O_i)_{i=1,2,\cdots, N}$, where the default of one obligor might impact the remaining obligors in the group. The default contagion arises from two components: defaulted obligors and macroeconomic factor.  Specifically, we denote by  { $\tau_n$} (a nonnegative  random variable)  the occurrence time of the $n^{th}$ default event, and 
$X=(X_t)_{t \ge 0}$ the default process, where $X_t$ is the set of all the obligors that have defaulted by time $t$. The dynamics of the macroeconomic factor are described by an exogenous process $Y=(Y_t)_{t\geq 0}$ that might affect the occurrence of default events in the group, precisely through $(\tau_n)_{n=1,2,\cdots,N}$ and $X=(X_t)_{t \ge 0}$.

Our new default framework, taking default dependence and contagion  into account,  gives explicitly the dynamics of the default process $X$ and can be tailored to price CDOs, CDX, iTraxx, et cetera.   
Before diving into technical details, we explain the leitmotif behind our treatment.  A synthetic CDO is a portfolio consisting of $N$ single-name CDS's on obligors with individual default times $\nu_1,\nu_2,\cdots, \nu_N$ and recovery rates $R_1,R_2,\cdots, R_N$. It is standard to assume that all obligors have the same nominal values, denoted by $A$. The accumulated loss { $L=(L_t)_{t \ge 0}$} is then given by
\begin{align}\label{introuductionLt}
L_t:=\sum^N_{i=1}A(1- R_i) \mathds{1}_{\{\nu_i\leq t\}}=  A\cdot R_X(t), \quad \mbox{where} \quad R_X(t) := \sum_{i\in X_t}(1-R_i).
\end{align}

Instead of modeling  default obligors individually  as in the bottom-up approach, we use a set-valued process (default process) $X$ to characterize the evolution of default events in the group.  Hence, the loss process $L$ is fully characterized by the set-valued process $X$. The interaction between the macroeconomic factor $Y$ and the default process $X$ is channeled through a conditional Markov process with default intensity $\Lamb$, which will be specified later (see Assumption \ref{assumption_intensity}). The  process $\Lamb$ has the same flavor of intensity based approach as in  \cite{jarrowTurbul1995}, \cite{lando1998}, and \cite{cpr2004}, to name a few.

This paper has several contributions to the intensity based credit risk modeling literature. First, we explicitly construct the set-valued default process $X$ through its intensity family $\Lamb$. 
As a result, our model integrates macroeconomic impact and intergroup default contagion effect dynamically, which both top-down and bottom-up models can not capture. Markov (set-valued and real-valued)  default models have been studied both theoretically and empirically by  \cite{jarrowLando1997} and \cite{bielecki09}; however our  model leads to a more tractable formulation in pricing and hedging credit derivatives. Second, we provide a closed-form pricing formula for CDOs without using matrix exponential as in \cite{bielecki09}. This gives  significant computational advantages  beyond its tractability, especially when the obligors  $N$ is large.   We illustrate this in a particular   homogeneous   contagion model where $N$ could be as large as 125. Finally, our set-valued Markov model can be easily extended and applied to study other credit derivatives such as first-to-default, $k$-th default, et cetera. We will leave the investigation of those derivatives in  future research.

The paper is organized as follows. In Section \ref{secPreliminary}, we introduce the default contagion model. In Section \ref{sectionDefualtprocess}, we provide analytic characterizations for the default process.
In Section \ref{sectionCDOpricing}, we derive the pricing formulas of credit derivatives.
In Section \ref{sectoionNumeric}, we conduct a sensitivity analysis on tranche spreads and use market data to calibrate the model.
We conclude in Section \ref{sectionConlusions}.
In Appendixes \ref{sectionconstructionmarkov} and \ref{sec_proofs}, we present technical proofs.

\section{The Setup}
\label{secPreliminary}

In this paper, we model the macroeconomic information (or factors) by an exogenous process $Y=(Y_t)_{t\geq 0}$, which is defined on a stochastic basis $(\Omega, \F,  \Fb^Y =(\F^Y_t)_{t \ge 0}, \Po)$. Here the filtration $\Fb^Y$ is taken to be the augmented filtration generated by the process $Y$, satisfying the usual hypotheses of right continuity and completeness, and $\F^Y_\infty \subseteq \F$.
The measure $\Po$ is the risk neutral probability measure associated with a constant risk-free rate $r$.
In the economy, we consider $N$ defaultable obligors, labeled as $\{O_i\}_{i \in \N}$, where $\N:= \{1,2,\cdots,N\}$.
For each obligor $O_i$, denote by $\nu_i$ its individual default time, where $i \in \N$.
If obligor $O_i$ defaults, we assume there is a proportional nominal loss of $1-R_i$.
As a standard market practice,  $R_i$ is often set to be 40\% for all $i$, see ISDA standard CDS converter specification.\footnote{\url{http://www.cdsmodel.com/cdsmodel/assets/cds-model/docs}}

In our studies, a credit derivative is a contingent claim with payoff depending on the loss process $L=(L_t)_{t \ge 0}$, which is defined by
\begin{align}
\label{eqn_L}
L_t :=\sum^N_{i=1} (1- R_i) \dsone_{\{\nu_i\leq t\}}, \quad t \ge 0,
\end{align}
where $\dsone_\cdot$ is an indicator function.
Without loss of generality, we have assumed a unitary face value for all obligors in the above definition of $L$.
In the literature, there are two standard approaches (models) for the pricing and hedging problems of credit derivatives: the \emph{bottom-up} and \emph{top-down} approaches.
The bottom-up approach specifies the intensity process $\lambda_i = (\lambda_{i}(t))_{t \ge 0}$\footnote{Throughout this paper, for a stochastic process, if the subscript is reserved for special meaning, then we write time variable $t$ in parenthesis, see, e.g., $\lambda_{i}(t)$; otherwise, we may write time variable $t$ in subscript or parenthesis exchangeably (e.g., as $L_t$ or $L(t)$).} of each obligor such that, for all $i \in \N,$
\begin{align}
\left( \mathds{1}_{\{\nu_i\leq t\}} - \int_{0}^{t}\lambda_{i}(s) \dd s \right)_{t\ge 0} \text{ is a martingale},
\end{align}
see, e.g., \cite{duffiepansingleton2000} and \cite{cpr2004}.
On the other hand, the top-down approach specifies the constituent intensity process $\lambda_L = (\lambda_L(t))_{t \ge 0}$ of the loss process $L$ such that
\begin{align}
\left( L_t - \int_{0}^{t} \lambda_{L}(s) \dd s \right)_{t\ge 0} \text{ is a martingale},
\end{align}
see, e.g., \cite{EKGB2007} and \cite{giesecke2010}.
In the top-down approach, the constituent intensity $\lambda_L$ can be recovered by random thinning to decompose the aggregate portfolio intensity into a sum of constituent intensities, see \cite{giesecke2010}.

Unlike in the { bottom-up} approach, we do not model the individual default time $\nu_i$ for each obligor $O_i$, but instead consider the \emph{ordered} default times $(\tau_i)_{i \in \N}$ of all obligors. Namely, $\tau_i$ is the occurrence time of the $i$-th  default among all obligors.
Hence, we have
\begin{align}
\tau_1 := \min_{i \in \N} \{ \nu_i \} \le \cdots \le \tau_i \le \cdots \le \tau_N := \max_{i \in \N} \{ \nu_i \}.
\end{align}
Denote by $X=(X_t)_{t \ge 0}$ the \emph{default process}, and define $X_t$ as the set of obligors that have defaulted by time $t$.
$X$ is then a set-valued process taking values in the subsets of $\N$. For instance, if $X_t = \{1, 5, 9\}$, then obligors $O_1$, $O_5$, and $O_9$ have defaulted by time $t$.
With the usual conventions, we set $\tau_0 = 0$ and $X_0 = \emptyset$.
The following   assumptions on the defaults modeling will be imposed throughout the paper.

\begin{assum}
	\label{assumption_default}
	We assume that
	(i) no more than one default occurs at the same time;
	and (ii) obligors will not recover after the default.
\end{assum}

\begin{remark}
	Under Assumption \ref{assumption_default}, we have the following results.
	\begin{itemize}
		\item[\rm (i)] $\Po$(no default at time $t$) = 1 - $\Po$(one default at time $t$) and $\tau_1 < \cdots < \tau_i < \cdots < \tau_N$.
		\item[\rm (ii)] $X$ is a non-decreasing process, i.e., $X_s \subseteq X_t$\footnote{In this paper, the notation ``$\subseteq$" may contain equality, i.e., it is possible that $E \subseteq F$ and $F \subseteq E$ hold at the same time. If $E$ is a true subset of $F$, we denote by $E \subset F$.} for all $0 \le s < t $.
		\item[\rm (iii)] The cardinality of the default process, denoted by $|X_t|$ at time $t$, jumps up by size 1 at default time $\tau_i$.
		Hence, $|X_{\tau_{i+1}} / X_{t}| = 1$, where  $\tau_i \le t < \tau_{i+1}$ and $i \in \N / \{N \}$.
		Here, $F/E$ denotes the set difference of two sets $E$ and $F$, i.e., the set of all elements that belong to $F$ but not $E$.
	\end{itemize}
\end{remark}

In our setup, the defaults of all obligors are fully characterized by the pair $(\tau_i, X_{\tau_i})_{i \in \{0\} \cup \N}$. By using them, we rewrite the default process $X$ and the loss process $L$ by
\begin{align}
X_t = \sum_{i=0}^{N} X_{\tau_i} \cdot \dsone_{\{ \tau_i \le t < \tau_{i+1} \}} \quad
\text{and} \quad
L_t =\sum_{i \in X_t} (1-R_i) :=  R_X(t), \quad t \ge 0.  \label{eqn_L_equivalent}
\end{align}
In the above expression, we take $\tau_{N+1} = +\infty$.
Since $\tau_N$ records the last default among all obligors, it is clear that $X_t = \N$ for all $t \ge \tau_N$.

One of the novelties of our framework is to characterize defaults directly through the dynamics of the default process $X$, which is modeled by an $\Fb^Y$-conditional Markov chain with intensity family $\Lamb=(\Lambda_{EF}(t))_{t \ge 0}$, where $E,F \in \Nb$.
Here, $\Nb$ denotes the sigma-algebra consisting of all the subsets of $\N$.
To account for the dependence among defaults in the group, the intensity family $\Lamb$ may depend on the macroeconomic factor $Y$ and/or intergroup contagion.
Introduce notations $\Fb^X = (\F^X_t)_{t \ge 0}$ as the augmented filtration generated by the process $X$.
We present two important definitions below.

\begin{definition}
	\label{definiconditionalmarkov}
	A continuous time ${\mathbb{N}}$-valued stochastic process $X=(X_t)_{t\geq 0}$  is called an $\Fb^Y$-conditional Markov chain if, for all $0 \le s \leq t$ and $ F \in \Nb$,  the following condition holds:
	\begin{align}
	\mathbb{P} \left( X_t= F \mid \F^X_s \vee \F^Y_s \right)= \mathbb{P} \left( X_t=F \mid \sigma(X_s) \vee \F^Y_s \right), \quad \mathbb{P}\text{-a.s.}.
	\end{align}
\end{definition}
Here, the operator $``\vee"$ stands for the sigma-algebra generated by two sigma-fields $\F^X_{\cdot} $ and $ \F^Y_\cdot $.
\begin{definition}
	\label{defindefaultintensity}
	A family of $\Fb^Y$-adapted processes $\Lamb=(\Lambda_{EF}(t))_{t \ge 0}$  or $\lamb=(\lambda_{EF}(t))_{t \ge 0}$ is called the   default  intensity  family of an
	$\mathbb{N}$-valued    process  $X=(X_t)_{t\geq 0}$, if  for any  $F \in \mathbb{N}$,   the process $ X_F = (X_F(t))_{t \ge 0} $ is an $\check{\Fb}$-martingale, where
	\begin{align}
	X_F(t) := \mathds{1}_{F}(X_t)- \sum_{E\subseteq F} \int^t_0 \mathds{1}_{\{X_s = E\}} \dd \Lambda_{EF}(s), \; \text{ with } \;  \Lambda_{EF}(t) := \int_{0}^{t} \lambda_{EF}(s) \dd s,
\end{align}
 and  $\check{\Fb} = (\check{\F}_t)_{t \ge 0}:= (\F_t^X \vee \F_t^Y )_{t \ge 0}$.
\end{definition}

Similar to the bottom-up and top-down approaches, the intensity family $\Lamb$ ir $\lamb$ in our framework plays an important role in the compensator of the default process $X$, with $\lambda_{EF}(t)$ representing  the conditional default rate at time $t$ when obligors in set $E$ have already defaulted.
Notice that the condition (i) in Assumption \ref{assumption_default} is equivalent to
\begin{align}
\lambda_{EF}(t) = 0, \text{ whenever } F \neq E\cup\{i\} \text{ and } i\in E^c \text{ (complement of set $E$)}.
\end{align}

To make our framework more applicable in empirical studies, we assume the existence of  the intensity family $\Lamb$ (or $\lamb$) and the macroeconomic factor process $Y$ first, not the default process $X$  directly. The motivation follows from the fact  that one could apply the market credit derivative prices or spreads to recover the default intensity or contagion rate. We refer interested readers to   \cite{cont2010}, \cite{rama2013}, \cite{nickerson2017}, and the references therein, for related studies.
However,  it is rather difficult and technical to show the existence of an $\Fb^Y$-conditional Markov chain $X$ for a given intensity family
$\Lamb$, which is a main subject of the next section.

\section{{Characterizations} of the Default Process}
\label{sectionDefualtprocess}

In this section, we provide characterizations for the default process $X$, introduced in the previous section, in three steps.
First, we formulate the conditions that guarantee the existence of an $\Fb^Y$-conditional Markov chain $X$, which is used to model the default process in our framework,
see Assumptions \ref{assumption_Poisson} and \ref{assumption_intensity} and Theorem \ref{theorem_existence} in Section \ref{subsec_existence}.
Second, we derive the dynamics of $X$ through computing its conditional probabilities and expectations.
The key results under the general default intensity family are obtained in Theorem \ref{maintheoremforcaluclation}, which is essential in pricing and hedging problems of credit derivatives.
Third, we specify a class of processes for the intensity family in Assumption \ref{assumption_intensity_model} and simplify the results of Theorem \ref{maintheoremforcaluclation} to more tractable forms in Corollaries \ref{theoremparticular} and \ref{coroOneobligor}.

\subsection{The Existence of  the Default Process}
\label{subsec_existence}

In this subsection, we characterize the existence of an $\Fb^Y$-conditional Markov chain $X$ under Assumptions \ref{assumption_Poisson} and \ref{assumption_intensity}, and present results in Theorem \ref{theorem_existence}.
As mentioned in the introduction, such a Markov chain $X$ will be used to model the default process in our framework.

To construct an $\Fb^Y$-conditional Markov chain $X$, we begin with a given exogenous $\mathbb{R}^d$-valued stochastic process $Y$, which captures the macroeconomic information (or factors).
In this section, to obtain generality, we do not specify the dynamics of $Y$.
We consider the pricing problems in the next section when $Y$ is modeled by an affine jump-diffusion process.
In addition, we are given a family of stochastic processes $\Mb$ which are described below.

\begin{assum}
	\label{assumption_Poisson}
	$\Mb = (M_{EF}(t))_{t\geq 0}$ is a family of Poisson processes with intensity equal to one, where $E,F  \in \Nb$ and  $ E \subset F$.
	Furthermore, the Poisson family $\Mb$ and the macroeconomy process $Y$ are mutually independent.
\end{assum}

Next, we summarize the conditions that the intensity family $\Lamb$ (or $\lamb$) should satisfy, and assume those conditions hold in the rest of the paper.

\begin{assum}
	\label{assumption_intensity}
	The intensity family $\Lamb=(\Lambda_{EF}(t))_{t \ge 0}$, where $E,F  \in \Nb$, is a family of $\Fb^Y$-adapted processes, which satisfy the following conditions for all $t\geq 0$:
	\begin{itemize}
		\item[(A1)] $\Lambda_{EF}(t)=0$ (or $\lambda_{EF}(t)=0$), if $E \neq F$ or $F\neq E\cup \{i\}$, where $ i\in E^c $.
		\item[(A2)] $\Lambda_{EE}(t) = - \sum\limits_{E\neq F} \Lambda_{EF}(t)$ $\left(\text{or } \lambda_{EE}(t) = - \sum\limits_{E\neq F} \lambda_{EF}(t) \right)$.
		For notation simplicity, let $\Lambda_E(t):=-\Lambda_{EE}(t)$ and $\lambda_E(t):=-\lambda_{EE}(t)$.
		\item[(A3)] $\Lambda_{EF}(t)$ is  an increasing function of $t$, with $\Lambda_{EF}(0)= 0$.
		\item[(A4)] $\lim\limits_{t \to +\infty} \Lambda_{EF}(t) = +\infty$  for all  $F= E\cup \{i\}$ and $i\in E^c.$
	\end{itemize}
\end{assum}

\begin{remark}
	The essential part of Assumption \ref{assumption_intensity} is (A1), and it is equivalent to condition (i) in Assumption \ref{assumption_default}. (A2) is the Markov transition density requirement. (A3) and (A4) are essential to impose  positivity assumption on the family  $\lamb := (\lambda_{EF}(t))_{t\geq 0 }$.
	We do not make any further assumptions on the structure of  the intensity family $\Lamb$ or $\lamb$, which shall give our framework more flexibility and versatility to capture  default contagion.
\end{remark}

We end this subsection by the following crucial theorem which addresses the existence question of the default process $X$ when the exogenous process $Y$ and the intensity family $\Lamb$ are given.

\begin{theorem}
	\label{theorem_existence}
	Suppose an exogenous process $Y$ is given, and two families of processes $\bm{M}$ and $\Lamb$ are specified by Assumptions \ref{assumption_Poisson} and \ref{assumption_intensity},
	then there exists an ${\mathbb{F}^Y}$-conditional Markov chain $X$ with intensity family $\Lamb$ and $X_0 = \emptyset$.
\end{theorem}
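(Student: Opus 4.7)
The plan is to construct $X$ explicitly by time-changing the given independent Poisson family $\bm{M}$, following the canonical construction of a pure-jump Markov chain, with the twist that the compensators $\Lambda_{EF}$ are random because they depend on $Y$. Set $\tau_0 := 0$ and $X_0 := \emptyset$. Given that $\tau_n$ and $X_{\tau_n} = E$ have been constructed, for each admissible successor $F = E \cup \{i\}$ with $i \in E^c$, define a candidate jump time
\[
\tau_{n+1}^F := \inf\bigl\{ s > \tau_n \,:\, M_{EF}(\Lambda_{EF}(s)) - M_{EF}(\Lambda_{EF}(\tau_n)) \ge 1 \bigr\}.
\]
Condition (A3) makes $\Lambda_{EF}$ a valid (continuous nondecreasing) random time change, and (A4) together with $M_{EF}$ being a unit-rate Poisson process guarantees $\tau_{n+1}^F < \infty$ almost surely. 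Set $\tau_{n+1} := \min_F \tau_{n+1}^F$, let $X_{\tau_{n+1}} := F^\ast$ for the (a.s.\ unique) minimizer $F^\ast$, and keep $X$ constant on $[\tau_n, \tau_{n+1})$. Iterating for $n = 0, 1, \dots, N-1$ produces a right-continuous $\Nb$-valued piecewise constant process $X$ with $X_t = \N$ for $t \ge \tau_N$, respecting Assumption \ref{assumption_default} and condition (A1) by construction.

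Next I would verify the two defining properties. To establish the $\Fb^Y$-conditional Markov property of Definition \ref{definiconditionalmarkov}, I would first condition on the full path of $Y$: by Assumption \ref{assumption_Poisson}, $\bm{M}$ is independent of $Y$, so conditional on $\F^Y_\infty$ the compensators $\Lambda_{EF}(\cdot)$ become deterministic continuous increasing functions, and the construction reduces to the classical time-change construction of an inhomogeneous continuous-time Markov chain on $\Nb$ with deterministic transition rates $\lambda_{EF}(\cdot)$, whose Markov property is standard. I would then upgrade from conditioning on $\F^Y_\infty$ to conditioning on $\F^Y_s$ by using that the post-$s$ increments of $\bm{M}$ are jointly independent of $\F^Y_s \vee \F^X_s$, so the conditional law of $X_t$ given $\F^X_s \vee \F^Y_s$ depends only on $X_s$ and on the $\F^Y_s$-measurable data, that is only on $\sigma(X_s) \vee \F^Y_s$.

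To verify that $\Lamb$ is the intensity family in the sense of Definition \ref{defindefaultintensity}, I would show, block by block on $[\tau_n, \tau_{n+1})$, that $M_{EF}(\Lambda_{EF}(t)) - \Lambda_{EF}(t)$ is a $\check{\Fb}$-local martingale (this uses Poisson compensation together with the $\Fb^Y$-measurability of $\Lambda_{EF}$ and the independence of $\bm{M}$ and $Y$), and then assemble the pieces across blocks by optional stopping at each $\tau_n$ to conclude that
\[
X_F(t) = \mathds{1}_F(X_t) - \sum_{E \subseteq F} \int_0^t \mathds{1}_{\{X_s = E\}} \, \dd \Lambda_{EF}(s)
\]
is a $\check{\Fb}$-martingale for every $F \in \Nb$.

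The main obstacle I expect is the filtration upgrade from $\F^Y_\infty$ back to $\F^Y_s$ in the verification of the conditional Markov property, since $\Fb^Y$ is strictly smaller than the tail sigma-field used in the ``frozen-$Y$'' argument. A clean way to close this gap is a regular conditional probability argument: factor $\mathbb{E}[\psi(X_t) \mathds{1}_A \mathds{1}_B]$ for $A \in \F^Y_s$, $B \in \F^X_s$, and bounded measurable $\psi$, by first conditioning on $\sigma(X_s) \vee \F^Y_\infty$ (tractable by the ``frozen-$Y$'' construction, since it involves only $X_s$, the post-$s$ Poisson increments, and the $\F^Y_\infty$-measurable $\Lambda_{EF}$'s) and then projecting the resulting $\F^Y_\infty$-measurable functional onto $\F^Y_s$. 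The remaining details (a.s.\ finiteness of each $\tau_{n+1}^F$, non-accumulation of jumps thanks to the finite $N$, and a.s.\ uniqueness of the minimizer) are routine consequences of (A1)--(A4) and standard properties of independent Poisson processes.
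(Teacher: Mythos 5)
Your construction is exactly the paper's: time-change the independent unit-rate Poisson processes $M_{EF}$ by $\Lambda_{EF}$, build $(\tau_n, X_{\tau_n})$ inductively by minimizing over the first jump times $\tau_{n+1}(E,i)$ of $\Mhat_{EE^+(i)} = M_{EE^+(i)}(\Lambda_{EE^+(i)}(\cdot))$, and verify the conditional Markov and martingale properties by first conditioning on $\F^Y_\infty$ (where the time changes become deterministic and the chain is a classical inhomogeneous Markov chain on $\Nb$) and then projecting down to $\F^Y_s$ (the paper does this by observing that the transition functional $G(s,t;E,F)$ is $\F^Y_t$-measurable, rather than via your regular-conditional-probability factorization, but the idea is the same). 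The only point I would tighten is the martingale step: $\Mhat_{EF}(t)-\Lambda_{EF}(t)$ is not itself $\check{\Fb}$-adapted, so one cannot invoke it directly as a $\check{\Fb}$-local martingale; the paper instead verifies $\mathds{1}_{\{X_s=E\}}\,\mathbb{E}[X_F(t)-X_F(s)\mid\check{\F}_s]=0$ directly from the explicit kernel $G$ together with the monotone class theorem, which sidesteps that adaptedness issue.
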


\begin{proof}
	The construction of such an ${\mathbb{F}^Y}$-conditional Markov chain $X$ is achieved by carefully counting the jump times of a family of non-homogenous Poisson processes  and by verifying the conditions in Definitions \ref{definiconditionalmarkov} and \ref{defindefaultintensity}.
	The existence of $X$ is obtained under very general assumptions, and our framework embraces a broad class of credit risk models.
	The detailed proof is separated into three parts, which are delayed to Appendix \ref{sectionconstructionmarkov}.
\end{proof}

\subsection{The Dynamics of the Default Process}
\label{subsec_dynamics}

In this subsection, using the results from Theorem \ref{theorem_existence}, we obtain the dynamics of the default process $X$ through deriving its conditional probabilities and expectations, see Theorem \ref{maintheoremforcaluclation}.

To proceed, we introduce some notations to facilitate the presentation of key results below.
For any   $E,F \in \Nb$ satisfying $E\subseteq F$ and $\mid F/E\mid =n$, denote by $\Pi(F/E)$ the set of all the permutations of  $F/E$.
Here, $F/E$ is set $F$ ``minus" set $E$, namely, $F/E = \{x: \, x \in F \text{ and } x \not\in E \}$.
For any  $\bm{\pi}=(\pi_1,\cdots,\pi_{n})\in \Pi(F/E)$, define the sequence of sets $(F^{\bm{\pi}}_k)_{k=0,1,\cdots,n}$ by
\begin{align}
\label{eqFPI}
F^{\bm{\pi}}_0 :=E  \quad \text{and} \quad \ F^{\bm{\pi}}_k := F^{\bm{\pi}}_{k-1}\cup\{\pi_k\}, \; \ k=1,2,\cdots,n.
\end{align}
The theorem below contains the key results regarding the dynamics of the default process $X$.

\begin{theorem}
	\label{maintheoremforcaluclation}
	Suppose the macroeconomy process $Y$ is given, and Assumptions \ref{assumption_Poisson} and \ref{assumption_intensity} hold.
	For any  $0 \le s \le t < +\infty$ and $ F \in {\mathbb  N}$, we have
	\begin{align}
	\label{eqmain1}
	\mathbb{P}\left[X_{t}=F\mid  {\cal F}^X_s \vee \mathcal{F}^Y_{t} \right]=\sum_{E\subseteq F}\mathds{1}_{\{X_s=E\}} \cdot G(s,t;E,F),
	\end{align}
	and further, for any bounded (nonnegative) ${\mathcal{F}^Y_{t}}$-measurable random variable  $\xi $,
	\begin{align}
	\mathbb{E} \left[\mathds{1}_{\{X_{t}=F\}} \cdot \xi \mid {\cal F}^X_s \vee \mathcal{F}^Y_{s}  \right]=\sum_{E\subseteq F}\mathds{1}_{\{X_s=E\}}
	\cdot \mathbb{E} \left[\xi \, G(s,t;E,F)\Big\vert  \mathcal{F}^Y_{s}\right], \label{eqmain2}
	\end{align}
	where
	\begin{align}
	\label{definitionFunctionG}
	G(s, t;E,F) &:= \begin{cases}
	H_0(s,t;E),  & \text{ if } E = F \\[1ex]
	\sum\limits_{\bm{\pi} \in \Pi(F/E)} H_{\mid F/E\mid}(s,t; F^{\bm{\pi}}_0,\cdots,F^{\bm{\pi}}_{\mid F/E\mid}), & \text{ if }  E \subset F
	\end{cases}  \\[1ex]
	H_0(s,t ; E) &:= e^{-\int_s^t \lambda_{E}(u) \dd u},  \\
	H_{k+1}(s,t ; F^{\bm{\pi}}_0,\cdots,F^{\bm{\pi}}_{k+1})
	&:=\int_s^t \lambda_{F^{\bm{\pi}}_k F^{\bm{\pi}}_{k+1}}(v) \cdot e^{-\int^t_v \lambda_{F^{\bm{\pi}}_{k+1}}(u) \dd u}
	\cdot H_k(s,v;F^{\bm{\pi}}_0,\cdots, F^{\bm{\pi}}_k) \dd v.
	\label{defFuntionH}
	\end{align}
	
\end{theorem}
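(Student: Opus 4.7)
The plan is to work at the level of conditional distributions given the macroeconomic filtration, exploiting the pathwise construction of $X$ established in Theorem~\ref{theorem_existence}. The key observation is that once one conditions on $\mathcal{F}^Y_t$, the intensities $\lambda_{EF}(u)$ for $u\in[0,t]$ become deterministic and, by Assumption~\ref{assumption_Poisson}, the restriction of $X$ to $[0,t]$ is a standard inhomogeneous continuous-time Markov chain driven by the independent Poisson family $\bm{M}$. Under this conditioning, $G(s,t;E,F)$ is just the usual path-integral representation of the transition probability from $E$ to $F$, with the sum over $\bm{\pi}\in\Pi(F/E)$ enumerating the $|F/E|!$ orderings in which the obligors in $F/E$ can be added to $E$ one at a time.

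First I would establish \eqref{eqmain1} by showing, on $\{X_s=E\}$, that
$$\mathbb{P}\bigl[X_t=F\bigm|\mathcal{F}^X_s \vee \mathcal{F}^Y_t\bigr]\mathds{1}_{\{X_s=E\}} = G(s,t;E,F)\,\mathds{1}_{\{X_s=E\}}.$$
The argument is an induction on $n=|F/E|$. For the base case $n=0$, one uses the construction from Theorem~\ref{theorem_existence} to identify the first exit time from $E$ after $s$ with the first jump of a non-homogeneous Poisson process whose cumulative intensity is $\int_s^{\cdot}\lambda_E(u)\,\mathrm{d}u$, giving the survival probability $H_0(s,t;E)=e^{-\int_s^t \lambda_E(u)\,\mathrm{d}u}$. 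For the inductive step with $n\ge 1$, I would condition on the first jump time $\sigma>s$ and the resulting state $X_\sigma=E\cup\{j\}$ for some $j\in F/E$; the joint conditional density of $(\sigma,X_\sigma)$ given $\mathcal{F}^Y_t$ is $\lambda_{E,E\cup\{j\}}(v)\,e^{-\int_s^v \lambda_E(u)\,\mathrm{d}u}$. Applying the conditional strong Markov property at $\sigma$ together with the induction hypothesis for $G(v,t;E\cup\{j\},F)$ and summing over $j\in F/E$ yields
$$\sum_{j\in F/E}\int_s^t \lambda_{E,E\cup\{j\}}(v)\,e^{-\int_s^v \lambda_E(u)\,\mathrm{d}u}\,G\bigl(v,t;E\cup\{j\},F\bigr)\,\mathrm{d}v,$$
and the identification with $G(s,t;E,F)$ in the stated form follows from the bijection between $\Pi(F/E)$ and pairs $(j,\bm{\pi}')$ with $j\in F/E$ and $\bm{\pi}'\in\Pi(F/(E\cup\{j\}))$ obtained by prepending $j$ to $\bm{\pi}'$.

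The derivation of \eqref{eqmain2} from \eqref{eqmain1} is then a tower-property exercise. Since $\xi$ is $\mathcal{F}^Y_t$-measurable and $\check{\mathcal{F}}_s\subseteq \mathcal{F}^X_s\vee \mathcal{F}^Y_t$, one has
$$\mathbb{E}\bigl[\mathds{1}_{\{X_t=F\}}\xi\bigm|\check{\mathcal{F}}_s\bigr] = \mathbb{E}\Bigl[\xi\,\mathbb{P}\bigl(X_t=F\bigm|\mathcal{F}^X_s\vee\mathcal{F}^Y_t\bigr)\Bigm|\check{\mathcal{F}}_s\Bigr] = \sum_{E\subseteq F}\mathds{1}_{\{X_s=E\}}\,\mathbb{E}\bigl[\xi\,G(s,t;E,F)\bigm|\check{\mathcal{F}}_s\bigr].$$
Because $\bm{M}$ is independent of $Y$ and $\lamb$ is $\mathbb{F}^Y$-adapted, $\mathbb{F}^Y$ is immersed in $\check{\mathbb{F}}$, so the inner conditional expectation of the $\mathcal{F}^Y_t$-measurable random variable $\xi\,G(s,t;E,F)$ collapses to $\mathbb{E}[\xi\,G(s,t;E,F)\mid \mathcal{F}^Y_s]$, yielding \eqref{eqmain2}.

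The main obstacle is the combinatorial-probabilistic bookkeeping in the inductive step: the Markov-property argument naturally produces a ``first jump first'' recursion, whereas $H_{k+1}$ in \eqref{defFuntionH} is written as a ``last jump last'' recursion. The cleanest workaround is to unroll both recursions into the explicit iterated integral over ordered jump times $s<v_1<\cdots<v_n<t$, recognize the agreement as an elementary Fubini rearrangement, and then re-fold the closed form into $H_n$ via its defining recursion.
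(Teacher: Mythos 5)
Your proposal reaches the correct conclusion via a genuinely different recursion from the paper, and the overall structure (induction on $|F/E|$ for \eqref{eqmain1}, then a tower/immersion argument for \eqref{eqmain2}) is sound. The paper does not prove Theorem~\ref{maintheoremforcaluclation} directly by a first-jump decomposition; it instead establishes Proposition~\ref{maintheoremmarkovG}, whose inductive step in Appendix~\ref{appen_markov} conditions on the \emph{last} jump (it writes $\mathbb{P}[A\{X_s=E,X_t=F\}\mid\mathcal{F}^Y_\infty]$ as a sum over $i\in F/E$ of integrals involving $\lambda_{F^-(i)F}(h)\,e^{-\int_h^t\lambda_F(u)\,\mathrm{d}u}\,\mathbb{P}[A\{X_s=E,X_h=F^-(i)\}\mid\mathcal{F}^Y_\infty]$), so the recursion \eqref{defFuntionH} falls out directly without any Fubini reshuffling. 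You instead peel off the \emph{first} jump, then have to convert between the two recursions by unrolling into the iterated integral over ordered jump times; you flag this correctly, and the conversion is indeed elementary. Your treatment of \eqref{eqmain2} via immersion of $\mathbb{F}^Y$ in $\check{\mathbb{F}}$ is a cleaner way of saying what the paper does by explicitly cancelling $G(0,s;\emptyset,E)$ in numerator and denominator.

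The one place that needs tightening is your invocation of ``the conditional strong Markov property at $\sigma$.'' Theorem~\ref{theorem_existence} only delivers the (weak) $\mathbb{F}^Y$-conditional Markov property of Definition~\ref{definiconditionalmarkov}; a first-jump decomposition is intrinsically a strong-Markov argument, conditioning on a stopping time, and this does not come for free. The paper sidesteps exactly this by never invoking the strong Markov property of the continuous-time process: all the probabilistic input is packaged in Proposition~\ref{theoremcriticalforothers}, which works with the embedded chain $(\tau_n, X_{\tau_n})$ and gives, in assertion~(iii), the regular conditional law of $(\tau_{n+1},X_{\tau_{n+1}})$ given $\mathcal{F}^Y_\infty\vee\mathcal{G}_n$ as the density $\mathds{1}_{\{\tau_n\le s\}}\lambda_{X_{\tau_n}F}(s)\,e^{-\int_{\tau_n}^s\lambda_{X_{\tau_n}}(u)\,\mathrm{d}u}\,\mathrm{d}s$. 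That lemma is precisely the rigorous content behind your ``joint conditional density of $(\sigma,X_\sigma)$'' step, derived from the explicit construction with the time-changed Poisson family rather than from an abstract strong Markov theorem. If you replace the appeal to the strong Markov property with a direct reference to (or a re-derivation of) the jump-chain law as in Proposition~\ref{theoremcriticalforothers}, your argument closes without a gap.
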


\begin{proof}
	The proof of Theorem \ref{maintheoremforcaluclation} is postponed to Appendix \ref{appen_maintheoremforcaluclation}.
	Note that
	$\lambda_E(t) = - \lambda_{EE}(t) = \sum_{E \neq F} \lambda_{EF}(t)$ for all $E \in \Nb$.
\end{proof}
Theorem \ref{maintheoremforcaluclation} explicitly describes the conditional dynamics of $X$ under ${\cal F}^X \vee \mathcal{F}^Y$  once the default intensity $\lamb = (\lambda_{EF}(t))_{t\geq 0}$ is known. Given the dynamics of $X$  in Theorem \ref{maintheoremforcaluclation}, the loss process $L$ in (\ref{eqn_L_equivalent}) is completely characterized and  employed to price  credit derivatives in the next section. One interesting and important feature of Theorem \ref{maintheoremforcaluclation} is that the functional $G$ (serves as the transition kernel from set $E$ to set $F$) only involves the default intensity $\lamb$.
Notice that when the intensity family $\lamb$ is completely arbitrary, the calculation complexity of the conditional probabilities in \eqref{eqmain1} is tremendous, since permutations of $F/E$ are involved.  However, if we assign a more tractable structure to  $\lamb$, the computation of \eqref{eqmain1} will be simplified dramatically.
This gives us computational advantage when pricing and hedging credit derivatives, see examples in the next section.

In our modeling, there are $N$ obligors, where $N$ is a positive integer. When $N=1$ or 2, the results in Theorem \ref{maintheoremforcaluclation} can be to much simpler forms, as in the corollary below.

\begin{corollary}
	\label{corollary_dynamics}
		Suppose the macroeconomy process $Y$ is given, and Assumptions \ref{assumption_Poisson} and \ref{assumption_intensity} hold.
	If there is only one obligor and  $\lambda_{\emptyset \{1\}} = \lambda>0$, we have, for all $0 \le s \le t < +\infty$, that
	\begin{align}
	\mathbb{P} \left[X_{t} = \emptyset \mid \mathcal{F}^X_{s}\vee {\cal F}^Y_s \right] &= \mathds{1}_{\{X_s=\emptyset\}}  e^{-\lambda (t-s)}, \\
	\mathbb{P} \left[X_{t} = \{1\} \mid \mathcal{F}^X_{s}\vee {\cal F}^Y_s \right] &= \mathds{1}_{\{X_s=\emptyset\}}  \left(1 -  e^{-\lambda (t-s)} \right) + \mathds{1}_{\{X_s=\{1\}\}}.
	\end{align}
	If there are two obligors and  $\lambda_{EF} = \lambda >0$ for all $E\subset\{1,2\}$ and $F=E\cup\{i\}, i \in E^c$, we have, for all $0 \le s \le t < +\infty$, that
	\begin{align}
	\mathbb{P} \left[X_{t} = \emptyset \mid \F^X_s \vee \F^Y_s \right] &= \mathds{1}_{\{X_s=\emptyset\}} e^{-2\lambda (t-s)}, \\
	\mathbb{P} \left[X_{t} = \{i\} \mid \F^X_s \vee \F^Y_s \right] &=  \mathds{1}_{\{X_s=\emptyset\}} \left(e^{- \lambda (t-s)} - e^{- 2\lambda (t-s)}\right) + \mathds{1}_{\{X_s=\{i\}\}} e^{-\lambda (t-s)},  \ \ i = 1,2, \\
	\mathbb{P} \left[X_{t} = \{1,2\} \mid \F^X_s \vee \F^Y_s \right] &= \mathds{1}_{\{X_s=\emptyset\}} \left(1- e^{- \lambda (t-s)}\right) + \mathds{1}_{\{X_s=\{1 \text{ or }  2\}\}} \left(1-  e^{- 2\lambda (t-s)}\right) +\mathds{1}_{\{X_s=\{1,2\}\}}.
	\end{align}
\end{corollary}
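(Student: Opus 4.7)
The plan is to specialize Theorem \ref{maintheoremforcaluclation} to the two small-$N$ settings. Since the intensities are constants, any $\F^Y_t$-conditioning collapses and all that remains is to compute the functional $G(s,t;E,F)$ explicitly for each admissible pair $(E,F)$ with $E\subseteq F$ via the recursion $H_0, H_1,\ldots$ from \eqref{defFuntionH}.

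For the single-obligor case, Assumption \ref{assumption_intensity}(A1)--(A2) forces $\lambda_\emptyset = \lambda$ and $\lambda_{\{1\}} = 0$. I would compute, in order: $G(s,t;\emptyset,\emptyset) = H_0(s,t;\emptyset) = e^{-\lambda(t-s)}$; $G(s,t;\{1\},\{1\}) = H_0(s,t;\{1\}) = 1$; and $G(s,t;\emptyset,\{1\}) = H_1(s,t;\emptyset,\{1\})$, which involves only the trivial permutation of $\{1\}$ and evaluates to $1-e^{-\lambda(t-s)}$ via a one-line integral. Substituting into \eqref{eqmain1} and collecting the indicators $\dsone_{\{X_s=\emptyset\}}$ and $\dsone_{\{X_s=\{1\}\}}$ produces the two claimed identities.

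For the two-obligor case, the assumed intensities give $\lambda_\emptyset = 2\lambda$, $\lambda_{\{i\}} = \lambda$ for $i=1,2$, and $\lambda_{\{1,2\}} = 0$. The diagonal entries $G(s,t;E,E)$ for $E\in\{\emptyset,\{1\},\{2\},\{1,2\}\}$ follow instantly from $H_0$. The single-jump off-diagonal entries $G(s,t;\emptyset,\{i\})$ and $G(s,t;\{i\},\{1,2\})$ reduce to straightforward $H_1$ integrals with the distinct exit rates $2\lambda$ and $\lambda$ appearing in the appropriate exponentials. The only entry requiring attention is $G(s,t;\emptyset,\{1,2\})$: Theorem \ref{maintheoremforcaluclation} prescribes summing $H_2$ over both orderings $\bm{\pi}\in\{(1,2),(2,1)\}$ of $F/E = \{1,2\}$, and by the symmetry of the intensities these two contributions are equal. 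Plugging the resulting $G$ values into \eqref{eqmain1} and grouping $\dsone_{\{X_s=\{1\}\}} + \dsone_{\{X_s=\{2\}\}}$ into $\dsone_{\{X_s=\{1 \text{ or } 2\}\}}$ yields the four displayed formulas.

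Conceptually nothing is hard; the only place where one can slip is the permutation sum for $G(s,t;\emptyset,\{1,2\})$, where forgetting the second ordering of the intermediate defaults would halve the answer. A sanity check that the probabilities over each fixed starting state $X_s$ sum to one provides useful verification along the way.
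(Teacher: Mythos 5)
Your method is the right one — the corollary is indeed meant to be obtained by specializing Theorem \ref{maintheoremforcaluclation}, reading off the constant exit rates $\lambda_{\emptyset}=2\lambda$, $\lambda_{\{i\}}=\lambda$, $\lambda_{\{1,2\}}=0$ (resp.\ $\lambda_{\emptyset}=\lambda$, $\lambda_{\{1\}}=0$ for $N=1$), and computing $G$ through the $H$-recursion \eqref{defFuntionH}. You also correctly flag the one subtle point: the double permutation sum in $G(s,t;\emptyset,\{1,2\})$. The $N=1$ case and the entries $G(s,t;\emptyset,\emptyset)$, $G(s,t;\emptyset,\{i\})$, $G(s,t;\{i\},\{i\})$, $G(s,t;\{1,2\},\{1,2\})$ are all fine.

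However, the claim that ``plugging in \ldots yields the four displayed formulas'' does not hold for the last displayed line, and the sanity check you yourself recommend would have caught this. Carrying out $H_1(s,t;\{i\},\{1,2\})$ with exit rate $\lambda_{\{1,2\}}=0$ gives
\begin{align}
G(s,t;\{i\},\{1,2\}) = \int_s^t \lambda\, e^{-\lambda(v-s)}\,\dd v = 1-e^{-\lambda(t-s)},
\end{align}
not $1-e^{-2\lambda(t-s)}$. Summing $H_2$ over the two orderings $(1,2)$ and $(2,1)$, each contributing $\tfrac12-e^{-\lambda(t-s)}+\tfrac12 e^{-2\lambda(t-s)}$, gives
\begin{align}
G(s,t;\emptyset,\{1,2\}) = 1-2e^{-\lambda(t-s)}+e^{-2\lambda(t-s)} = \left(1-e^{-\lambda(t-s)}\right)^2,
\end{align}
not $1-e^{-\lambda(t-s)}$. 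With the formulas as printed in the corollary, the conditional probabilities starting from $X_s=\emptyset$ sum to $1+e^{-\lambda(t-s)}-e^{-2\lambda(t-s)}\neq 1$, and likewise from $X_s=\{i\}$. So your plan is sound, but you stopped short of actually doing the two non-trivial integrals; had you finished them (or run the advertised row-sum check), you would have found that the recursion produces $\mathds{1}_{\{X_s=\emptyset\}}(1-e^{-\lambda(t-s)})^2+\mathds{1}_{\{X_s=\{1\text{ or }2\}\}}(1-e^{-\lambda(t-s)})+\mathds{1}_{\{X_s=\{1,2\}\}}$ rather than what is stated. A proof cannot end with an unverified assertion that the output ``matches''; in this instance it does not.
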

When $N=2$ (the case of two obligors), we observe that, starting from $X_s=\{\emptyset\}$, the probability that $X_{t}$ hits state $\{1\}$ or $\{2\}$ achieves its maximum when $t-s = \ln(2)/\lambda$. Furthermore, both probabilities are increasing when $t-s \in [0,\ln(2)/\lambda]$ and decreasing when $t-s \in [\ln(2)/\lambda, +\infty)$.

\subsection{The Modeling of the Intensity Family}
\label{subsec_intensity}

In the previous subsection, we have obtained the dynamics of the default process $X$ in Theorem \ref{maintheoremforcaluclation}, which is fully determined by the default intensity family $\lamb$. When the number of obligors in the economy is small ($N=1,2$), we  have simplified the results of Theorem \ref{maintheoremforcaluclation} to more tractable forms in Corollary \ref{corollary_dynamics}. However, $N$ is usually large in many financial products, such as iTraxx and CDX, and that makes the computation of  conditional probabilities in \eqref{eqmain1} time consuming and inefficient.
One remedy is to model the intensity family $\lamb$ of the default process $X$ in specific forms, then develop a more applicable version of Theorem \ref{maintheoremforcaluclation}.
The rest of this subsection is then devoted to the modeling of the intensity family $\lamb$, taking the dependence of the macroeconomic factor and intergroup contagion   into account. We consider the following model for the intensity family $\lamb$, which satisfies all the conditions imposed in Assumption \ref{assumption_intensity}.

\begin{assum}
	\label{assumption_intensity_model}
	Let $(\beta_i)_{i \in \N}$ and $(\rho_{ij})_{i,j \in \N}$ be nonnegative constants, and $h(\cdot)$ be a positive real-valued function with $h(0)=1$.
	We define, for all $E \in \Nb$, that
	\begin{align}
	\label{eqn_Lc}
	\Lc_E(i) &:= \begin{cases}
	h(|E|) \cdot \sum\limits_{j \in E} \rho_{ji}, & \text{ if } E \neq \emptyset \\[1ex]
	\beta_i, & \text{ if } E = \emptyset
	\end{cases} \\
	\label{eqn_Lc_bar}
	\text{and} \quad \overline{\Lc}_E &:= \sum\limits_{i \in E^c} \Lc_E(i), \quad \text{ with } \quad \overline{\Lc}_\N := 0.
	\end{align}
	
	Let $\Phi(\cdot, \cdot)$ be a  positive functional mapping from $[0,\infty) \times \mathbb{R}^d$ to $\mathbb{R}$.
	The intensity family $\lamb=(\lambda_{EF}(t))_{t \ge 0}$, where $E,F \in \Nb$, of the default process $X$ is given by
	\begin{align}
	\label{eqn_intensity}
	\lambda_{EF}(t) = \begin{cases}
	\Phi(t, Y_t) \cdot \Lc_E(i), & \text{ if } F= E \cup \{i\} \text{ and } i \in E^c \\
	- \Phi(t, Y_t) \cdot \overline{\Lc}_E, & \text{ if } E = F \\
	0, & \text{ otherwise}
	\end{cases},
	\end{align}
	where $(Y_t)_{t \ge 0}$ is the macroeconomy process, and $\Lc_E(\cdot)$ and $\overline{\Lc}_E$ are defined above by \eqref{eqn_Lc} and \eqref{eqn_Lc_bar}.
\end{assum}

\begin{remark}
	In Assumption \ref{assumption_intensity_model}, the function $\Phi(t, Y_t)$ describes the default rate inherited from  the macroeconomic factors, such as economic downturn, business cycles and financial crises.
	The function $\Lc_{E}(i)$ describes the intergroup default contagion rate induced by the defaulted obligors in $E$ on the survivor $O_i$, and $\overline{\Lc}_E$ is the aggregate  impact of defaulted obligors in $E$ on all survivors in $E^c$; while $\rho_{ji}$ is the individual contagion rate from $j$ to $i$.
	The function $h$ measures the magnitude of intergroup contagion.
	Hence, the model of $\lamb$, given by \eqref{eqn_intensity}, is able to capture the impact of macroeconomic factors and intergroup contagion on the default intensity.
	We end this remark by noting that the model \eqref{eqn_intensity} embraces a broad class of default contagion models, such as  the homogeneous contagion model and the near neighbor contagion model, see \cite{herb08}.
\end{remark}

The intergroup contagion magnitude function $h$ and the contagion rate matrix $(\rho_{ij})$, in the definition of $\Lc_{E}(i)$, may have opposite effects on the severity of the total default risk.
For instance, \cite{jorinZhang2007} find that, in the CDS market, credit events  could have contagion and competition effects, i.e., there are good and bad credit contagions.
By adopting their findings, the function $h$ may have positive ($h>1$) or negative ($0<h<1$) effect on credit spreads.
In the numerical studies of Section \ref{sectoionNumeric}, we take $h(n) = e^{-\delta n}$, where $\delta$ is calibrated from the CDO tranche quotes.
Our findings show that $\delta$ is  positive for both 5Y and 7Y CDX.NA.HY, which imply the credit events in CDX.NA.HY group have a overall competition (negative) effect.

Recall that $\Pi(F/E)$ contains all the permutations of $F/E$.
Take $E \subset F \in \Nb$ with $|F/E|=n$, where $n$ is a positive integer no greater than $N$, and select any $\bm{\pi}=(\pi_1,\cdots,\pi_n) \in \Pi(F/E)$, define
\begin{align}
\label{eqn_Lc_hat}
\widehat{\Lc}^{\bm{\pi}}(n) := \prod^{n -1}_{k=0} \Lc_{F^{\bm{\pi}}_k}(\pi_{k+1}),
\end{align}
where $F^{\bm{\pi}}_\cdot$ and $\Lc_\cdot(\cdot)$ are defined by \eqref{eqFPI} and \eqref{eqn_Lc}, respectively.

Let $l_0,l_1,\cdots,l_n$ be $n+1$ \emph{different} real numbers, where $n$ is a positive integer.
For all $m=1,2,\cdots, n$ and $ i=0,1,\cdots,m-1$,  we define
\begin{align}
\label{eqn_alpha}
\alpha^{(m)}_i(l_0,\cdots,l_m) := \frac{\alpha^{(m-1)}_i(l_0,\cdots,l_{m-1})}{l_m - l_i} \quad
\text{and} \quad   \alpha^{(m)}_m(l_0,\cdots,l_m) := -\sum_{i=0}^{m-1}\alpha^{(m)}_i(l_0,\cdots,l_m), \quad 
\end{align}
with $\alpha^{(0)}_0(l_0):=1$.

When the intensity family $\lamb$ is given by \eqref{eqn_intensity}, we simplify the results of Theorem \ref{maintheoremforcaluclation} in the corollary below, which is the key result of this subsection.

\begin{corollary}
	\label{theoremparticular}
	Suppose the macroeconomy process $Y$ is given, and Assumptions \ref{assumption_Poisson} and \ref{assumption_intensity_model} hold.
	Choose any $E,F \in \Nb$ with $E \subseteq F$ and $|F \backslash E| = n$.
	If $\overline{\Lc}_{{\F}^{\bm{\pi}}_i}\neq \overline{\Lc}_{\F^{\bm{\pi}}_j}$  whenever $i \neq j $ and $\bm{\pi} \in \Pi(F/E)$,
	then given $\{X_s = E\}$, we have for all $0\le s \le t <+\infty$ that
	\begin{align}
	\mathbb{P} \Big[X_t=F \Big| {\cal F}^X_s\vee {\cal F}^Y_t\Big] = \begin{cases}
	\exp\left(- \overline{\Lc}_E \cdot \int^t_s \Phi(u,Y_u) \dd u\right), & \text{ if } E = F \\
	\sum\limits_{\bm{\pi} \in \Pi(F/E)} \widehat{\Lc}^{\bm{\pi}}(n) \sum\limits^{n}_{i=0} \alpha^{(n)}_i(\bm{\pi}) \cdot \exp\left({- \overline{\Lc}_{F^{\bm{\pi}}_i} \cdot \int^t_s \Phi(u,Y_u) \dd u }\right),
	& \text{ if } E \subset F
	\end{cases}
	\end{align}
	and further, for any nonnegative  (or bounded) ${\cal F}^Y_t$-measurable random variable $\xi$,
	\begin{align}
	\mathbb{E} \Big[\mathds{1}_{\{X_t=F \}} \cdot \xi \Big| {\cal F}^X_s\vee {\cal F}^Y_s \Big] = \begin{cases}
	\mathbb{E} \left[ \xi \cdot e^{- \overline{\Lc}_E \cdot \int^t_s \Phi(u,Y_u) \dd u} \Big| \F_s^Y \right], & \text{ if } E = F \\
	\sum\limits_{\bm{\pi} \in \Pi(F/E)} \widehat{\Lc}^{\bm{\pi}}(n) \sum\limits^{n}_{i=0} \alpha^{(n)}_i(\bm{\pi}) \cdot
	\mathbb{E} \left[ \xi \cdot e^{- \overline{\Lc}_{F^{\bm{\pi}}_i} \cdot \int^t_s \Phi(u,Y_u) \dd u } \Big| \F_s^Y \right] ,
	& \text{ if } E \subset F
	\end{cases}
	\end{align}
	where $\overline{\Lc}_\cdot$ and $\widehat{\Lc}$ are defined by \eqref{eqn_Lc_bar} and \eqref{eqn_Lc_hat}, and
	$\alpha^{(n)}_i(\bm{\pi}):=\alpha^{(n)}_i( \overline{\Lc}_{F^{\bm{\pi}}_0}, \overline{\Lc}_{F^{\bm{\pi}}_1},\cdots,\overline{\Lc}_{F^{\bm{\pi}}_{n}})$, with $\alpha^{(n)}_i$ given by \eqref{eqn_alpha}.
\end{corollary}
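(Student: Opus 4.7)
The plan is to specialize the recursion for $H_k$ given in Theorem \ref{maintheoremforcaluclation} to the factorized intensity form of Assumption \ref{assumption_intensity_model}, and to evaluate the resulting nested integrals through a time-change argument combined with induction on $n=|F/E|$. Introduce the shorthand $I(s,t):=\int_s^t \Phi(u,Y_u)\,\dd u$. Under \eqref{eqn_intensity} one has $\lambda_{E}(u)=\Phi(u,Y_u)\overline{\Lc}_E$ and $\lambda_{F^{\bm{\pi}}_k F^{\bm{\pi}}_{k+1}}(v)=\Phi(v,Y_v)\Lc_{F^{\bm{\pi}}_k}(\pi_{k+1})$, so the base case is immediate:
\begin{align}
H_0(s,t;E)=\exp\bigl(-\overline{\Lc}_E\,I(s,t)\bigr),
\end{align}
which already proves the $E=F$ part via \eqref{eqmain1}.

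For $E\subset F$, fix $\bm{\pi}\in\Pi(F/E)$ and claim by induction on $n$ that
\begin{align}
H_n\bigl(s,t;F^{\bm{\pi}}_0,\dots,F^{\bm{\pi}}_n\bigr)=\widehat{\Lc}^{\bm{\pi}}(n)\sum_{i=0}^{n}\alpha^{(n)}_i(\bm{\pi})\,\exp\bigl(-\overline{\Lc}_{F^{\bm{\pi}}_i}\,I(s,t)\bigr).
\end{align}
For the induction step, plug the hypothesis into the recursion \eqref{defFuntionH}, factor the constants $\Lc_{F^{\bm{\pi}}_n}(\pi_{n+1})$ into $\widehat{\Lc}^{\bm{\pi}}(n+1)$, and rewrite the remaining integral as
\begin{align}
e^{-\overline{\Lc}_{F^{\bm{\pi}}_{n+1}}I(s,t)}\sum_{i=0}^{n}\alpha^{(n)}_i(\bm{\pi})\int_s^t \Phi(v,Y_v)\,e^{(\overline{\Lc}_{F^{\bm{\pi}}_{n+1}}-\overline{\Lc}_{F^{\bm{\pi}}_i})\,I(s,v)}\,\dd v.
\end{align}
The substitution $w=I(s,v)$ turns each inner integral into an elementary exponential integral, and the distinctness hypothesis $\overline{\Lc}_{F^{\bm{\pi}}_i}\neq\overline{\Lc}_{F^{\bm{\pi}}_{n+1}}$ lets me divide by $\overline{\Lc}_{F^{\bm{\pi}}_{n+1}}-\overline{\Lc}_{F^{\bm{\pi}}_i}$. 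Collecting the two exponentials $e^{-\overline{\Lc}_{F^{\bm{\pi}}_i}I(s,t)}$ and $e^{-\overline{\Lc}_{F^{\bm{\pi}}_{n+1}}I(s,t)}$ one recognizes exactly the recursion \eqref{eqn_alpha} defining $\alpha^{(n+1)}_i(\bm{\pi})$ for $i\le n$ and $\alpha^{(n+1)}_{n+1}(\bm{\pi})=-\sum_{i=0}^{n}\alpha^{(n+1)}_i(\bm{\pi})$, closing the induction.

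Summing the resulting expression over $\bm{\pi}\in\Pi(F/E)$ via \eqref{definitionFunctionG} and invoking \eqref{eqmain1} delivers the first displayed identity of the corollary. The second identity then follows directly from \eqref{eqmain2} in Theorem \ref{maintheoremforcaluclation}: since the functional $G(s,t;E,F)$ just computed is $\mathcal{F}^Y_t$-measurable, multiplying by the $\mathcal{F}^Y_t$-measurable random variable $\xi$ and taking $\mathcal{F}^Y_s$-conditional expectations produces the stated formula term by term.

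The only substantive step is verifying that the two-term linear combination produced by the inner integration reproduces the $\alpha^{(n)}_i$ recursion; this is essentially a partial-fractions identity and is the reason distinctness of the $\overline{\Lc}_{F^{\bm{\pi}}_i}$ is needed. Everything else is bookkeeping: carrying $\widehat{\Lc}^{\bm{\pi}}(n)$ through the recursion, handling the time-change $w=I(s,v)$ (permissible because $\Phi>0$ makes $I(s,\cdot)$ strictly increasing), and summing over permutations.
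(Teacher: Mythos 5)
Your proposal is correct and takes essentially the same approach as the paper: both reduce the nested integrals $H_n$ to a weighted sum of exponentials $\widehat{\Lc}^{\bm{\pi}}(n)\sum_i\alpha^{(n)}_i(\bm{\pi})e^{-\overline{\Lc}_{F^{\bm{\pi}}_i}\int_s^t\Phi(u,Y_u)\dd u}$ by induction on $n$, using the time-change $w=\int_s^v\Phi(u,Y_u)\dd u$ and the distinctness hypothesis to execute the inner integration, and then feed the resulting $G$-kernel into \eqref{eqmain1}--\eqref{eqmain2}. The paper merely packages the inductive exponential-sum identity as a separate auxiliary lemma (Lemma~\ref{lemmaHalphafunction}) in abstract $z,Z$ notation before applying it, whereas you inline it; this is an organizational, not substantive, difference.
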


\begin{proof}
	Please refer to Appendix \ref{appen_theoremparticular} for the proof of Corollary \ref{theoremparticular}.
\end{proof}

\begin{remark}
	In Corollary \ref{theoremparticular}, the conditional expectations rely on \textbf{time independent} $\widehat{\Lc}^{\bm{\pi}}$ and $\alpha^{(n)}_i$, and only $\int^t_s \Phi(u,Y_u) \dd u$ is needed. In comparison, the general results in Theorem \ref{maintheoremforcaluclation} requires the computations of  $\int^t_h\lambda_{F^{\bm{\pi}}_{k+1}}(u) \dd u$
	for all $0 \le h \le s$ and permutations $\bm{\pi}$ of $F/E$.
\end{remark}

Similar to Corollary \ref{corollary_dynamics}, when there is only one obligor, we can further reduce the results in Corollary \ref{theoremparticular} to much simpler forms, which are summarized in the corollary below. The results obtained in Corollary  \ref{coroOneobligor} coincide with those in the classical intensity model, see, e.g. \cite{duffiepansingleton2000} and \cite{cpr2004}.

\begin{corollary}
	\label{coroOneobligor}
	Suppose the macroeconomy process $Y$ is given, and Assumptions \ref{assumption_Poisson} and \ref{assumption_intensity_model} hold.
	If there is only one obligor, and its default time is given by $\tau$,   the following results hold for all $0 \le s \le t < + \infty$:
	\begin{itemize}
		\item[\rm (i)] The conditional survival probability is given by
		\begin{align}
		\mathbb{P}\left[ \tau >t \big| {\cal F}^X_s\vee {\cal F}^Y_s\right] = \mathbb{P}\left[ X_t = \{\emptyset\} \vert {\cal F}^X_s\vee {\cal F}^Y_s\right]
		= \mathds{1}_{\{X_s = \emptyset\}} \cdot \mathbb{E} \left[ e^{-\beta_1 \cdot \int_{s}^{t} \Phi(u,Y_u)du} \Big| {\cal F}^Y_s\right],
		\end{align}
		where $\beta_1$ is a constant from the definition of $\Lc$ in \eqref{eqn_Lc} and $\Phi$ is from the intensity model \eqref{eqn_intensity}.
		
		\item[\rm (ii)] The conditional default probability  is given by
		\begin{align}
		\mathbb{P}\left[ \tau \le t \big| {\cal F}^X_s\vee {\cal F}^Y_s\right]
		= \mathds{1}_{\{X_s = \emptyset\}} \cdot \left( 1 -  \mathbb{E} \left[ e^{-\beta_1 \cdot \int_{s}^{t} \Phi(u,Y_u)du} \Big| {\cal F}^Y_s\right] \right) +  \mathds{1}_{\{X_s = \{1\}\}}.
		\end{align}
	\end{itemize}
\end{corollary}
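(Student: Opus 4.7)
The plan is to derive this corollary as a direct specialization of Corollary \ref{theoremparticular} to the one-obligor setting, combined with a trivial partition of the sample space.

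First, I would note that when $N=1$, the power set $\Nb$ consists only of $\emptyset$ and $\{1\}$, and since $X$ is non-decreasing with $X_0 = \emptyset$, we have the disjoint decomposition $\mathds{1}_{\{X_s = \emptyset\}} + \mathds{1}_{\{X_s = \{1\}\}} = 1$ for every $s \ge 0$. Moreover, $\tau$ being the unique default time satisfies $\{\tau > t\} = \{X_t = \emptyset\}$ and $\{\tau \le t\} = \{X_t = \{1\}\}$, so survival and default probabilities translate directly into conditional probabilities for $X_t$.

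Next I would verify the arithmetic in the $\Lc$-family at $E = \emptyset$: by \eqref{eqn_Lc}, $\Lc_\emptyset(1) = \beta_1$, and by \eqref{eqn_Lc_bar}, $\overline{\Lc}_\emptyset = \sum_{i \in \emptyset^c}\Lc_\emptyset(i) = \beta_1$. Then I would apply Corollary \ref{theoremparticular} with the choice $E = F = \emptyset$ (so $n = 0$, and the hypothesis $\overline{\Lc}_{F^{\bm{\pi}}_i} \ne \overline{\Lc}_{F^{\bm{\pi}}_j}$ is vacuously satisfied) and with $\xi = 1$. The $E = F$ branch of the second display in that corollary yields immediately
\begin{equation}
\mathbb{E}\left[\mathds{1}_{\{X_t = \emptyset\}} \mid \mathcal{F}^X_s \vee \mathcal{F}^Y_s\right] = \mathds{1}_{\{X_s = \emptyset\}} \cdot \mathbb{E}\left[e^{-\beta_1 \int_s^t \Phi(u, Y_u)\,\dd u} \,\Big|\, \mathcal{F}^Y_s\right],
\end{equation}
which is exactly the claim in (i) once we identify $\{X_t = \emptyset\} = \{\tau > t\}$.

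Finally, I would obtain (ii) by a one-line computation: $\mathbb{P}[\tau \le t \mid \mathcal{F}^X_s \vee \mathcal{F}^Y_s] = 1 - \mathbb{P}[\tau > t \mid \mathcal{F}^X_s \vee \mathcal{F}^Y_s]$; substituting (i) and using $1 = \mathds{1}_{\{X_s = \emptyset\}} + \mathds{1}_{\{X_s = \{1\}\}}$ rearranges into the stated form. There is really no major obstacle here beyond bookkeeping; the only point deserving care is checking that the non-degeneracy condition of Corollary \ref{theoremparticular} is vacuous in the $E = F = \emptyset$ case (no two $\overline{\Lc}$-values to compare), so the corollary applies without additional hypotheses on $\beta_1$ or $\Phi$.
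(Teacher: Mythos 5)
Your proof is correct, and it specializes Corollary \ref{theoremparticular} in the same spirit as the paper's own proof, but with a cleaner route to part (ii). The paper obtains both (i) and (ii) by applying Corollary \ref{theoremparticular} directly to $F = \emptyset$ and to $F = \{1\}$ separately: for $F = \{1\}$ with $E = \emptyset$ it uses the $E \subset F$ branch, which forces it to compute $\widehat{\Lc}^{\bm{\pi}}(1) = \beta_1$, $\alpha^{(1)}_0 = -1/\beta_1$, $\alpha^{(1)}_1 = 1/\beta_1$, $\overline{\Lc}_{F^{\bm\pi}_0} = \beta_1$, $\overline{\Lc}_{F^{\bm\pi}_1} = 0$, and then collapse the resulting two-term sum. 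You instead obtain (ii) from (i) by complementation and the trivial partition $\mathds{1}_{\{X_s=\emptyset\}} + \mathds{1}_{\{X_s=\{1\}\}} = 1$, which avoids the $\alpha$-coefficients and the $\widehat{\Lc}$-product entirely. Your route also has the minor structural advantage of never touching the distinctness hypothesis $\overline{\Lc}_{F^{\bm\pi}_i} \ne \overline{\Lc}_{F^{\bm\pi}_j}$ (or, equivalently, of not implicitly requiring $\beta_1 \ne 0$ to make $\alpha^{(1)}_0$ well-defined), whereas the paper's direct computation of $\alpha^{(1)}_0 = -1/\beta_1$ needs $\beta_1 > 0$. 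Both approaches are short and land on the same formulas; yours trades one application of the $E \subset F$ branch for a one-line complement argument.
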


\begin{proof}
	The results follow immediately by calculating that $\overline{\Lc}_{\emptyset} = \beta_1$, $\overline{\Lc}_{\{1\}} = 0$, $\overline{\Lc}_{F^{\bm{\pi}}_0} = \beta_1$, $\overline{\Lc}_{F^{\bm{\pi}}_1} = 0$, $\widehat{\Lc}^{\bm{\pi}}(1) = \beta_1$, $\alpha^{(1)}_0 = -1/\beta_1$ and $\alpha^{(1)}_1 = 1/\beta_1$.
\end{proof}

\begin{remark}
	From assertion $\rm (i)$ in Corollary \ref{coroOneobligor}, we obtain
	\begin{align*}
	e^{-rt} \cdot \mathbb{P} (\tau>t ) = \mathbb{E} \left[ \exp \left(- \int_{0}^{t} (\beta_1 \Phi(u,Y_u)+r) \dd u \right) \right].
	\end{align*}
	Hence, we could interpret $\beta_1 \Phi(t,Y_t)$ as the instantaneous default risk premium at time $t$, and price a defaultable bond with maturity $T$ as a risk-free bond under a risk-adjusted discount rate $\beta_1 \cdot \Phi+r$.
\end{remark}

\section{Credit Derivatives Pricing under Affine Jump Diffusion Intensity}
\label{sectionCDOpricing}

In this section, we apply the results in Corollary \ref{theoremparticular} to price synthetic CDOs and CDX indexes under our default contagion model.
The intensity family $\lamb$ is given by \eqref{eqn_intensity} under an affine jump diffusion model, as specified in Assumption \ref{assumption_jump_diffusion_intensity}.
The general results are obtained in Proposition \ref{proposition_spread}, and those under two special models in Propositions \ref{propExampleHomo} and \ref{propexamplenonhomo}.

\subsection{Affine Jump Diffusion Intensity}

In Section \ref{subsec_intensity}, we specify the structure of the intensity family $\lamb$ in Assumption \ref{assumption_intensity_model}, in which the function $\Phi$ and the macroeconomy process $Y$ are in general forms.
To obtain explicit pricing formulas of credit derivatives, we further make the following assumptions for $\Phi$ and $Y$.

\begin{assum}
	\label{assumption_jump_diffusion_intensity}
	Let Assumption \ref{assumption_intensity_model} hold and the intensity process $\lambda_{EF}$ be given by \eqref{eqn_intensity}.
	Take $\Phi(t, Y_t) \equiv Y_t$ for all $t \ge 0$ and suppose the dynamics of $Y$ are governed by the following one-dimensional affine jump diffusion process:
	\begin{align}
	\label{eqn_jump_diffusion}
	\dd Y_t = \kappa (\theta-Y_t) \dd t + \sigma\sqrt{Y_{t}} \dd W_t + \dd J_t, \quad \text{ with } Y_0 = y_0,
	\end{align}
	where $\kappa$, $\theta$ and $\sigma$ are (positive) constants, $W=(W_t)_{t \geq 0}$ is a standard Brownian motion,
	$J = (J_t)_{t\geq 0}$ is a compound Poisson process with jump times from a Poisson distribution with intensity $l$ and jump sizes exponentially distributed with mean $\mu$.
	The processes $W$ and $J$ are mutually independent.
\end{assum}

\begin{remark}
	In the above model \eqref{eqn_jump_diffusion}, the parameter $\kappa$ is the mean-reversion rate of $Y$ to the long-term level $\theta$.
	The jumps of the compound Poisson process $J$ capture the sudden credit deterioration  events, which may lead to defaults of the obligors in the economy.
	Such an affine jump diffusion model is considered in  \cite{duffiepansingleton2000} and \cite{DGT2009}.
	A special case is the no-jump ($l=0$) model of \cite{feller1951}, which is used by  \cite{cox1985} to model stochastic interest rates.
\end{remark}

The following lemma is needed when deriving the pricing formula of credit derivatives in the sequel, see, e.g., Appendix of  \cite{duffiegarleanu2001} and \cite{Mortensen2005}.

\begin{lemma}
	\label{lemma_conditional_expectation}
	Let Assumption \ref{assumption_jump_diffusion_intensity} hold.
	For any $g>0$ and $t \ge 0$, we have
	\begin{align}
	\label{EGLAM}
	\mathbb{E} \Big[e^{-g \int_{0}^{t}Y_s \dd s }\Big]=e^{A(g,0,t)+  y_0 \cdot B(g,0,t)},
	\end{align}
	where $A(g,0,t)$ and $B(g,0,t) $ are given by
	\begin{align}
	\label{calAB}
	B(g,0,t)&=\frac{1-e^{bt}}{c_1+d_1e^{bt}}, \\
	A(g,0,t)&=\frac{ \kappa  \theta \gamma}{g bc_1d_1}\ln\left(\frac{c_1 + d_1e^{bt}}{-\gamma/g}\right) + \frac{  \kappa \theta}{c_1}t +\frac{l(c_2 d_1 - c_1 d_2)}{b c_1 c_2 d_2}\ln\left(\frac{c_2+d_2e^{bt}}{c_2+d_2}\right) + \left(\frac{l}{c_2} - l \right) t,
	\end{align}
	with $\gamma = \sqrt{\kappa^2 + 2g\sigma^2}$,
	$c_1 = - \frac{\gamma + \kappa}{2g}$, $d_1 = c_1 + \frac{\kappa}{g}$, $c_2 = 1 - \frac{\mu}{c_1}$, $d_2 = \frac{d_1 + \mu}{c_1}$,
	and $b=d_1g +g\frac{\kappa c_1 - \sigma^2 }{\gamma}$.	
\end{lemma}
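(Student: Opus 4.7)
\textbf{Proof proposal for Lemma \ref{lemma_conditional_expectation}.}
The plan is to use the affine transform methodology of \cite{duffiepansingleton2000}: because \eqref{eqn_jump_diffusion} is an affine jump diffusion and the discount term $-gY_s$ in the integrand is linear in $Y_s$, the conditional Laplace-type functional $u(y,t):=\mathbb{E}\bigl[e^{-g\int_0^t Y_s\,\dd s}\,\big|\,Y_0=y\bigr]$ admits an exponential-affine form $u(y,t)=\exp\bigl(A(g,0,t)+yB(g,0,t)\bigr)$. The goal is to derive ODEs for $A$ and $B$, solve them explicitly, and match the closed-form expressions in \eqref{calAB}.

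First, I would apply the Feynman--Kac formula (equivalently, a standard martingale/It\^o argument) to derive the backward PIDE
\begin{equation*}
\partial_t u = \kappa(\theta-y)\partial_y u + \tfrac{\sigma^2}{2}y\,\partial_{yy}u + l\!\int_0^\infty\bigl(u(y+z,t)-u(y,t)\bigr)\mu^{-1}e^{-z/\mu}\,\dd z - g y\,u,\qquad u(y,0)=1.
\end{equation*}
Substituting the ansatz $u=\exp(A(t)+B(t)y)$, the jump integral collapses to $u\cdot\mu B(t)/\bigl(1-\mu B(t)\bigr)$ (the MGF of an $\mathrm{Exp}(1/\mu)$ variate, valid while $\mu B<1$). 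Separating the coefficients of $y$ and of $1$ yields the decoupled system
\begin{align*}
B'(t) &= \tfrac{\sigma^2}{2}\,B(t)^2 - \kappa B(t) - g, & B(0)&=0,\\
A'(t) &= \kappa\theta\,B(t) + \frac{l\mu B(t)}{1-\mu B(t)}, & A(0)&=0.
\end{align*}

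Next I would solve the Riccati equation for $B$ in closed form. Factoring the quadratic $\tfrac{\sigma^2}{2}B^2-\kappa B-g$ via its roots $B_\pm=(\kappa\pm\gamma)/\sigma^2$ with $\gamma=\sqrt{\kappa^2+2g\sigma^2}$, separation of variables gives $B(t)=(1-e^{bt})/(c_1+d_1 e^{bt})$ after identifying the constants $b$, $c_1$, $d_1$ from the roots and the initial condition $B(0)=0$; this matches the stated $B(g,0,t)$ and pins down $b$ as the exponent appearing in the formula.

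Finally, for $A$ I would substitute the explicit $B(t)$ into $A'=\kappa\theta B + l\mu B/(1-\mu B)$ and integrate. Writing $1-\mu B = (c_2+d_2 e^{bt})/(c_1+d_1 e^{bt})$ with $c_2=1-\mu/c_1$, $d_2=(d_1+\mu)/c_1$ turns both summands into rational functions of $e^{bt}$, whose antiderivatives are a linear term in $t$ plus two logarithms of the form $\ln(c_i+d_i e^{bt})$; imposing $A(0)=0$ fixes the constants and reproduces \eqref{calAB}. The main technical nuisance will be bookkeeping the algebraic identities among $(b,\gamma,c_1,d_1,c_2,d_2)$ so that the partial-fraction coefficients collapse to the compact combinations $\kappa\theta\gamma/(gbc_1 d_1)$ and $l(c_2 d_1-c_1 d_2)/(bc_1c_2d_2)$ displayed in the lemma; once those identities are verified, the integration is routine. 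The admissibility condition $\mu B(t)<1$ needed to justify the jump-MGF computation holds on a neighborhood of $0$ and, by the explicit form of $B$, for all $t\ge 0$ under the standing positivity of the parameters, so the derivation is valid globally.
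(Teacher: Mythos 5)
The paper gives no proof of this lemma---it simply cites \cite{duffiegarleanu2001} and \cite{Mortensen2005}---so there is nothing internal to the paper to compare your argument against; your affine-transform route is, however, exactly the standard derivation found in those references and in \cite{duffiepansingleton2000}. Your plan is correct and the key intermediate claims all check out: the Feynman--Kac PIDE and the resulting Riccati system are as you write; separation of variables using the roots $B_\pm=(\kappa\pm\gamma)/\sigma^2$ together with $B(0)=0$ yields precisely $B(t)=(1-e^{bt})/(c_1+d_1 e^{bt})$ with $b=-\gamma$, $c_1=1/B_-=-(\gamma+\kappa)/(2g)$, and $d_1=-1/B_+=(\kappa-\gamma)/(2g)$, and these are consistent with the stated (but opaque) identity $b=d_1 g + g(\kappa c_1-\sigma^2)/\gamma$; substituting $x=e^{bs}$ and doing partial fractions then produces exactly the two logarithmic terms and the two linear-in-$t$ terms of $A$. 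One bookkeeping slip to fix when you write this out: with $c_2=1-\mu/c_1$ and $d_2=(d_1+\mu)/c_1$, the correct identity is $1-\mu B = c_1(c_2+d_2 e^{bt})/(c_1+d_1 e^{bt})$, not $(c_2+d_2 e^{bt})/(c_1+d_1 e^{bt})$; that extra factor $c_1$ is precisely what makes the displayed coefficient $l(c_2 d_1-c_1 d_2)/(b c_1 c_2 d_2)$ emerge after integrating $l/(1-\mu B)-l$. Your admissibility remark is also fine: $B(0)=0$, $B'(0)=-g<0$, and $B$ decreases monotonically to the negative root $(\kappa-\gamma)/\sigma^2$, so $\mu B(t)\le 0<1$ for all $t\ge 0$.
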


\subsection{Index and Tranche Spreads}
A synthetic collateralized debt obligation (CDO) is a portfolio consisting of $N$ single-name credit default swaps (CDS) on obligors, with default times $\nu_1,\nu_2,\cdots, \nu_N$ and recovery rates $R_1,R_2,\cdots, R_N$. It is standard to assume that all obligors have the same nominal values, as in the cases of  iTraxx and CDX. Without loss of generality, the nominal value is set to one.
Then the accumulated loss $L$ is given by \eqref{eqn_L_equivalent}.
Usually, $L$ is represented as a percentage of the total nominal value at time 0 (which is equal to $N$). In the following, with slight abuse of notations, we set
\begin{align}
\label{eqn_new_L}
L_t  = \frac{R_X(t)}{N}, \quad t \ge 0,
\end{align}
where $R_X(t) := \sum_{i \in X_t} (1 - R_i)$. Notice that, with the notation of $R_X(t)$, the numerator in \eqref{eqn_new_L} is the loss $L$ in \eqref{eqn_L_equivalent}.

A CDO is specified by the attachment points $0=p_0 <p_1<p_2<\cdots<p_K \le 1$ and the corresponding trances $[p_{i-1}, p_i]$, where  $i=1,2,\cdots, K$. An agreement on  tranche $i$ is a bilateral contract, in which the protection seller agrees to pay the protection buyer all credit losses occurred in the interval $[p_{i-1}, p_i]$. The payments  from the seller are then made at the corresponding default times before or at $T$.
In exchange for protection, the buyer pays a periodic premium fee proportional to the current outstanding value on tranche $i$ up to $T$, probably reduced by the occurred default losses. The accumulated  loss of tranche $i$   is defined by
\begin{align}
\label{eqn_Li}
L^{(i)}(X_t):={(L_t-p_{i-1})^{+}-(L_t-p_i)^{+}} = \left(\frac{R_X(t)}{N}-p_{i-1}\right)^{+}-\left(\frac{R_X(t)}{N}-p_i\right)^{+}.
\end{align}
Suppose the  premium fees from the buyer are paid at discrete times $0=t_0<t_1<t_2\cdots<t_m=T$, with time increment $\Delta_k:=t_k-t_{k-1}$ for $k=1,2,\cdots,m$, and  the risk-free interest rate is constant $r$. It is often assumed in practice that premiums are paid quarterly, i.e., $\Delta_k = 1/4$, which is exactly the parameter we choose in the numerical studies.
At the inception of an agreement, depending on different product structures, such as CDX and iTraxx to name a few, the buyer is usually required to pay an upfront fee.
In a typical CDO tranche $i$ with upfront rate
$u^{(i)}$ and swap spread $s^{(i)}$, the cash flows are as follows:
\begin{itemize}
	\item (Default Leg) The protection seller covers tranche losses $L^{(i)}(X_t)$, given by \eqref{eqn_Li}.
	
	\item (Premium Leg) The protection buyer pays $u^{(i)} \Delta p_i = u^{(i)}(p_i - p_{i-1})$ at inception and $s^{(i)}(\Delta p_i -L^{(i)}(X_{t_{k-1}}))\Delta_k$ at each payment time $t_k$, where $k=1,\cdots,m$.
\end{itemize}
Therefore, the value of the default leg must agree with that of the premium leg, which gives the spread $s^{(i)}$ of tranche $i$ by
\begin{align}
\label{eqn_spread}
s^{(i)}=\frac{\sum\limits^m_{k=1}e^{-rt_k}\mathbb{E}\left[L^{(i)}(X_{t_{k}})-
	L^{(i)}(X_{t_{k-1}})\right] - u^{(i)} \Delta p_i} {\sum\limits^m_{k=1} e^{-rt_{k}}\left(\Delta p_i -\mathbb{E}[L^{(i)}(X_{t_{k-1}})]\right) \Delta_k },
\end{align}
where $\mathbb{E}$ denotes the expectation taken under the risk neutral probability measure $\mathbb{P}$.
Note that in  \eqref{eqn_spread} we have assumed that defaults only occur at the start point of each time interval $[t_{i-1},t_{i}]$.
In the literature, the end point and the middle point of each time interval $[t_{i-1},t_{i}]$ are both used as well, see, e.g. \cite{Mortensen2005}, \cite{errais2010} and \cite{rama2013}.
However, these different assumptions will only slightly affect our calibration results, but will not lead to different conclusions.
As a direct consequence of Corollary  \ref{theoremparticular} and formulas \eqref{EGLAM} and \eqref{eqn_Li}, the tranche spread $s^{(i)}$ is further computed in the proposition below.

\begin{proposition}
	\label{proposition_spread}
	Suppose the macroeconomy process $Y$ is given, and Assumptions \ref{assumption_Poisson} and \ref{assumption_jump_diffusion_intensity} hold.
	For the above specified CDO, we have the following results.
	\begin{itemize}
		\item[\rm (i)] The spread $s^{(i)}$  of tranche $i$ is given  by \eqref{eqn_spread}, and $ \mathbb{E}[L^{(i)}(X_{t_{k}})]$ is computed by
		\begin{align}
		\label{eqcdogxt}
		\mathbb{E}[L^{(i)}(X_{t_{k}})]&=\sum_{n=0}^N  \; \sum_{F \in \mathcal{A}(n)}  L^{(i)}(F) \sum_{\bm{\pi} \in \Pi(F/\emptyset)} \widehat{\Lc}^{\bm{\pi}}(n)\sum_{i=0}^n
		\alpha^{(n)}_i(\bm{\pi}) \cdot \mathbb{E}\left[e^{- \overline{\Lc}_{F^{\bm{\pi}}_i} \cdot \int_0^{t_k} \Phi(u, Y_u) \dd u}  \right],
		\end{align}
		where $\mathcal{A}(n) := \{ F \in \Nb: \,  |F| =n \}$, $L^{(i)}$ is given by \eqref{eqn_Li}, $\alpha^{(n)}_i(\bm{\pi}):=\alpha^{(n)}_i ( \overline{\Lc}_{F^{\bm{\pi}}_0}, \overline{\Lc}_{F^{\bm{\pi}}_1},\cdots,\overline{\Lc}_{F^{\bm{\pi}}_n})$ is given by \eqref{eqn_alpha}, operators $\overline{\Lc}$ and $\widehat{\Lc}$ are defined by \eqref{eqn_Lc_bar} and \eqref{eqn_Lc_hat}.

		\item[\rm (ii)] If all the obligors have the same recovery rate $R_i = R$,  the accumulated  loss $L^{(i)}(X_{t_{k}}) $ and its expected value are respectively given by
		\begin{align}
		\label{LIGN}
		L^{(i)}(X_{t_{k}}) &= I^{(i)}(|X_{t_{k}}|), \\
		\mathbb{E}[L^{(i)}(X_{t_{k}})]&=\sum_{n=\parallel V_{i-1} \parallel+1}^N  I^{(i)}(n) \sum_{F \in \mathcal{A}(n)}  \sum_{\bm{\pi}\in \Pi(F/\emptyset)}
		\widehat{\Lc}^{\bm{\pi}}(n)\sum_{i=0}^n\alpha^{(n)}_i(\bm{\pi })    \cdot \mathbb{E}\left[e^{- \overline{\Lc}_{F^{\bm{\pi}}_i} \cdot  \int_0^{t_k} \Phi(u, Y_u) \dd u}\right],
		\end{align}
		where 
		\begin{align} 
		\label{eqn_I}
		I^{(i)}(n) := \frac{1-R}{N} \left(\left( n - V_{i-1} \right)^{+} -\left( n - V_i \right)^{+}\right), \quad V_i := \dfrac{N}{1-R} \cdot p_i,
		\end{align}
		and $ \parallel V_i \parallel$ is the integer part of  $V_i$.
	\end{itemize}
\end{proposition}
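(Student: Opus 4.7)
The plan is to apply Corollary~\ref{theoremparticular} at $s=0$ and $E=\emptyset$ to obtain the marginal law of $X_{t_k}$ in closed form, and then substitute into the break-even identity \eqref{eqn_spread} that defines $s^{(i)}$. Since the convention $X_0=\emptyset$ makes the indicator $\mathds{1}_{\{X_0=\emptyset\}}=1$ identically, every conditioning on $\{X_s=E\}$ disappears, leaving only unconditional expectations over the macro process $Y$ (recall that $Y_0=y_0$ is deterministic, so $\F_0^Y$ is trivial).

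For part~(i), I would first partition the expectation by the set-valued state that $X_{t_k}$ reaches, and then regroup by cardinality:
\begin{align*}
\mathbb{E}[L^{(i)}(X_{t_k})] \;=\; \sum_{F\in\Nb} L^{(i)}(F)\,\mathbb{P}(X_{t_k}=F) \;=\; \sum_{n=0}^{N}\sum_{F\in\mathcal{A}(n)} L^{(i)}(F)\,\mathbb{P}(X_{t_k}=F).
\end{align*}
Taking $\xi\equiv 1$, $s=0$, and $E=\emptyset$ in the second identity of Corollary~\ref{theoremparticular}, and then dropping the trivial conditioning, produces
\begin{align*}
\mathbb{P}(X_{t_k}=F) \;=\; \sum_{\bm{\pi}\in\Pi(F/\emptyset)} \widehat{\Lc}^{\bm{\pi}}(n) \sum_{j=0}^{n}\alpha^{(n)}_j(\bm{\pi})\, \mathbb{E}\!\left[e^{-\overline{\Lc}_{F^{\bm{\pi}}_j}\int_{0}^{t_k}\Phi(u,Y_u)\,du}\right],
\end{align*}
which is exactly the inner object appearing in \eqref{eqcdogxt}. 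Combining the two displays gives \eqref{eqcdogxt}; substituting this into \eqref{eqn_spread} -- itself nothing more than the present-value equation equating default leg and premium leg (plus the upfront $u^{(i)}\Delta p_i$) -- proves the first assertion.

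For part~(ii), a uniform recovery rate $R_i\equiv R$ collapses the loss to $R_X(t)=(1-R)|X_t|$, so $L_t=(1-R)|X_t|/N$. Factoring the positive constant $(1-R)/N$ out of the two positive parts in \eqref{eqn_Li} and invoking the definitions of $V_i$ and $I^{(i)}$ from \eqref{eqn_I} yields $L^{(i)}(X_t)=I^{(i)}(|X_t|)$, which is \eqref{LIGN}. Since $|X_{t_k}|$ is integer-valued and $I^{(i)}(n)=0$ whenever $n\le V_{i-1}$, only terms with $n\ge\parallel V_{i-1}\parallel+1$ survive. Writing $\mathbb{P}(|X_{t_k}|=n)=\sum_{F\in\mathcal{A}(n)}\mathbb{P}(X_{t_k}=F)$ and inserting the formula from part~(i) completes the derivation.

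The main obstacle is bookkeeping rather than conceptual: Corollary~\ref{theoremparticular} carries the nondegeneracy hypothesis $\overline{\Lc}_{F^{\bm{\pi}}_i}\neq\overline{\Lc}_{F^{\bm{\pi}}_j}$ whenever $i\neq j$, which must be checked -- or postulated as a standing genericity condition on the parameters $(\beta_i,\rho_{ij})$ -- for every subset $F\in\Nb$ and every permutation $\bm{\pi}\in\Pi(F/\emptyset)$ appearing in the outer sum of \eqref{eqcdogxt}. Once this is secured, both parts reduce to direct substitution, with no further analytic work required beyond what is already packaged in Lemma~\ref{lemma_conditional_expectation} for evaluating the terminal Laplace transform $\mathbb{E}[e^{-g\int_0^{t_k}Y_u\,du}]$.
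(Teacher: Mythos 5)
Your proof is correct and takes essentially the same approach as the paper, which merely states that the proposition ``follows immediately from Corollary~\ref{theoremparticular}'' and omits the computation you have written out (conditioning at $s=0$, $E=\emptyset$, regrouping by cardinality, and collapsing $R_X(t)=(1-R)|X_t|$ under a uniform recovery rate). Your flag on the nondegeneracy hypothesis $\overline{\Lc}_{F^{\bm{\pi}}_i}\neq\overline{\Lc}_{F^{\bm{\pi}}_j}$ inherited from Corollary~\ref{theoremparticular} is a legitimate point that the paper leaves implicit.
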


\begin{proof}
	It follows immediately from Corollary \ref{theoremparticular}. The computations are omitted here.
\end{proof}

\subsection{Pricing Examples}
\label{examplessubsec}

In this subsection, we simplify the results of Proposition \ref{proposition_spread} under two specific contagion models:
(1) homogeneous contagion model (HCM) in Proposition \ref{propExampleHomo} and (2) near neighbor contagion model (NCM) in Proposition \ref{propexamplenonhomo}. 
Both models are special cases of the intensity model in Assumption \ref{assumption_intensity_model}.

\begin{assum}[Homogeneous Contagion Model]
	\label{exampleHOmo}
	Let the intensity family $\lamb$ be given by Assumption \ref{assumption_intensity_model}.
	We further assume that the default contagion is homogeneous among obligors.
	To be specific, we assume $\rho_{ij}=\rho$ for all  $i\neq j$,  $\Phi(t,y)=y$ and $h(n)=e^{-\delta n}$, where $\delta $ is a constant and can be interpreted as the contagion recovery rate.
\end{assum}

The homogeneous contagion model (HCM) introduced in Assumption \ref{exampleHOmo} is  different from that in \cite{frey2010dynamic} and \cite{laurent2011hedging}, where the authors  specify their default intensity $\lamb$ only through a homogeneous Poisson process, while our family of default intensities $\lamb$ consists of two components: one from macroeconomic factor $\Phi$ and the other from intergroup contagion matrix $(\rho_{ij})$.  Modeling all  heterogeneous intergroup contagion  leads to  a large scale of parameters and  numerical issues, such as computational inefficiency and loss of robustness in model parameters calibration. As a result, we will  mainly restrict to HCM in our numerical studies.
We remark that HCM is a reasonable assumption for CDO tranches on large indexes such as iTraxx and CDX, although this  may be an issue with equity tranches for which idiosyncratic risk is playing an important role.
Under the above HCM, we obtain the following results.

\begin{proposition}
	\label{propExampleHomo}
	Suppose the macroeconomy process $Y$ is given, and Assumptions \ref{assumption_Poisson},  \ref{assumption_jump_diffusion_intensity} and \ref{exampleHOmo} hold.	
	The spread $s^{(i)}$  of tranche $i$ is given by  \eqref{eqn_spread}, and the expectation  $ \mathbb{E}[L^{(i)}(X_{t_{k}})]$ is computed by
	\begin{align}
	\label{eqcdohom1}
	\mathbb{E} \left[L^{(i)}(X_{t_{k}}) \right]=\sum_{i=0}^{N-1}\Gamma_i \cdot \exp \big(A(a_i,0,{t_k})+ y_0 B(a_i,0,t_k) \big) + 1,
	\end{align}
	where functionals $A,B$ are from Lemma \ref{lemma_conditional_expectation},
	$a_0 :=\sum_{i=1}^N\beta_i$, $a_k :=\rho k(N-k)e^{-\delta k}$ for $k=1,\cdots,N$,
	and $\Gamma_i$ is defined, for all $i=0,1,\cdots, N-1$, by
	\begin{align}
	\Gamma_i := a_0\sum_{n=\max(i, \parallel V_{i-1} \parallel +1)}^N \frac{(n-1)!(N-1)!}{(N-n)!}\rho^{n-1} \, I^{(i)}(n) \, e^{-\delta n(n-1)/2} \, \alpha^{(n)}_i(a_0,a_1,\cdots,a_n).
	\end{align}
\end{proposition}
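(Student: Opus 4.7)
\noindent\textbf{Proof plan for Proposition \ref{propExampleHomo}.} The plan is to specialize the general formula from Proposition \ref{proposition_spread}(ii) by exploiting the homogeneous structure of the intensity in Assumption \ref{exampleHOmo}, and then to collapse the multiple sums over subsets and permutations into a single sum over cardinality using standard counting.

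First I would compute the two building blocks $\overline{\Lc}$ and $\widehat{\Lc}^{\bm{\pi}}$ under HCM. Because $\rho_{ij}=\rho$ and $h(k)=e^{-\delta k}$, the quantity $\overline{\Lc}_E$ depends only on $|E|$: for $|E|=k\ge 1$, $\overline{\Lc}_E = (N-k)\cdot e^{-\delta k}\cdot k\rho = a_k$, while $\overline{\Lc}_\emptyset = \sum_i \beta_i = a_0$. Hence for any permutation $\bm{\pi}\in\Pi(F/\emptyset)$ with $|F|=n$, we have $\overline{\Lc}_{F^{\bm{\pi}}_j} = a_j$, so both $\alpha^{(n)}_j(\bm{\pi}) = \alpha^{(n)}_j(a_0,\ldots,a_n)$ and $\mathbb{E}[e^{-\overline{\Lc}_{F^{\bm{\pi}}_j}\int_0^{t_k}Y_u\,du}]$ become independent of $F$ and $\bm{\pi}$. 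For $\widehat{\Lc}^{\bm{\pi}}(n) = \prod_{k=0}^{n-1}\Lc_{F^{\bm{\pi}}_k}(\pi_{k+1})$, the $k=0$ factor is $\beta_{\pi_1}$, and for $k\ge 1$ the factor equals $k\rho e^{-\delta k}$, giving $\widehat{\Lc}^{\bm{\pi}}(n) = \beta_{\pi_1}\,\rho^{n-1}(n-1)!\,e^{-\delta n(n-1)/2}$.

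Next I would do the combinatorial bookkeeping. Summing $\widehat{\Lc}^{\bm{\pi}}(n)$ over $\bm{\pi}\in\Pi(F/\emptyset)$ contributes $(n-1)!\sum_{j\in F}\beta_j$ from the $\beta_{\pi_1}$ factor (each element of $F$ appears as $\pi_1$ in exactly $(n-1)!$ permutations), and then summing over $F\in\mathcal{A}(n)$ gives $\binom{N-1}{n-1}\sum_{j=1}^N\beta_j = \binom{N-1}{n-1} a_0$ via a standard double counting. Combining these counts yields
\begin{align*}
\sum_{F\in\mathcal{A}(n)}\sum_{\bm{\pi}\in\Pi(F/\emptyset)}\widehat{\Lc}^{\bm{\pi}}(n) \;=\; a_0\,\rho^{n-1}(n-1)!\,\frac{(N-1)!}{(N-n)!}\,e^{-\delta n(n-1)/2}.
\end{align*}
Substituting into \eqref{eqcdogxt} and applying Lemma \ref{lemma_conditional_expectation} to $\mathbb{E}[e^{-a_j\int_0^{t_k}Y_u\,du}]$ whenever $a_j>0$ gives the desired closed form up to rearrangement of sums.

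Finally I would swap the order of summation: $n$ ranges from $\|V_{i-1}\|+1$ to $N$ and $j$ from $0$ to $n$, so the interchange produces outer index $j\in\{0,\ldots,N\}$ and inner index $n\in\{\max(j,\|V_{i-1}\|+1),\ldots,N\}$, which is exactly the range appearing in the definition of $\Gamma_j$. The one subtle point, and the place I expect the main obstacle to lie, is the endpoint $j=N$: here $a_N = \rho\cdot N\cdot(N-N)\cdot e^{-\delta N}=0$, so $\mathbb{E}[e^{-a_N\int_0^{t_k}Y_u\,du}]=1$ and Lemma \ref{lemma_conditional_expectation} is not directly applicable (its formula has $g$ in denominators). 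This term must be separated out before invoking Lemma \ref{lemma_conditional_expectation}, which produces the additive constant appearing at the end of \eqref{eqcdohom1} and explains why the sum over $j$ runs only up to $N-1$ in the exponential terms. The spread formula for $s^{(i)}$ is then immediate from \eqref{eqn_spread}.
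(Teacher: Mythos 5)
Your plan matches the paper's own proof essentially step for step: the same specialization $\overline{\Lc}_E = a_{|E|}$ and $\widehat{\Lc}^{\bm{\pi}}(n)=\beta_{\pi_1}\rho^{n-1}(n-1)!e^{-\delta n(n-1)/2}$, the same double counting to obtain
\begin{align}
\sum_{F\in\mathcal{A}(n)}\sum_{\bm{\pi}\in\Pi(F/\emptyset)}\widehat{\Lc}^{\bm{\pi}}(n) = a_0\,\rho^{n-1}(n-1)!\,\frac{(N-1)!}{(N-n)!}\,e^{-\delta n(n-1)/2},
\end{align}
the same interchange of summation giving outer index $j$ and inner index $n\ge\max(j,\|V_{i-1}\|+1)$, and the same recognition that the endpoint $j=N$ must be split off because $a_N=0$.

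There is, however, a genuine gap at exactly the point you flagged. You observe that the $j=N$ term is a time-independent constant and assert that it ``produces the additive constant appearing at the end of \eqref{eqcdohom1},'' but you do not show that this constant actually equals $1$. Tracing your own computation through, the $j=N$ contribution is
\begin{align}
a_0\,\bigl((N-1)!\bigr)^2\,\rho^{N-1}\,I^{(i)}(N)\,e^{-\delta N(N-1)/2}\,\alpha^{(N)}_N(a_0,\ldots,a_N),
\end{align}
and there is no obvious reason this collapses to $1$ without a further argument, since $\alpha^{(N)}_N$ is defined through the full recursion \eqref{eqn_alpha}. The paper closes this gap with a limit argument rather than a direct computation of $\alpha^{(N)}_N$: since $a_j>0$ for $j=0,\ldots,N-1$ and $\int_0^{t_k}Y_u\,du\to\infty$, every exponential factor $\exp(A(a_j,0,t_k)+y_0B(a_j,0,t_k))=\mathbb{E}[e^{-a_j\int_0^{t_k}Y_u\,du}]$ tends to $0$ as $t_k\to\infty$, so $\lim_{t_k\to\infty}\mathbb{E}[L^{(i)}(X_{t_k})]$ equals the constant term alone; evaluating that limit then pins down the constant. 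You should add this step, because without it the formula \eqref{eqcdohom1} is left with an undetermined constant. Everything else in your proposal is correct and is the argument the paper uses.
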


\begin{proof}
	Please refer to Appendix \ref{appen_propExampleHomo} for the proof.  	
\end{proof}

Next, we consider the near neighbor contagion model (NCM), where  each obligor $O_i$ can only impact its neighbors $O_{i-1}$ and $O_{i+1}$.
We describe the model in Assumption  \ref{examplenonhomo} and obtain related results in Proposition \ref{propexamplenonhomo}.

\begin{assum}[Near Neighbor Contagion Model]
	\label{examplenonhomo}
	Let the intensity family $\lamb$ be given by Assumption \ref{assumption_intensity_model}.
	We further assume that
	each obligor $O_i$ can only impact its neighbors $O_{i-1}$ and $O_{i+1}$. To be specific, we set $\Phi(t,y)=y$, $h(n)=e^{-\delta n}$ and
	\begin{align}
	\rho_{ji} = \begin{cases}
	\mathfrak{p}, & \text{ if } \{ i= j+1,   1\leq j\leq N-1\}    \text{ or }  \{j = N \text{ and } i = 1 \}\\
	\mathfrak{q}, & \text{ if } \{ i=  j-1, 2 \leq j \leq  N\} \text{ or  }  \{ j= 1 \text{ and } i=N \}\\
	0, & \text{ otherwise}
	\end{cases}
	\end{align}
	where $\mathfrak{p}>0$, $\mathfrak{q}>0$,  $\delta$ are  constants.
\end{assum}

\begin{proposition}
	\label{propexamplenonhomo}
	
	Suppose the macroeconomy process $Y$ is given, and Assumptions \ref{assumption_Poisson},  \ref{assumption_jump_diffusion_intensity} and  \ref{examplenonhomo} hold.	
	The spread $s^{(i)}$  of tranche $i$ is given by  \eqref{eqn_spread}, and the expectation  $ \mathbb{E}[L^{(i)}(X_{t_{k}})]$ is computed by	
	\begin{align}
	\label{nonhonexamtheorem}
	\mathbb{E}\left[L^{(i)}(X_{t_{k}})\right] =   1+ \overline{a}_0 \sum_{n=  \parallel V_{i-1} \parallel +1}^{N-1}  \overline{A}_n \overline{B}_n   + \overline{a}_0 \overline{A}_N \overline{B}_N ,
	\end{align}
	where $\overline{a}_0 :=\sum\limits_{i=1}^N\beta_i$, $\overline{a}_N=0$, $\overline{a}_k:=e^{-\delta k}(\mathfrak{p}+\mathfrak{q})$ for $k=1,2,\cdots, N-1$,  and
	\begin{align}
	\overline{A}_n &:= I^{(i)}(n) e^{-\delta n(n-1)/2}  (\mathfrak{p}+\mathfrak{q})^{n-1},  \\
	\overline{B}_n  &:= \sum_{i=0}^n\alpha^{(n)}_i(\overline{a}_0,\cdots, \overline{a}_n)  \exp\{A(\overline{a}_i,0,{t_k})+ y_0 B(\overline{a}_i,0,t_k)\}, \; n=1,\cdots,N-1,\\
	\overline{B}_N &:=  \sum_{i=0}^{N-1} \alpha^{(N)}_i(\overline{a}_0,\cdots, \overline{a}_N)   \exp\{A(\overline{a}_i,0,{t_k})+ y_0 B(\overline{a}_i,0,t_k)\},
	\end{align}
	with functionals $A,B$ as given in Lemma \ref{lemma_conditional_expectation}.
\end{proposition}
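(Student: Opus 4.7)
The plan is to derive Proposition \ref{propexamplenonhomo} by specializing the general pricing formula of Proposition \ref{proposition_spread} to the Near Neighbor Contagion Model. Throughout, I identify the obligors with vertices of a cycle $C_N$ (so that neighbors of $O_i$ are $O_{i-1}$ and $O_{i+1}$ modulo $N$).

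The decisive structural observation is the following: because $\rho_{ji}$ vanishes unless $i$ and $j$ are adjacent on $C_N$, the chain rule $\widehat{\Lc}^{\bm{\pi}}(n)=\prod_{k=0}^{n-1}\Lc_{F^{\bm{\pi}}_k}(\pi_{k+1})$ has a nonzero contribution only when, at each step $k\ge 1$, the newly added obligor $\pi_{k+1}$ is a neighbor on $C_N$ of some element of $F^{\bm{\pi}}_k$. Hence, the only sets $F$ with $|F|=n$ that appear in \eqref{eqcdogxt} are the connected arcs of length $n$ on the cycle, of which there are exactly $N$. Moreover, every intermediate $F^{\bm{\pi}}_k$ is itself a connected arc of length $k$. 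First I would prove this by induction on $k$.

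Next, I would compute the two ingredients entering the sum. For a connected arc $E$ of length $k\in\{1,\ldots,N-1\}$, exactly two edges join $E$ to $E^c$ on $C_N$, one carrying weight $\mathfrak{p}$ and the other $\mathfrak{q}$; together with the factor $h(k)=e^{-\delta k}$, this gives $\overline{\Lc}_E=e^{-\delta k}(\mathfrak{p}+\mathfrak{q})=\overline{a}_k$, while $\overline{\Lc}_{\N}=0=\overline{a}_N$. For $\widehat{\Lc}^{\bm{\pi}}(n)$, I would fix a connected arc $F$ of length $n$, fix the starting element $\pi_1$ at position $j\in\{1,\ldots,n\}$ within $F$, and note that the ordering $\pi_2,\ldots,\pi_n$ amounts to an interleaving of $(n-j)$ right extensions (each contributing $\mathfrak{p}$) and $(j-1)$ left extensions (each contributing $\mathfrak{q}$), for a total of $\binom{n-1}{j-1}$ such interleavings. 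Collecting terms, $\widehat{\Lc}^{\bm{\pi}}(n)=\beta_{\pi_1}\bigl(\prod_{k=1}^{n-1}h(k)\bigr)\mathfrak{p}^{n-j}\mathfrak{q}^{j-1}$. Summing over the position $j$, over all arcs $F$, and using that the $j$-th obligor in an arc runs over all of $\N$ exactly once as $F$ rotates around $C_N$, gives
\begin{align*}
\sum_{F\in\mathcal A(n)}\sum_{\bm{\pi}\in\Pi(F/\emptyset)}\widehat{\Lc}^{\bm{\pi}}(n)=\Bigl(\prod_{k=1}^{n-1}e^{-\delta k}\Bigr)\sum_{j=1}^{n}\binom{n-1}{j-1}\mathfrak{p}^{n-j}\mathfrak{q}^{j-1}\sum_{i\in\N}\beta_i=\overline{a}_0\,e^{-\delta n(n-1)/2}(\mathfrak{p}+\mathfrak{q})^{n-1},
\end{align*}
by the binomial theorem and the definition of $\overline{a}_0$. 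Pulling out the recovery/tranche factor $I^{(i)}(n)$ from $L^{(i)}$ (by the same argument as in \eqref{LIGN}) assembles $\overline{A}_n$.

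Since $\overline{\Lc}_{F^{\bm{\pi}}_i}=\overline{a}_i$ depends only on $i$ and not on $\bm{\pi}$ or $F$, the inner sum $\sum_{i=0}^{n}\alpha^{(n)}_i(\bm{\pi})\,\mathbb{E}[e^{-\overline{\Lc}_{F^{\bm{\pi}}_i}\int_0^{t_k}Y_u\,du}]$ in Proposition \ref{proposition_spread} factors out of the sum over $F,\bm{\pi}$; evaluating the remaining expectations via Lemma \ref{lemma_conditional_expectation} delivers exactly $\overline{B}_n$. The hard part is the edge case $n=N$: here $\overline{a}_N=0$ collides with potential distinctness requirements of $\alpha^{(N)}_i$, and more importantly the $i=N$ term in $\overline{B}_n$ (which would be $\alpha^{(N)}_N\cdot 1$) is what makes the closed-form truncate at $i=N-1$. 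I would handle it by verifying $\sum_{i=0}^{N}\alpha^{(N)}_i=0$ by induction from \eqref{eqn_alpha}, so that $\alpha^{(N)}_N = -\sum_{i=0}^{N-1}\alpha^{(N)}_i$ absorbs the rogue term and yields precisely the $\overline{A}_N\overline{B}_N$ piece. Finally, the additive $+1$ comes from tracking the contribution of the fully-defaulted state and the normalization convention in \eqref{eqn_L}--\eqref{eqn_Li}, which I would reconcile exactly as in the proof of Proposition \ref{propExampleHomo}. Summing the pieces for $n=\parallel V_{i-1}\parallel+1,\ldots,N-1$ together with the separate $n=N$ term reproduces \eqref{nonhonexamtheorem}.
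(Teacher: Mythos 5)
Your proposal follows essentially the same route as the paper's own proof: you make the same structural observation (only connected arcs of the cycle $C_N$ contribute, because the factor $\Lc_{F^{\bm\pi}_k}(\pi_{k+1})$ vanishes unless the new obligor is adjacent to the current arc), you compute $\overline{\Lc}_E = e^{-\delta|E|}(\mathfrak{p}+\mathfrak{q})$ for an arc $E$ of length $k<N$ by counting the two boundary edges, and you evaluate $\sum_{F\in\mathcal A(n)}\sum_{\bm\pi}\widehat{\Lc}^{\bm\pi}(n)$ by fixing the starting position of $\pi_1$ inside the arc, counting left/right interleavings with $\binom{n-1}{j-1}$, and invoking the binomial theorem — exactly as the paper does. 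The factorization of the inner $\alpha^{(n)}_i$-sum out of the $F,\bm\pi$-sum (since $\overline{\Lc}_{F^{\bm\pi}_i}=\overline a_i$ is independent of $F$ and $\bm\pi$) is likewise the same step.

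The one place your write-up wobbles is the $n=N$ edge case and the origin of the additive $+1$. The identity $\sum_{i=0}^{N}\alpha^{(N)}_i=0$ is not something to prove by induction — it is immediate from \eqref{eqn_alpha}, which \emph{defines} $\alpha^{(N)}_N:=-\sum_{i=0}^{N-1}\alpha^{(N)}_i$. More to the point, that identity does not ``absorb'' the rogue term into $\overline A_N\overline B_N$: the $n=N$ inner sum simply splits as $\sum_{i=0}^{N-1}\alpha^{(N)}_i\,\EE[e^{-\overline a_i\mathcal I(0,t_k)}]+\alpha^{(N)}_N\cdot 1$ (using $\overline a_N=0$), the first piece is $\overline B_N$ by definition, and the left-over constant $\overline a_0\,\overline A_N\,\alpha^{(N)}_N$ is what the paper identifies with $1$ by sending $t_k\to+\infty$ and using $\overline a_i>0$ for $i<N$ (same device as in the proof of Proposition~\ref{propExampleHomo}, to which you do correctly defer at the end). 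So the conclusion and the method you ultimately invoke are right, but the intermediate claim that the $\alpha$-identity ``yields the $\overline A_N\overline B_N$ piece'' is a red herring and should be dropped.
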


\begin{proof}
The proof is placed in Appendix \ref{appen_propexamplenonhomo}.
\end{proof}

\section{Numerical Studies}
\label{sectoionNumeric}

In Section \ref{subsec_sensitivity}, we demonstrate how to apply theoretical results from Propositions \ref{propExampleHomo} and \ref{propexamplenonhomo} to compute CDO index and tranche spreads, and calculate the attachment/detachment times in a toy example.
Furthermore, we carry out a sensitivity analysis to investigate the impact of various model parameters on tranche spreads.
In Section \ref{subsec_market}, we mainly focus on calibrating the homogeneous contagion model to market data and validate the practical use of our novel default risk model.

Throughout this section, the intensity family $\lamb$ and the macroeconomic factor $Y$ are specified as in Assumption \ref{assumption_jump_diffusion_intensity}.
We consider two specific default models in the studies, namely the homogeneous contagion model (HCM) from Assumption \ref{exampleHOmo} and the near neighbor contagion model (NCM) from Assumption \ref{examplenonhomo}.
We choose the following base parameters for our default contagion models.
\begin{itemize}
	\item The number of obligors is $N=125$ (e.g., the constitutes of iTraxx); the risk-free interest rate is $r=5\%$; the payments of CDS premiums are made quarterly, i.e., $\Delta_k \equiv 1/4 := \Delta$; the recovery rates are the same for all obligors, and $R = 40\%$.
	\item The parameters of the affine jump diffusion process of $Y$, given by \eqref{eqn_jump_diffusion}, are
	\begin{align}
	\kappa=0.6, \; \theta =0.02,\; \sigma=0.141, \; l=0.2, \; \text{and} \; \mu = 0.1
	\end{align}
	which are taken from Table 2 of  \cite{duffiegarleanu2001}.
	\item In the HCM, we set $\rho = 0.05$, $\delta = -0.008$, and $a_0 = 0.35$, see Assumption \ref{exampleHOmo} and Proposition \ref{propExampleHomo} for the meanings of these parameters.
	\item In the NCM, we set $\mathfrak{p}= 0.3$, $\mathfrak{q}=0.3$, $\delta = -0.7$, and $\bar{a}_0=0.35$, see Assumption \ref{examplenonhomo} and Proposition \ref{propexamplenonhomo} for the meanings of these parameters.
\end{itemize}

The programming codes for all numerical studies in this section are  written in Python and performed on a Thinkpad PC with Intel(R) 2.50 GHz processor and 8.0GB  RAM.

\subsection{Sensitivity Analysis}
\label{subsec_sensitivity}

We first compute the 5-year CDO tranche spreads using the formulas from Proposition \ref{propExampleHomo} (for HCM) and Proposition \ref{propexamplenonhomo} (for NCM), and list them in bp (1bp = $10^{-4}$) in Table \ref{tablespread}.
In this example, we take the same attachment points as those of iTraxx Europe, and the upfront rates are set from 500 bp to 0 bp, from equity tranche  to super senior tranche.
The upfront rates are artificial, only for illustrative purpose. For example, the equity spread of the iTraxx Europe is the upfront premium on the tranche nominal, quoted in percentage, plus the running fee 500 bp.
In the example, since $\Delta = 0.25$ and the CDO has 5 years maturity, the number of payments is $m = 20$.
The CPU running time for the HCM and NCM are respectively 5.73 seconds and 2.41 seconds.
\begin{table}[h]
	\centering
	\caption{5-year CDO Tranche Spreads under HCM and NCM}	
	\label{tablespread}
	\begin{tabular}{cccc}\hline
		Tranches	& Upfront Rate(bp)	&	HCM. Spread(bp)	&	NCM. Spread(bp)\\ \hline
		$[0,3\%]$	    &	500 &  1002	&	418	\\
		$[3\%,6\%]$	    &	400	&	840 &   190 \\
		$[6\%, 9\%]$	&	300	&	795 &   211	\\
		$[9\%, 12\%]$	&	200	&	777 &   235	\\
		$[12\%,22\%]$	&	100	&	739 &   259	\\
		$[22\%,60\%]$	&	  0	&	619 &   283	\\
		\hline
	\end{tabular}
	
\end{table}

Next, we calculate the attachment and detachment times (in years) under HCM.
From the results in Table \ref{tableattachTime}, we observe that the equity tranche will endure losses in half a year and be wiped out in 1.25 years.
However, higher tranches $[12\%,22\%]$ and $[22\%,60\%]$ would take decades to be wiped out.

\begin{table}[h]
	\centering
	\caption{Attachment and Detachment Time under  HCM}
	\label{tableattachTime}
	\begin{tabular}{cccc}
		\hline
		Tranches	&	 Detachment Default Obligors	&	Attachment Time  (year)	&	Detachment Time (year)	\\
		$[0,3\%]$	&	6	&	0.5	&	1.25	\\
		$[3\%,6\%]$	&	13	&	1.25	&	1.75	\\
		$[6\%, 9\%]$	&	19	&	1.75	&	2.25	\\
		$[9\%, 12\%]$	&	25	&	2.25	&	3	\\
		$[12\%,22\%]$	&	46	&	3	&	11	\\
		$[22\%,60\%]$	&	125	&	11	&	294	\\
		\hline
	\end{tabular}
\end{table}

We then focus on the sensitivity analysis of various factors on the CDO tranche spreads under HCM.
In the sensitivity studies, we investigate only one factor each time, and keep other factors the same as in the setup of base parameters.
The factors we consider are default contagion rate $\rho$, contagion recovery rate $\delta$,
{ number of payments} $m$, default recovery rate $R$, macroeconomy mean-reversion rate $\kappa$ and volatility $\sigma$. For illustrative purpose, we choose upfront rate $u^{(i)}=0$ for all tranches.
According to the graphs in Figure \ref{pic_sensitivity}, we arrive at the following conclusions.
\begin{itemize}
	\item The CDO tranche spreads are very sensitive to all factors considered here, except for the macroeconomy volatility $\sigma$.
	\item Among all six factors considered, only the default contagion rate $\rho$ is positively related with respect to the tranche spreads, while the rest shows negative relation.
	\item The tranche spreads are extremely elastic to the default contagion rate $\rho$ and contagion recovery rate $\delta$. One can interpret $\delta$ as the government intervene or self recovery rate of the group. The equity tranche is less sensitive to $\delta$ comparing with other tranches.
\end{itemize}

\begin{figure}[h!]
	\centering
	\caption{Sensitivity Analysis of CDO Tranche Spreads under HCM}
	\subfigure{\includegraphics[width=8.3cm]{./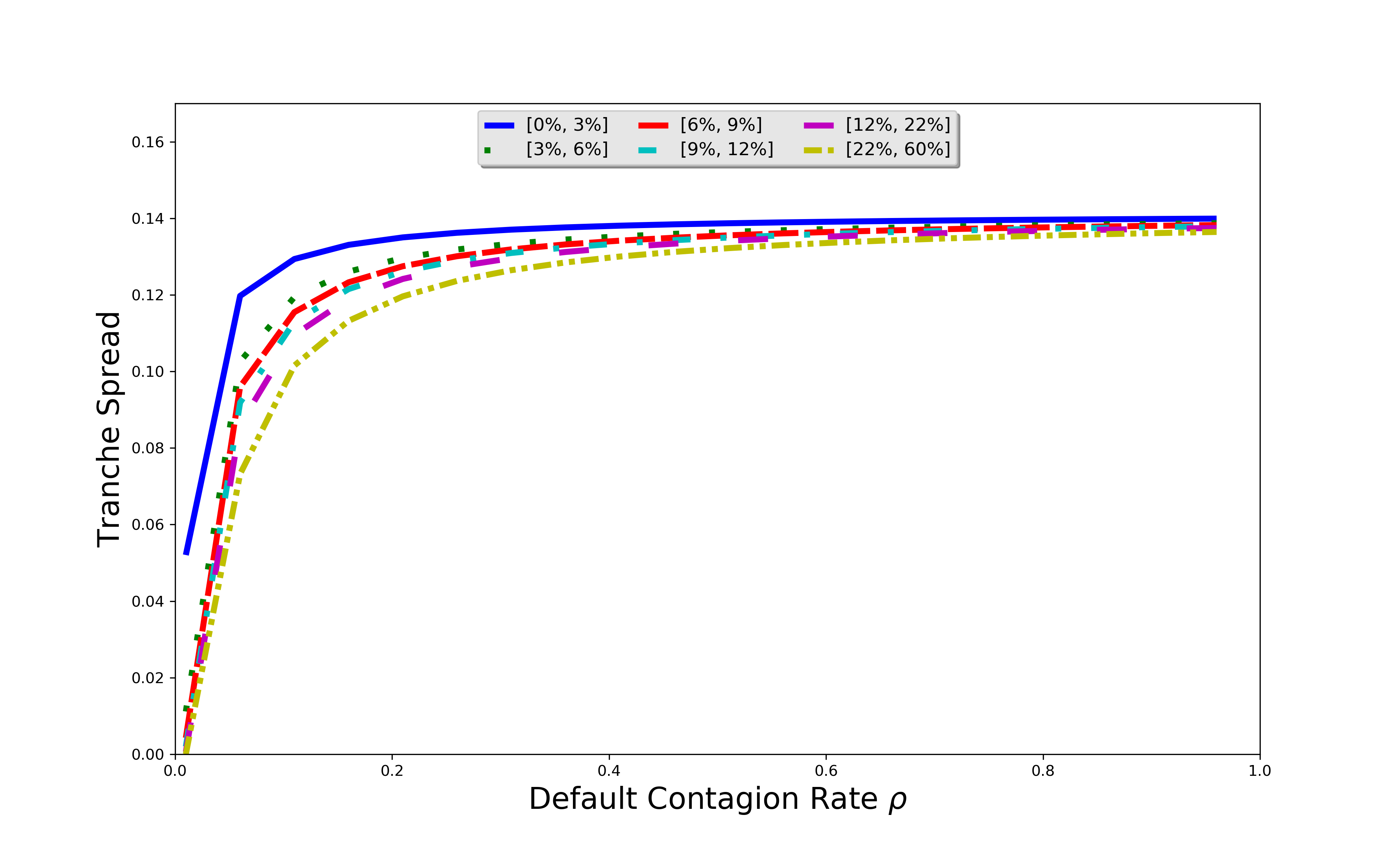}}
	\subfigure{\includegraphics[width=8.3cm]{./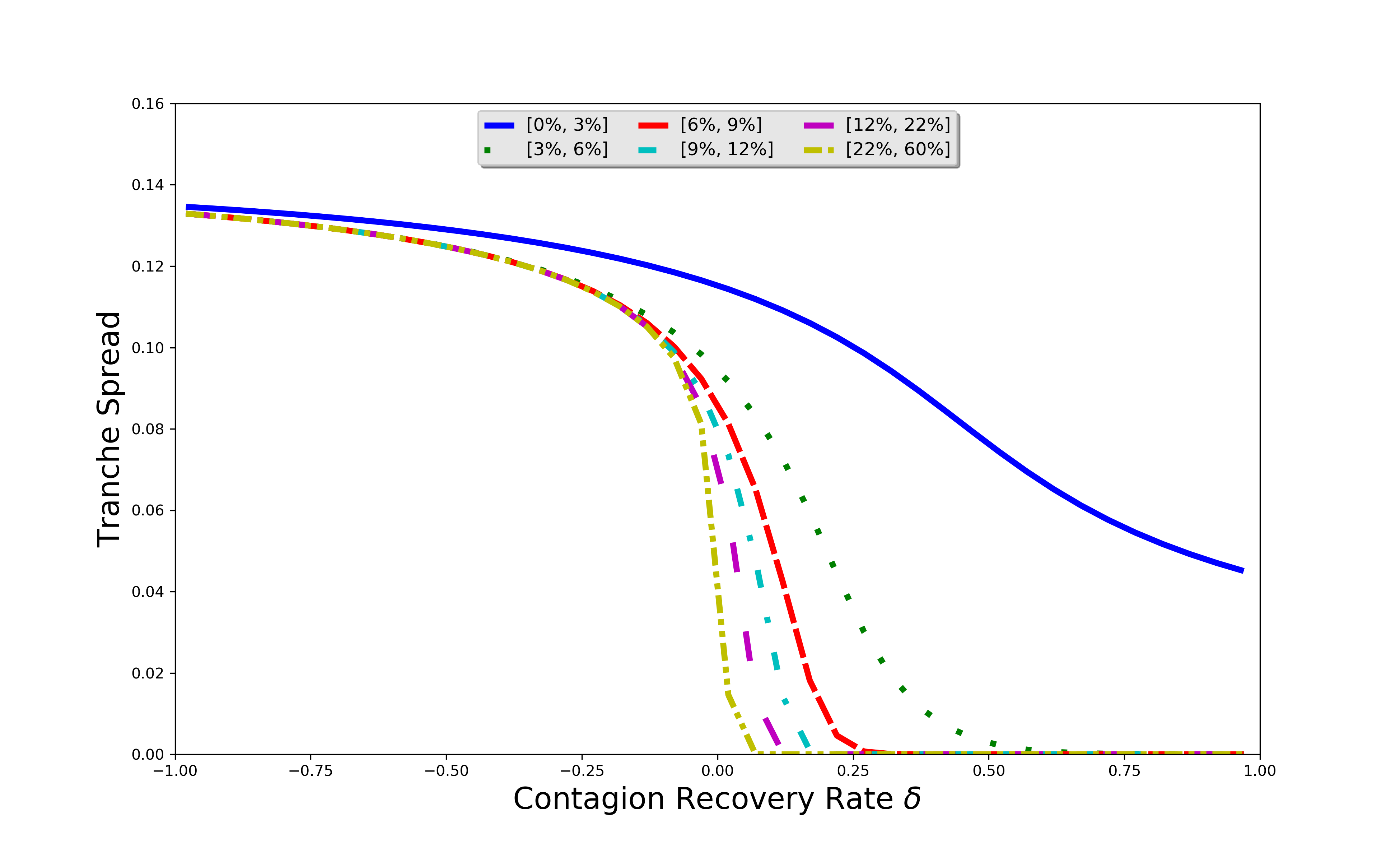}}
	\subfigure{\includegraphics[width=8.3cm]{./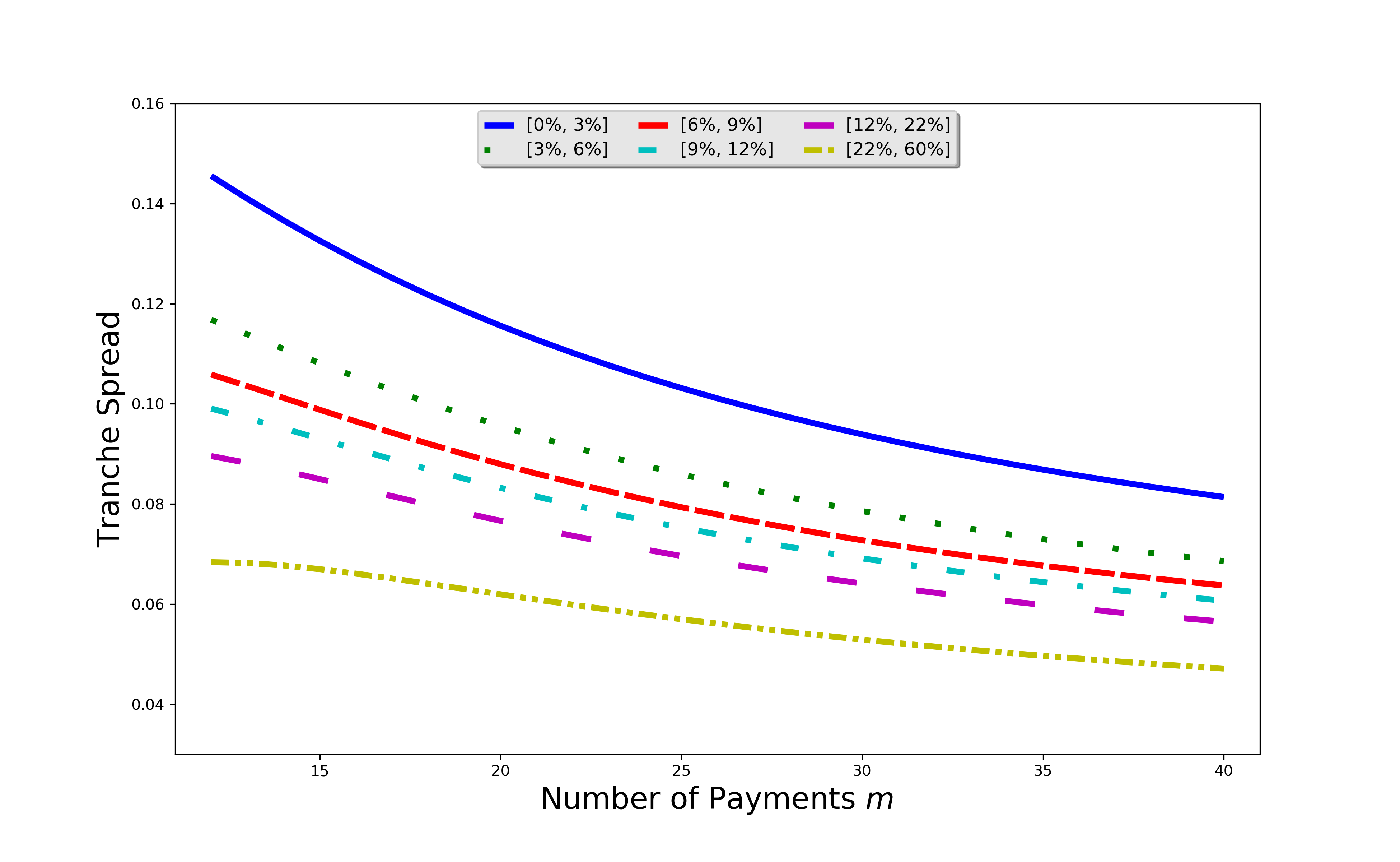}}
	\subfigure{\includegraphics[width=8.3cm]{./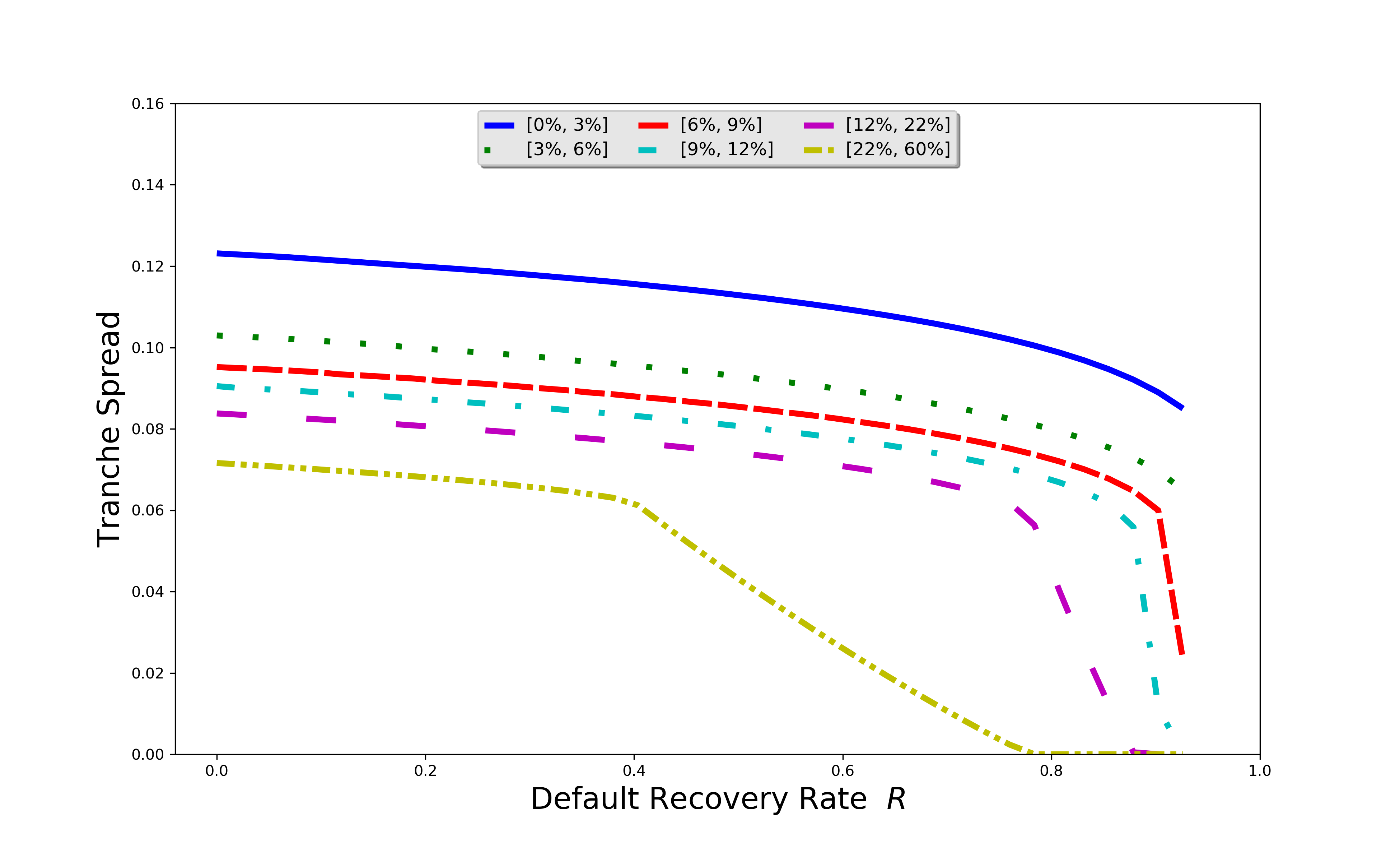}}
	\subfigure{\includegraphics[width=8.3cm]{./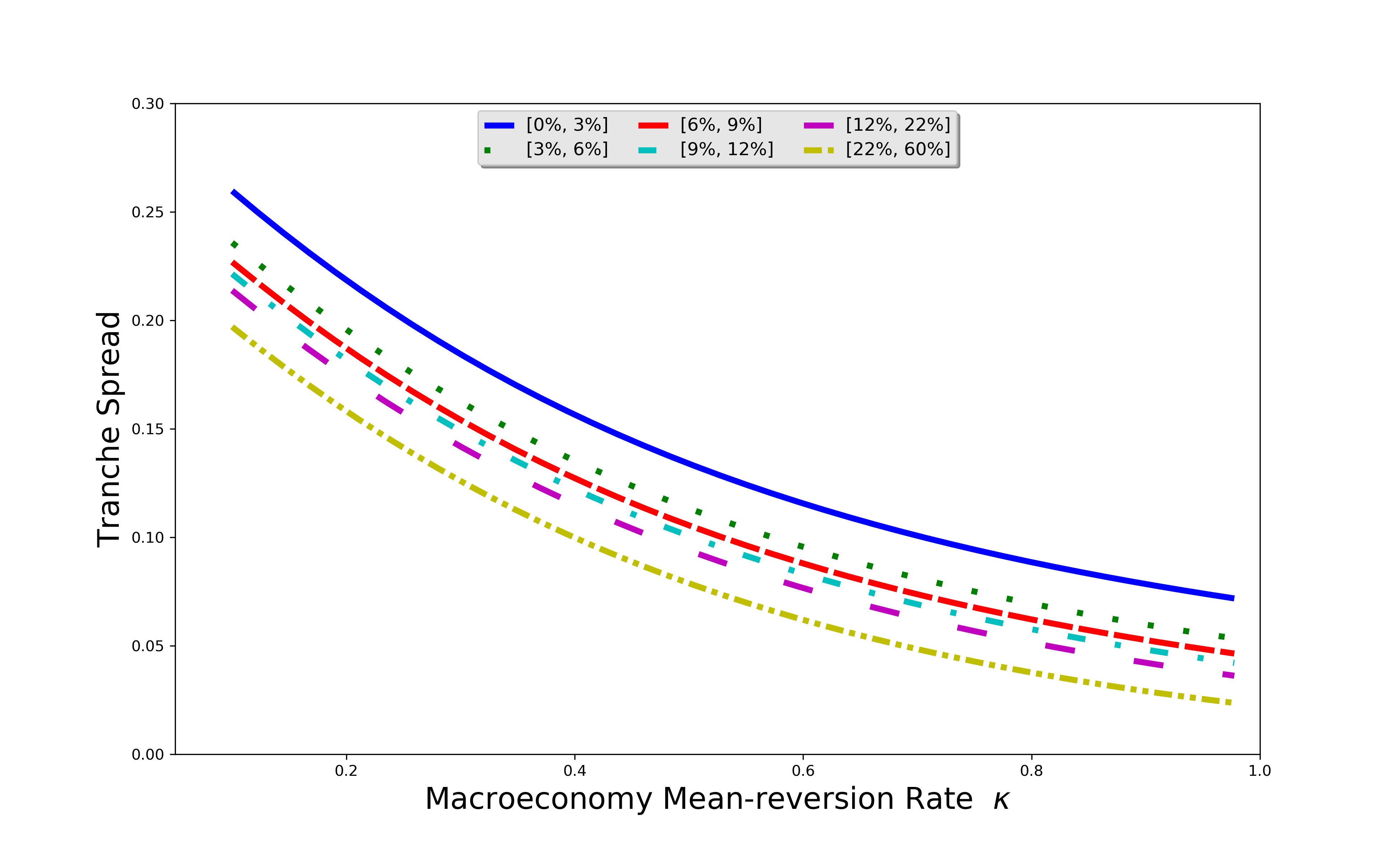}}
	\subfigure{\includegraphics[width=8.3cm]{./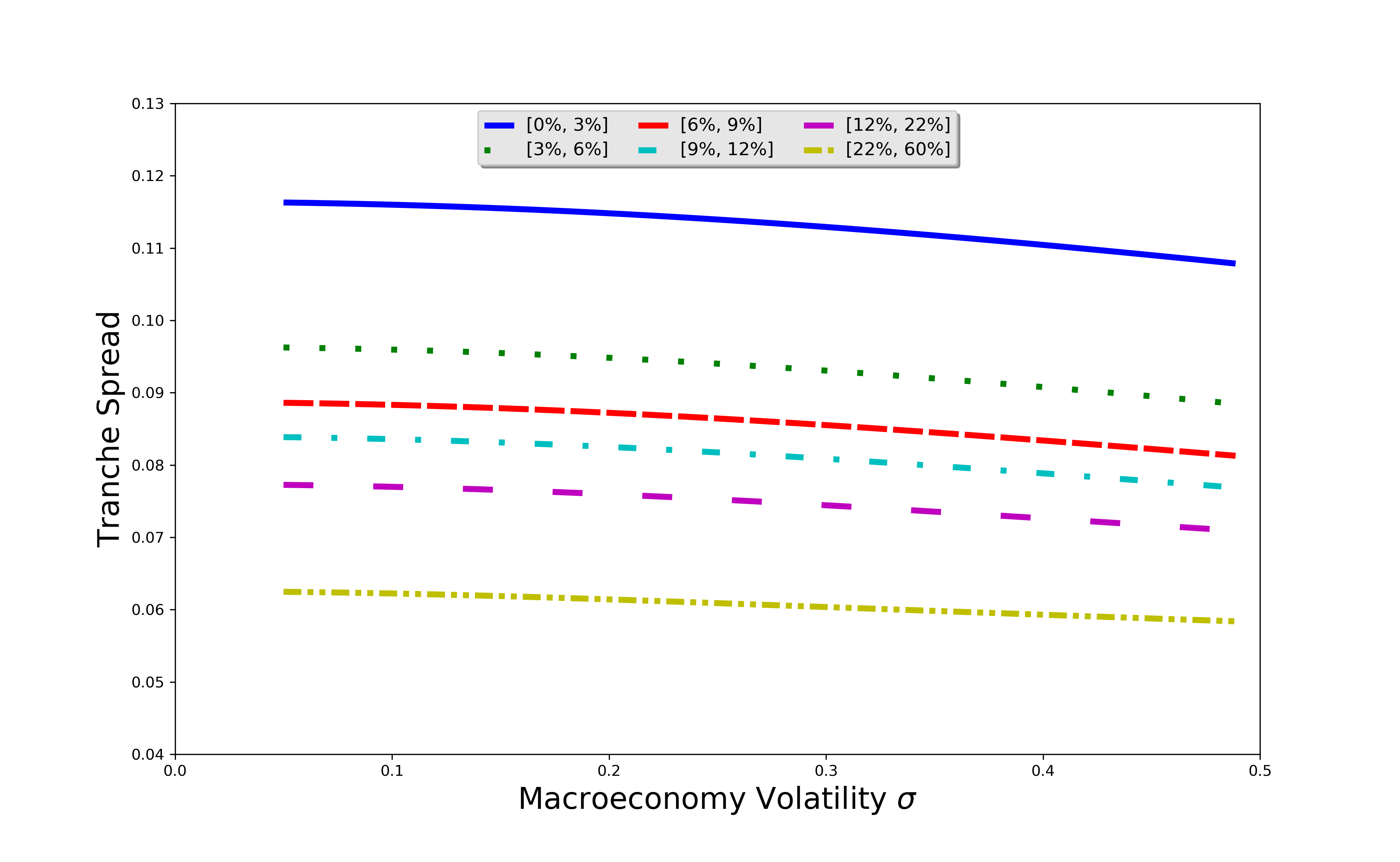}}
	\label{pic_sensitivity}
\end{figure}

Last, we study how the number of defaulted obligors evolves in time with respects to  four factors $\rho$, $\delta$, $\kappa$ and $\sigma$. The results are plotted in Figure \ref{pic_number}.
We find that the increase of the default contagion rate $\rho$ leads to the increase of defaulted obligors, since $\rho$ directly measures the default contagion rate.
In the meantime,  the number of defaulted obligors reduces when $\delta$, $\kappa$ or $\sigma$ increases since it alleviates the severity of contagions .

\begin{figure}[h!]
	\centering
	\caption{Evolution of the Number of Defaulted Obligors under HCM}
	
	\subfigure{\includegraphics[width=8cm]{./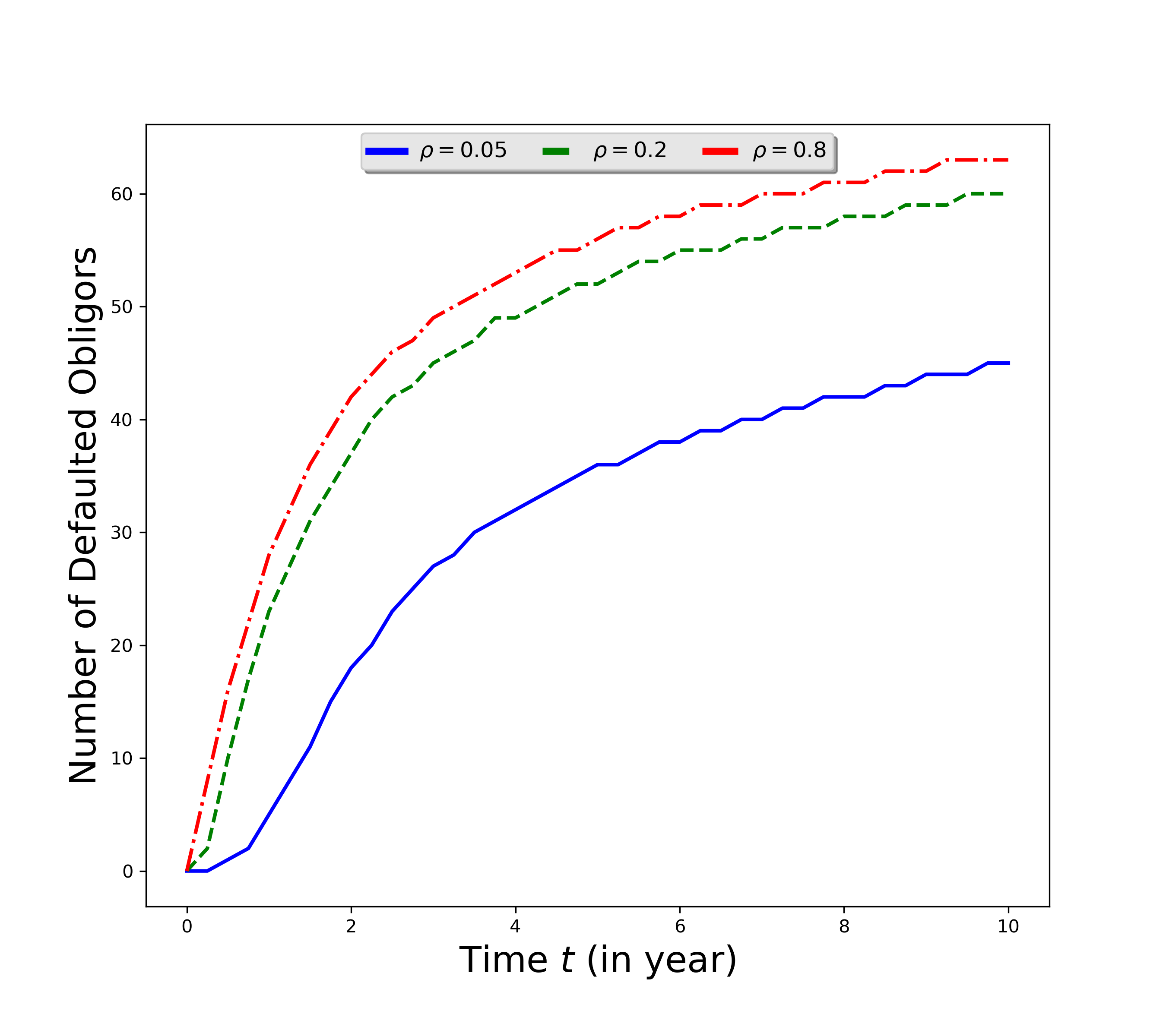}}
	\subfigure{\includegraphics[width=8cm]{./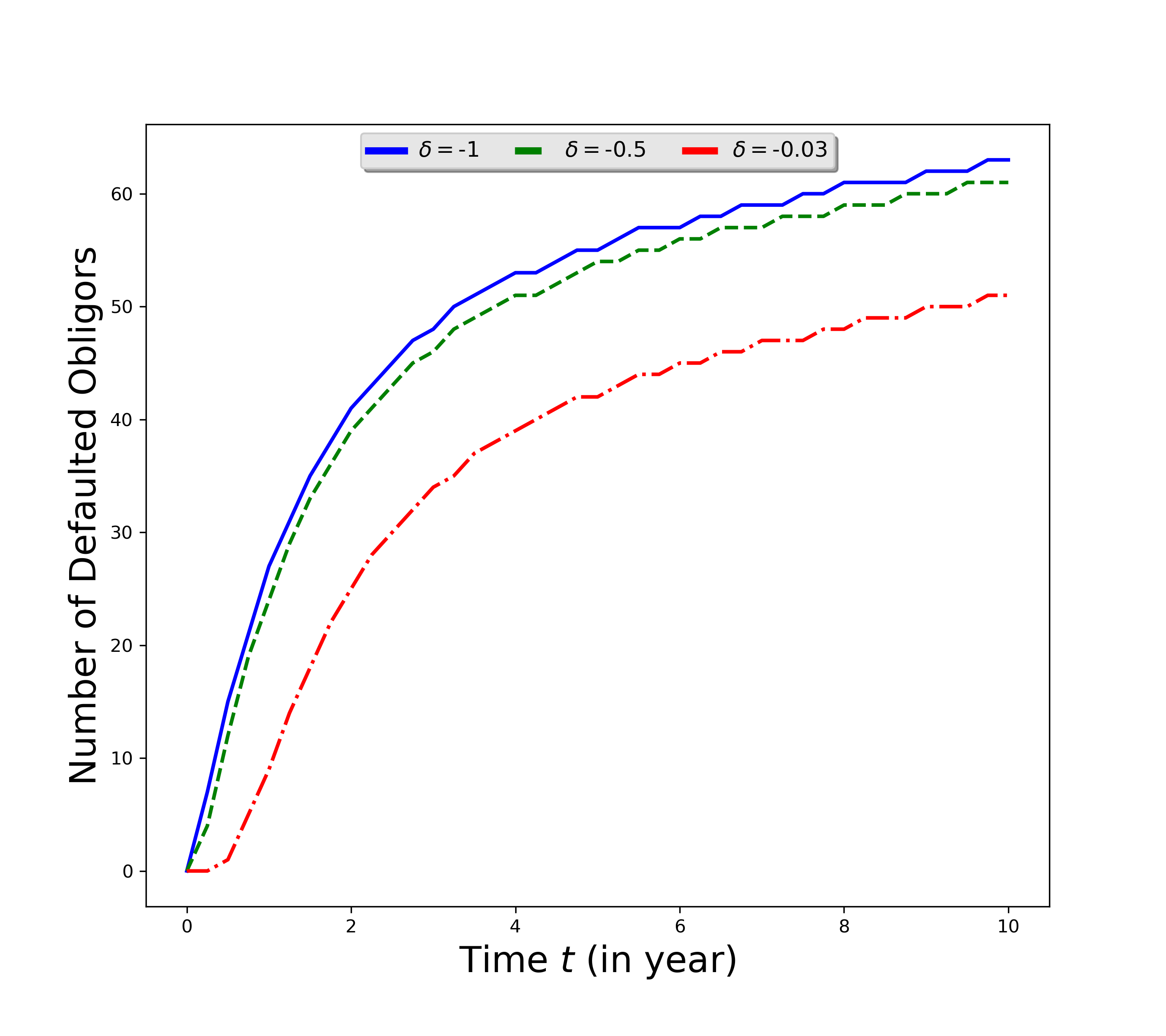}}
	\subfigure{\includegraphics[width=8cm]{./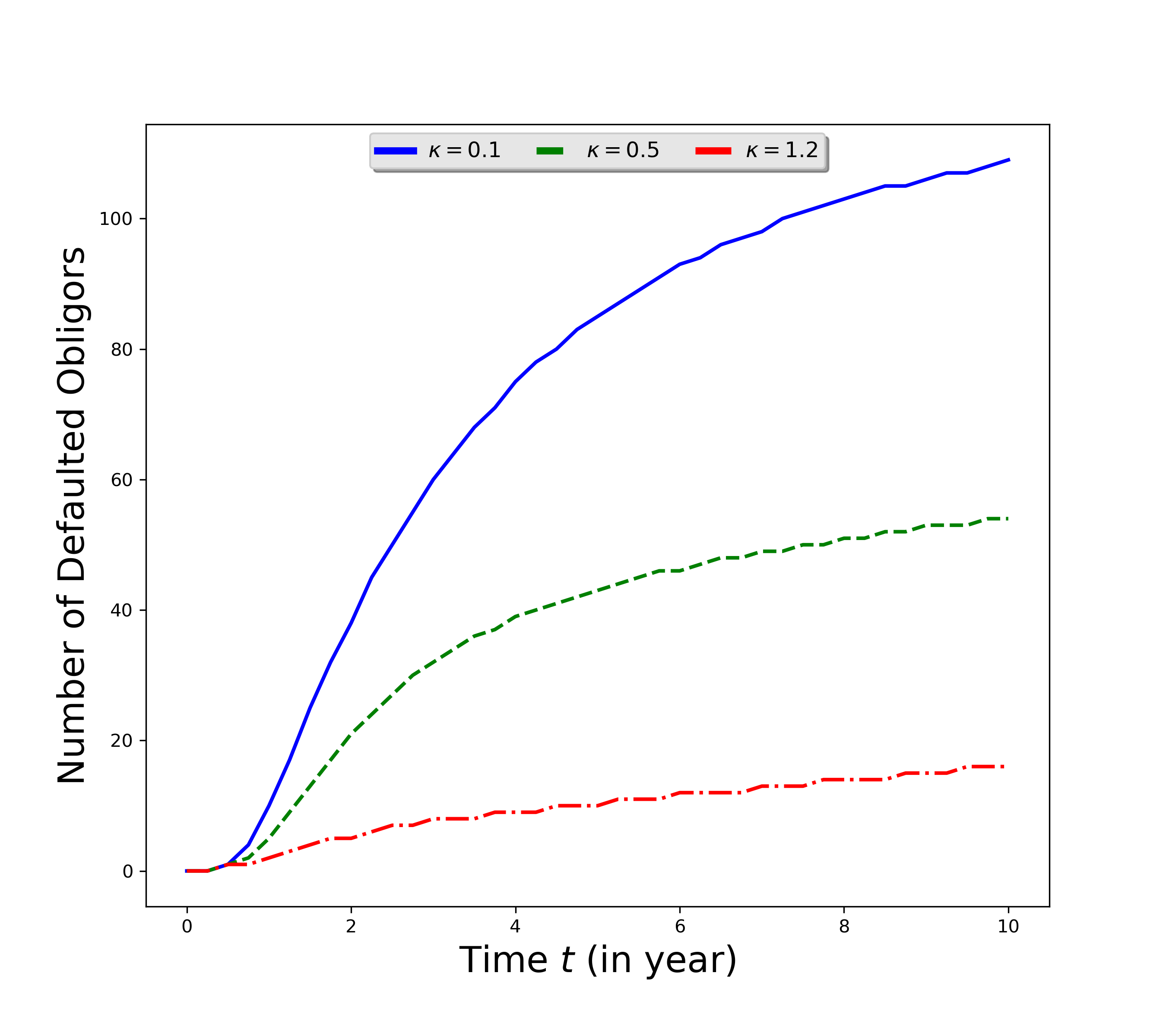}}
	\subfigure{\includegraphics[width=8cm]{./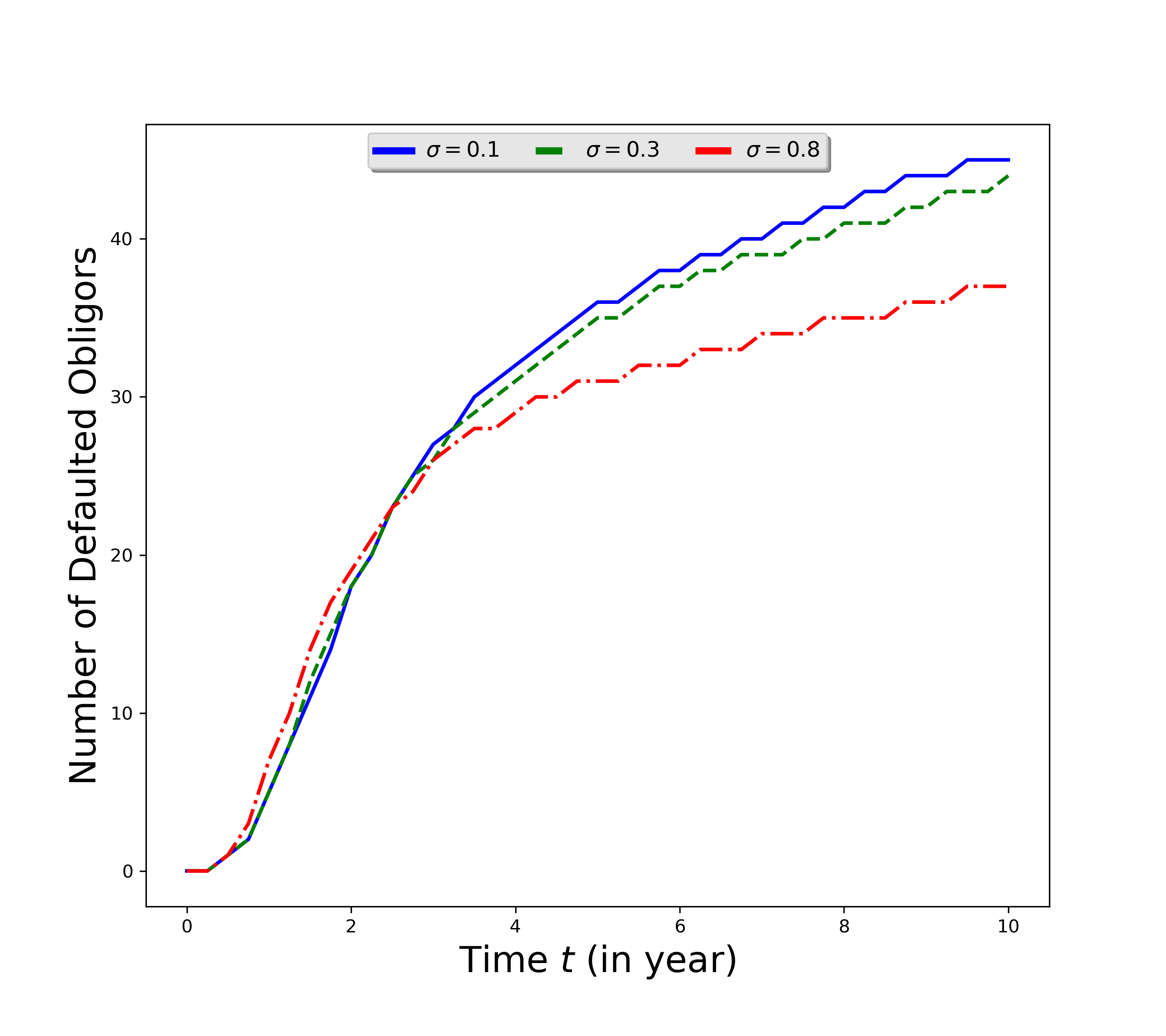}}
	\label{pic_number}
\end{figure}

\subsection{Market Calibration}
\label{subsec_market}

In Section \ref{subsec_sensitivity}, the base parameters are pre-selected, not calibrated using market data.
In this subsection, we use the market data\footnote{We use the data from  \cite{giesecke2007}.} of the CDX North American High Yield (CDX.NA.HY) index and its spreads observed on $5/11/2007$ to calibrate the parameters of the default intensity family $\Lamb$.
The CDX.NA.HY index constitutes of equally weighted $N=100$ obligors with attachments points $0,10,15,25,35$. The first two tranches $[0,15]$ and $[15,25]$ are quoted as a percentage of the upfront fee, while the others are quoted as a percentage of the running spread fee.

In the remaining studies, we only consider the homogeneous contagion model (HCM), introduced in Assumption \ref{exampleHOmo}, since the near neighbor contagion model (NCM) does not fit the market data well.
As a result, we aim to use the CDX.NA.HY market data to estimate the parameters vector $\bm{x} = (a_0, \rho, \delta, \kappa, \theta, \sigma, \mu,l,y_0)$.
To obtain the best fit $\hat{\bm{x}}$, we solve the following minimization problem
\begin{align}
\min_{ \bm{x} \in \Theta} \; \sum_{i=1}^{5} \left(\frac{{Trache \ i}^{model} - \overline{{Trache \ i}^{market}}}{\overline{{Trache\  i}^{market}}} \right)^2,
\end{align}
where $\Theta = (0, 2)\times(0,2)\times(-2, 1)\times(0,7)\times(0,7)\times(0,0.4)\times(0,5)\times(0,1)\times(0,10)$.\footnote{\cite{giesecke2010} consider similar feasible region in their studies. Changing the feasible region  $\Theta$ will only slightly affect calibration.}
${Trache \ i}^{model}$ is tranche $i$'s spread under HCM (see Proposition \ref{propExampleHomo}). $\overline{{Trache\  i}^{market}}$ is the average of tranche $i$'s bid-ask quote, where $i=1,2,3,4$, and $\overline{{Trache\  5}^{market}}$ is the CDX.NA.HY index.

We consider both 5Y and 7Y CDX.NA.HY indexes in the calibration.
We apply the sequential least squares programming (SLSQP) method built in  Python Scipy package to solve the above minimization problem.
The step size used for numerical approximation of the Jacobian is set to 1.49e-10, and the precision goal for the minimization value of the stopping criterion is 1e-15.
We first calibrate the model parameters to two indexes separately and list the results in Table \ref{tab_calibration}.

The calibrated optimal parameters for 5Y CDX.NA.HY are
$a_0 =1.135,  \, \rho = 0.00258, \, \delta=0.0149, \,  \kappa=0.958, \,  \theta=0.680, \, \sigma = 0.125,  \, \mu=2.380, \, l =0.236, \, y_0=0.998.$
The calibrated optimal parameters for 7Y CDX.NA.HY are
$a_0 = 1.199, \,\rho=0.00356, \,\delta=0.00950, \,\kappa=1.400,\, \theta=0.884,\, \sigma=0.382,\, \mu=0.362, \,l=0.320, \, y_0=1.000$
The average absolute percentage errors (AAPE) of the 5Y and 7Y CDX.NA.HY indexes are $4.36\%$ and $4.73\%$ respectively, both of which are on reasonable levels since liquidity risk and market makers' premium are included in the market prices.
We share the same view as \cite{Mortensen2005} that it is difficulty to rule out supply and demand effects caused by market segments or market inefficiency, and a prefect fit to the market prices should perhaps not be expected.

\begin{table}[h!]
	\centering
	\caption{Separate Calibration of 5Y and 7Y  CDX.NA.HY indexes}
	\label{tab_calibration}
	\begin{tabular}{ccccccc}
		\hline
		Tranche	&	5Y-Bid	&	5Y-Ask	&	5Y-Model	&	7Y-Bid	&	7Y-Ask	&	7Y-Model	\\
		\hline
		$[0,10\%]$	&	70.50\%	&	70.75\%	&	66.70\%	&	80.13\%	&	80.38\%	&	78.39\%	\\
		$[10\%,  15\%]$	&	34.25\%	&	34.50\%	&	32.89\%	&	55.50\%	&	55.75\%	&	53.44\%	\\
		$[15\%, 25\%]$	&	316.00	&	319.00&	337.72	&	582.00	&	587.00	&	626.24	\\
		$[25\%, 35\%]$	&	79.00	&	81.00	&	78.82	&	180.00	&	183	.00&	180.07	\\
		Index	&	262.85	&	263.10	&	248.00	&	307.50	&	307.75	&	278.43	\\
		MinObj	&		&		&	0.011	&		&		&	0.016	\\
		AAPE	&		&		&	4.36\%	&		&		&	4.73\%	\\
		\hline
	\end{tabular}
\end{table}

Next, we use the joint data of two indexes and redo the calibration. The results are obtained in Table \ref{table57jointly}. In the case of joint calibration, the optimal model parameters are
$a_0 = 1.0372, \, \rho=0.00558,\,  \delta = 0.0264, \, \kappa = 1.219, \, \theta = 0.898,
\, \sigma = 0.375, \, \mu=2.495, \, l=0.155, \, y_0=4.063$.

\begin{table}[ht]
	\centering
	\caption{Joint Calibration of 5Y and 7Y CDX.NA.HY indexes}
	\label{table57jointly}
	\begin{tabular}{ccccccc}
		\hline
		Tranche	&	5Y-Bid	&	5Y-Ask	&	5Y-Model	&	7Y-Bid	&	7Y-Ask	&	7Y-Model	\\
		\hline
		$[0,10\%]$	&	70.50\%	&	70.75\%	&	67.22\%	&	80.13\%	&	80.38\%	&	77.61\%	\\
		$[10\%,  15\%]$	&	34.25\%	&	34.50\%	&	33.72\%	&	55.50\%	&	55.75\%	&	54.26\%	\\
		$[15\%, 25\%]$	&	316.00	&	319.00	&	342.02	&	582.00	&	587.00&	604.84	\\
		$[25\%, 35\%]$	&	79.00	&	81.00	&	77.46	&	180.00	&	183.00	&	174.16	\\
		Index	&	262.85	&	263.10	&	245.87	&	307.50	&	307.75	&	273.66	\\
		MinObj	&		&		&	 	&		&		&	0.031	\\
		AAPE	&		&		&	 	&		&		&	4.83\%	\\
		\hline
	\end{tabular}
\end{table}

Under both top-down and bottom-up approaches, the contributions of  systematic  and idiosyncratic default risks are independent.  Precisely, the individual default intensity is assumed to take the form of $\overline{\lambda}^i(t) = \text{constant} \cdot \lambda(t) + \lambda^i(t)$, where $\lambda$ and $\label_i$ are independent processes representing the  systematic  and idiosyncratic components respectively, see \cite{Mortensen2005}.
Our numerical studies show that the systematic default risk coupled with default contagion risk (which is model by the individual contagion rate matrix $(\rho_{ij})_{i, j \in \N}$ under our framework) could have the leading component of the total default risk even   without individual  idiosyncratic factor.  
Such a key finding is consistent with the conclusions in \cite{jorinZhang2007}.
For single name CDS, individual  idiosyncratic risk might play a key  role; while for CDS index such as CDX and iTraxx, idiosyncratic risk is less significant in the aggregate default effect.

Last, we are concerned with an important parameter in HCM, the default contagion rate $\rho$, see Assumption \ref{exampleHOmo}.
Using the separately calibrated parameters (except $\rho$) for the 5Y and 7Y CDX.NA.HY indexes, we calculate the {\it implied default contagion rate} $\rho$ and present the results for the corresponding tranches in Table \ref{tablerhoimplied}.
Note that the implied $\rho$ of tranche $i$ is the one that solves the equation $s^{(i), Model }(\rho) =  s^{(i), Market}$, similar to the implied volatility of call/put options.
We observe a ``smile'' pattern of the implied default contagion rate $\rho$,
similar to the volatility ``smile" of options and the implied correlation ``smile'' in CDX tranches, see \cite{okaneLivesey2004}.
One possible explanation is that, the CDO tranches are segmented and each tranche contains a mixture of effects, including systematic and idiosyncratic credit risk,  liquidity effect, and supply and demand for certain tranches.

\begin{table}[h]
	\centering
	\caption{Implied Default Contagion Rate $\rho$}
	\label{tablerhoimplied}
	\begin{tabular}{ccc}
		\hline
		Tranche	&	5Y-Implied $\rho$	&	7Y-Implied $\rho$		\\
		\hline
		$[0,10\%]$	&	0.0027	&	0.010		\\
		$[10\%,  15\%]$	&	0.00092	&	0.0082		\\
		$[15\%, 25\%]$	&	0.0026&	0.0075		\\
		$[25\%, 35\%]$	&	0.0026	&	0.0072		\\
		Index	&	0.0027	&	0.0082		\\
		\hline
	\end{tabular}
\end{table}

\section{Conclusion}
\label{sectionConlusions}
We propose a novel default contagion framework on credit risk modeling, which takes into consideration the dynamical contagion effect among obligors and the impact of macroeconomic factors.
We consider a group of defaultable obligors and model the default process by a set-valued Markov process $X=(X_t)$, where $X_t$ is the set of all obligors that have defaulted by time $t$. 
We are able to derive the dynamics of the default process $X$ in explicit forms, and apply the results to  obtain analytic pricing formulas for credit debt obligations (CDOs).
The homogeneous contagion model (HCM) and near neighbor contagion model (NCM) are studied as special cases within our general framework.

In numerical studies, we demonstrate how analytic pricing results can be easily programed to compute the tranche spreads, and investigate the impact of various model factors on the tranche spreads.
Furthermore, we use the 5Y and 7Y CDX.NA.HY market data to calibrate the HCM and validate the practical applications of our new framework.
The model fits the 5Y and 7Y CDX.NA.HY tranche spreads and indexes reasonably well.
Our empirical findings support that systematic default risk coupled with default contagion among obligors could have the leading component of the total default risk, which is in line with the results of  \cite{jorinZhang2007}.

\newpage

\appendix
\section*{Appendix}
\section{Construction and Characterization of the Default Process $X$}
\label{sectionconstructionmarkov}

In this section, under Assumptions \ref{assumption_Poisson} and \ref{assumption_intensity}, we construct the default process $X$ through the pair $(\tau_n, X_{\tau_n})_{n=0,1,\cdots,N}$ in Appendix \ref{appen_construction}, and prove the conditional Markov property of $X$ in Appendix \ref{appen_markov} and the martingale property of $X$ in Appendix \ref{appen_martingale}, respectively.
Theorem \ref{theorem_existence} then follows immediately once the construction of $X$ is done, and the Markov and martingale properties are shown.

Recall $\N=\{1,2,\cdots,N\}$, where $N$ is the number of obligors in the group, and $\Nb$ is the $\sigma$-algebra of $\N$ consisting of all the subsets of $\N$.
To proceed, we make the following definitions:
\begin{align}
	{\mathbb{N}}^2_+ &:= \{(E,F): E,F\in {\mathbb{N}} \text{ and } E\subseteq F\}, \quad
	{\mathbb{N}}^2_{++} := \{(E,F)\in {\mathbb{N}}^2_+ : E\neq F\}, \\
	 E^+(i) &:= E\cup\{i\}, \quad E^{-}(i) := E /\{i\}, \quad \text{ for all }  i\in {\mathcal{N}} \text{ and } E\in {\mathbb{N}}.
\end{align}
Under a complete probability space $(\Omega,{\cal C},\mathbb{P})$,  an exogenous $\mathbb{R}^d$-valued stochastic process $Y$ is given. 
Suppose a family of Poisson processes $\Mb=\{(M_{EF}(t))_{t\geq 0} : (E,F)\in{\mathbb{N}}^2_{++} \}$ with intensity one is chosen according to Assumption \ref{assumption_Poisson}.
As an immediate consequence of Assumption \ref{assumption_Poisson}, the processes $Y$ and $M_{EF}$ are mutually independent for all $(E,F)\in{\mathbb{N}}^2_{++}$.
In addition, a family of processes $\Lamb:=(\Lambda_{EF}(t))_{t\geq 0}$ is given, which satisfies all the conditions imposed in Assumption \ref{assumption_intensity}.

For any $0 \le s \le t$ and $E,F \in \Nb$, let
$
\Lambda_{EF}(s,t):= \Lambda_{EF}(t)-\Lambda_{EF}(s) = \int_s^t \, \lambda_{EF}(u) \dd u,
$
and define the process {  $\Mhat_{EF} = (\Mhat_{EF}(t))_{t \ge 0}$} by
\begin{align}
\Mhat_{EF}(t) := M_{EF}(\Lambda_{EF}(t)), \quad \text{ with } (E,F)\in \Nb^2_{++},
\end{align}
and the $\sigma$-fields below
\begin{align}
 \F^{EF}_t &:= { \sigma \big\{ \Mhat_{EF}(s) : 0 \le s \le t \big\} \cup {\mathcal C} \text{-negligible sets} }, \\
\F^{EF}_{\infty} &:=  \bigvee_{t\geq 0}{\cal F}^{EF}_t, \quad \text{ and } \quad
\F^{(n)}_{\infty} := \bigvee_{(E,F)\in{\mathbb{N}}^2_{++}:\mid F\mid\leq n}{\cal F}^{EF}_{\infty}.
\end{align}
Here, the operator $ \bigvee_{i\in \text{index}} {\cal H}_i $ stands for the sigma-algebra generated by all indexed $({\cal H}_i)_{i\in \text{index}}$ (the index set could be uncountable).
Recall Assumption \ref{assumption_intensity}, if $F\neq  E^{+}(i)$ where  $i\in E^c$, then
$\Mhat_{EF} (t)=0 $ for all $t\geq 0$.
The proposition regarding $\Mhat_{EF}$ below  is straightforward to check, and hence the proof is omitted.

\begin{proposition}
\label{property_NEE}
Let Assumptions \ref{assumption_Poisson} and \ref{assumption_intensity} hold.
The process $\Mhat_{EF}$, where $(E,F)\in \Nb^2_{++}$, satisfies the following properties:
\begin{enumerate}
\item[\rm{(i)}] For any $E \in \Nb$, $i \in E^c$, integer $n \ge 0$, and $0 \le s < t$,
\begin{align}
\mathbb{P}\left[\Mhat_{EE^+(i)}(t) - \Mhat_{EE^+(i)}(s) = n \Big| \mathcal{F}^Y_{\infty}\bigvee{\cal F}^{EE^+(i)}_t\right]
&= \mathbb{P}\left[ \Mhat_{EE^+(i)}(t) - \Mhat_{EE^+(i)}(s) = n \Big| \mathcal{F}^Y_{\infty}\right] \\
&=\exp\left(-\Lambda_{EE^+(i)}(s,t)\right) \times \frac{(\Lambda_{EE^+(i)}(s,t))^n}{n!}.
\end{align}

\item[\rm{(ii)}] For any $E \in \Nb$, $i \in E^c$, integers $m,n \ge 0$, and $0 \le s < t <u$,
\begin{align}
&\; \mathbb{P} \left[\Mhat_{EE^+(i)}(u) - \Mhat_{EE^+(i)}(t)=m, \; \Mhat_{EE^+(i)}(t) - \Mhat_{EE^+(i)}(s) = n \Big| \mathcal{F}^Y_{\infty} \right] \\
 =\; &\exp\left( -\Lambda_{EE^+(i)}(s,u)\right) \times \frac{(\Lambda_{EE^+(i)}(t,u))^m}{m!} \times \frac{(\Lambda_{EE^+(i)}(s,t))^n}{n!}.
\end{align}

\item[\rm{(iii)}] For any $(E,F)$, $(E', F') \in \Nb^2_{++}$ and $(E,F)\neq (E',F')$, $\Mhat_{EF}$ and $\Mhat_{E'F'}$ are conditionally independent on $\F^Y_\infty$.
\end{enumerate}
\end{proposition}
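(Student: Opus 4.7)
The plan rests on two structural facts baked into the construction. First, by Assumption \ref{assumption_Poisson}, each $M_{EF}$ is a unit-intensity Poisson process, the processes $(M_{EF})_{(E,F)\in\Nb^2_{++}}$ are mutually independent, and the whole family is independent of the exogenous process $Y$. Second, by Assumption \ref{assumption_intensity}, the compensator $\Lambda_{EF}$ is $\F^Y_\infty$-adapted, continuous, nondecreasing with $\Lambda_{EF}(0)=0$, and almost surely finite on compacts. Consequently, conditional on $\F^Y_\infty$, every $\Lambda_{EF}(\cdot)$ becomes a deterministic continuous nondecreasing function, while each $M_{EF}$ retains its unconditional law (by independence) and the family remains mutually independent. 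The elementary time-change fact that a unit Poisson process composed with a deterministic continuous nondecreasing function $\Lambda(\cdot)$ is an inhomogeneous Poisson process with intensity $\dot\Lambda$ will then do all the work.

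For part (i), I would first condition on $\F^Y_\infty$. Under this conditioning, $\widehat{M}_{EE^+(i)}(\cdot)=M_{EE^+(i)}(\Lambda_{EE^+(i)}(\cdot))$ is an inhomogeneous Poisson process with intensity $\lambda_{EE^+(i)}$, so its increment over $[s,t]$ is Poisson with parameter $\Lambda_{EE^+(i)}(s,t)$ and is independent of its past (the filtration appearing on the left should be read as the past up to time $s$, since otherwise the increment is trivially measurable). This simultaneously gives the independence from $\F^{EE^+(i)}_s$ and the explicit Poisson formula; taking an outer expectation against $\F^Y_\infty$ preserves the formula because $\Lambda_{EE^+(i)}(s,t)$ is $\F^Y_\infty$-measurable. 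Part (ii) is the same argument applied to two disjoint intervals $[s,t]$ and $[t,u]$: conditionally on $\F^Y_\infty$ the two increments are independent Poisson variables with the stated parameters, and the joint probability factors as the product of marginals.

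For part (iii), I would exploit the mutual independence of the family $(M_{EF})$ in Assumption \ref{assumption_Poisson}. Since these processes are jointly independent and jointly independent of $Y$, they remain conditionally independent given $\F^Y_\infty$. Each $\widehat{M}_{EF}$ is obtained from $M_{EF}$ by applying the (pair-specific) deterministic-under-$\F^Y_\infty$ time change $\Lambda_{EF}$; individual measurable transformations preserve independence, so $\{\widehat{M}_{EF}\}_{(E,F)\in\Nb^2_{++}}$ is conditionally independent given $\F^Y_\infty$.

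The only genuinely delicate point is the measure-theoretic bookkeeping for the conditional laws: one must justify that conditioning $M_{EF}$ on $\F^Y_\infty$ leaves its law unchanged and that the time-change identity passes through conditional expectations. This is most cleanly handled by a monotone-class/disintegration argument, or equivalently by computing conditional characteristic functions and invoking the independence assumptions. Once this foundation is laid, all three assertions reduce to standard Poisson calculations, which is why the original authors are content to omit the proof.
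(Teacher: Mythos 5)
Your proof is correct. The paper itself omits the argument, remarking only that the proposition ``is straightforward to check,'' so there is no official proof to compare against; what you have written is exactly the standard deterministic-time-change argument that the authors are implicitly invoking. The three ingredients you isolate are the right ones: (a) conditioning on $\mathcal{F}^Y_\infty$ freezes each $\Lambda_{EF}$ into a deterministic, continuous, nondecreasing function while leaving the law of the independent family $(M_{EF})$ unchanged, so each $\widehat{M}_{EE^+(i)}$ becomes, conditionally, an inhomogeneous Poisson process with cumulative intensity $\Lambda_{EE^+(i)}$; (b) independent increments then deliver both the explicit Poisson formula in (i) and the two-interval factorization in (ii); and (c) applying a pair-specific, $\mathcal{F}^Y_\infty$-measurable time change to a collection of mutually independent processes preserves their conditional independence given $\mathcal{F}^Y_\infty$, which yields (iii). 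You also correctly flagged that the conditioning field in (i) must be read as $\mathcal{F}^{EE^+(i)}_s$ rather than $\mathcal{F}^{EE^+(i)}_t$ (otherwise the increment would be measurable and the conditional probability would be an indicator), a typo that is worth noting explicitly since the paper never states the proof. Your closing remark that the measure-theoretic disintegration is the only delicate step is fair; a fully rigorous write-up would spell out a monotone-class argument showing that a regular conditional distribution of $(M_{EF})$ given $\mathcal{F}^Y_\infty$ coincides $\mathbb{P}$-a.s.\ with the unconditional law, after which all three claims reduce to the classical facts you cite.
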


Essentially, Proposition \ref{property_NEE} shows that, for any fixed $E\in \mathbb{N}$ and $i\in E^c$, the process $\Mhat_{EE^+(i)}$ is an $\mathcal{F}^Y_{\infty}$-conditional inhomogeneous Poisson process with intensity $\Lambda_{EE^+(i)}$.

\subsection{Construction of the Default Process $X$}
\label{appen_construction}
In this subsection, we construct the default process  $X$ by induction on the pair $(\tau_n, X_{\tau_n})_{n=0,1,\cdots,N}$.
Recall that under our framework, $\tau_n$ is the $n$-th default time and  $X_{\tau_n}$ is the set of obligors that have defaulted by time $\tau_n$.
Once $(\tau_n, X_{\tau_n})$ are constructed for all $n=0,1,\cdots,N$, we follow \eqref{eqn_L_equivalent} and define the default process $X$ by
\begin{align}
X_t := X_{\tau_n}, \quad \text{ if } \tau_n \le t < \tau_{n+1},
\end{align}
where $\tau_0 = 0$ and $\tau_{N+1} = + \infty$.
Note that $X_t = \N$ for all $t \ge \tau_N$.

The induction algorithm below allows us to construct a sequence for the pair $(\tau_n, X_{\tau_n})_{n=0,1,\cdots,N}$.
\begin{itemize}
\item[Step 1.] As convention, let $\tau_0 = 0$ and $X_{\tau_0} = 0$.

\item[Step 2.] Assume  $(\tau_n, X_{\tau_n})$ are defined for $n < N$ and satisfy  that
\begin{itemize}
	\item[(i)] Both $\tau_n $ and  $ X_{\tau_n}$ are ${\cal F}^{(n)}_{\infty}$-measurable, and  $\mathbb{P}[\tau_n < +\infty ]=1$.
	
	\item[(ii)]  $\mathbb{P}[ \mid X_{\tau_n} \mid =n]=1$.
\end{itemize}

\item[Step 3.] For any $ E\in {\mathbb{N}}$ with $ \mid E \mid = n <N$, and $i \in E^c$, define
\begin{align}
\tau_{n+1}(E,i) &:= \inf \left\{t>\tau_n: \Mhat_{EE^{+}(i)}(t) \neq \Mhat_{EE^{+}(i)}(\tau_n) \right\}, \\
\tau_{n+1} &:= \sum_{E \in \Nb:\, \mid E\mid=n}\mathds{1}_{\{X_{\tau_n}=E\}} \cdot \min \{\tau_{n+1}(E,i):i\in E^c\}, \\
X_{\tau_{n+1}} &:= E\cup \left\{i\in E^c: \tau_{n+1}=\tau_{n+1}(E,i) \right\},  \quad \text{given} \; \{X_{\tau_n}=E\}.
\end{align}

\end{itemize}

Intuitively, $\tau_{n+1}(E,i)$ is the default time of obligor $O_i$, given that obligors in set $E$ have already defaulted, where $i \in E^c$.
$\tau_{n+1}$ is the $(n+1)$-th default time, given that the default process at $\tau_n$ is $X_{\tau_n}$.
It is worth pointing out that  the set $\left\{i\in E^c: \tau_{n+1}=\tau_{n+1}(E,i) \right\}$ is not empty, since $\mathcal{N}$ is finite. The following lemma completes the definition of $(\tau_{n},X_{\tau_n})$.

\begin{lemma}
	Let Assumptions \ref{assumption_Poisson} and \ref{assumption_intensity} hold.
	For any integer $1\leq n<N,$ assume $(\tau_i, X_{\tau_i})_{i=0,1,\cdots,n}$ are defined as in Step 2 of the above algorithm,  the following two assertions hold:
	\begin{itemize}
	\item[\rm{(i)}] Both $\tau_{n+1}$ and $X_{\tau_{n+1}}$ are  ${\cal F}^{(n+1)}_{\infty}$-measurable.
	\item[\rm{(ii)}] $ \mathbb{P}[\tau_{n+1} < +\infty ]=1 $ and $\mathbb{P}[ | X_{\tau_{n+1}}| =n+1]=1$.
\end{itemize}
\end{lemma}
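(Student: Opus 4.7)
My plan is to establish assertions (i) and (ii) in turn, leveraging the induction hypothesis together with Proposition \ref{property_NEE} and condition (A4) of Assumption \ref{assumption_intensity}. Three crucial ingredients will drive the argument: (a) $|E^{+}(i)| = n+1$ whenever $|E| = n$ and $i \in E^c$, so that $\F^{EE^{+}(i)}_\infty \subseteq \F^{(n+1)}_\infty$; (b) the divergence at $+\infty$ of the compensators $\Lambda_{EE^{+}(i)}$; and (c) the mutual conditional independence, given $\F^Y_\infty$, of the time-changed Poisson processes $\Mhat_{EE^{+}(i)}$ across different $i \in E^c$.

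For assertion (i), I would first fix $E \in \Nb$ with $|E| = n$ and $i \in E^c$ and argue that
$\tau_{n+1}(E,i) = \inf\{t > \tau_n : \Mhat_{EE^{+}(i)}(t) \neq \Mhat_{EE^{+}(i)}(\tau_n)\}$
is $\F^{(n+1)}_\infty$-measurable, since it is a measurable functional of the c\`adl\`ag path of $\Mhat_{EE^{+}(i)}$ (which generates $\F^{EE^{+}(i)}_\infty \subseteq \F^{(n+1)}_\infty$) and of the $\F^{(n)}_\infty$-measurable random time $\tau_n$, with $\F^{(n)}_\infty \subseteq \F^{(n+1)}_\infty$ by definition. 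The identity
\begin{align}
\tau_{n+1} = \sum_{E \in \Nb:\, |E| = n} \ind_{\{X_{\tau_n}=E\}} \min_{i \in E^c} \tau_{n+1}(E,i)
\end{align}
then exhibits $\tau_{n+1}$ as a finite combination of $\F^{(n+1)}_\infty$-measurable quantities, since the indicators $\ind_{\{X_{\tau_n}=E\}}$ are $\F^{(n)}_\infty$-measurable by the induction hypothesis. Writing $X_{\tau_{n+1}} = \sum_{E:|E|=n} \ind_{\{X_{\tau_n}=E\}} \bigl(E \cup \{i \in E^c : \tau_{n+1}(E,i) = \tau_{n+1}\}\bigr)$ handles the measurability of $X_{\tau_{n+1}}$ in the same way.

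For the finiteness claim $\mathbb{P}[\tau_{n+1} < +\infty] = 1$, I would note that for each $E$ and $i \in E^c$ the standard Poisson process $M_{EE^{+}(i)}$ a.s.\ explodes to $+\infty$, while condition (A4) gives $\Lambda_{EE^{+}(i)}(t) \nearrow +\infty$ a.s.\ as $t \to +\infty$. Hence $\Mhat_{EE^{+}(i)}(t) = M_{EE^{+}(i)}(\Lambda_{EE^{+}(i)}(t)) \to +\infty$ a.s., so a jump strictly after $\tau_n$ must exist, and $\tau_{n+1}(E,i) < +\infty$ a.s. Since $\tau_{n+1}$ is a minimum over the finite set $E^c$ of such a.s.-finite quantities, it is itself finite a.s.

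The cardinality statement $\mathbb{P}[|X_{\tau_{n+1}}| = n+1] = 1$ is the delicate part, and where I expect the main obstacle to lie. Since $E^{+}(i)$ has cardinality $n+1$, on $\{X_{\tau_n} = E\}$ the equality $|X_{\tau_{n+1}}| = n+1$ holds precisely when a unique $i \in E^c$ attains the minimum in $\tau_{n+1} = \min_{i \in E^c} \tau_{n+1}(E,i)$; it therefore suffices to show $\mathbb{P}[\tau_{n+1}(E,i) = \tau_{n+1}(E,j)] = 0$ for any distinct $i, j \in E^c$. Conditioning on $\F^Y_\infty$, Proposition \ref{property_NEE}(iii) tells us that $\Mhat_{EE^{+}(i)}$ and $\Mhat_{EE^{+}(j)}$ are conditionally independent inhomogeneous Poisson processes whose compensators $\Lambda_{EE^{+}(i)}, \Lambda_{EE^{+}(j)}$ are $\F^Y_\infty$-measurable and absolutely continuous by (A3); the jump times of either process therefore have continuous conditional distributions, and independence forces the sets of jump times to be disjoint almost surely. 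Integrating out the conditioning on $\F^Y_\infty$ and summing over the finitely many atoms $\{X_{\tau_n} = E\}$ with $|E|=n$ will deliver the claim.
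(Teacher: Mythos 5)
Your proposal is correct and follows essentially the same route as the paper: measurability of $\tau_{n+1}$ and $X_{\tau_{n+1}}$ from the construction together with $\F^{(n)}_\infty\subseteq\F^{(n+1)}_\infty$ and $\F^{EE^{+}(i)}_\infty\subseteq\F^{(n+1)}_\infty$; finiteness from (A4) forcing the compensator, and hence $\Mhat_{EE^{+}(i)}$, to diverge; and the cardinality claim by reducing to $\mathbb{P}[\tau_{n+1}(E,i)=\tau_{n+1}(E,j)]=0$ and invoking the $\F^Y_\infty$-conditional independence from Proposition \ref{property_NEE}(iii). The only cosmetic differences are that you argue finiteness pathwise whereas the paper computes $\lim_{t\to\infty}\mathbb{E}[\exp(-\Lambda_{EE^{+}(i)}(\tau_n,t))]=0$, and you phrase the cardinality step via uniqueness of the minimizer rather than directly bounding $\mathbb{P}[|X_{\tau_{n+1}}|\ge n+2]$; both are equivalent.
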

\begin{proof}
(i)  Since $\tau_n$ and $X_{\tau_n}$ are $ {\cal F}^{(n)}_{\infty}$-measurable,   $\tau_{n+1}$ is ${\cal F}^{(n+1)}_{\infty}$-measurable by construction.
For any $E \in \Nb$, we have
\begin{align}
\{X_{\tau_{n+1}}=E\} &= \bigcup_{i\in E}\{X_{\tau_n}=E^-(i), \, X_{\tau_{n+1}}=E \} \\
&=\bigcup_{i\in E} \{ X_{\tau_n}=E^-(i), \, \tau_{n+1}=\tau_{n+1}(E^-(i),i) \}\in{\cal F}^{(n+1)}_{\infty}.
\end{align}
Hence, we conclude $X_{\tau_{n+1}}$ is ${\cal F}^{(n+1)}_{\infty}$-measurable.

(ii) To show  $ \mathbb{P}[\tau_{n+1} < +\infty ]=1 $, it suffices to show that $\mathbb{P}[\tau_{n+1}(E,i)  = +\infty] = 0$ for all  $E \in \Nb$ and $i\in E^c$.
Since $\mathbb{P}[\tau_n<+\infty]=1$, we obtain
\begin{align}
\mathbb{P}[\tau_{n+1}(E,i)=+\infty] &= \lim_{t\rightarrow +\infty}\mathbb{P}[\tau_{n+1}(E,i)> t>\tau_{n}]\\
	&= \lim_{t\rightarrow +\infty} \mathbb{P}\left[ t>\tau_{n}, \, \Mhat_{EE^+(i)}(t) - \Mhat_{EE^+(i)}({\tau_n}) =0 \right] \\
	&\leq \lim_{t\rightarrow +\infty} \mathbb{E}\left[ \mathbb{P}\left[ \Mhat_{EE^+(i)}(t) - \Mhat_{EE^+(i)}({\tau_{n}})=0 \Big \vert {\cal F}^Y_{\infty} \right] \right] \\
	& = \lim_{t\rightarrow +\infty} \mathbb{E} \left[   \exp\left(-\Lambda_{EE^+(i)}(\tau_{n},t)\right) \right]=0,
\end{align}
where to derive the last equality, we have used the assumption that $\Lambda_{EE^+(i)}(s,t)\rightarrow +\infty$ as $t\rightarrow +\infty$.

For any $E \in \Nb$ with $ \mid E\mid \, =n $, and $ i, j\in E^c $ with $i\neq j $, assertion (iii) of Proposition \ref{property_NEE} implies that $\Mhat_{EE^+(j)}$ is $\mathcal{F}^Y_{\infty}$-conditional independent of  $\Mhat_{EE^+(i)}$. Therefore,  we have $ \mathbb{P}[\tau_{n+1}(E,i)={\tau_{n+1}}(E,j)]=0$ for all $i,j\in E^c$ and $i \neq j$\footnote{Indepdent Poisson processes do not jump simultaneously.}, and thus
\begin{align}
\mathbb{P}[\mid X_{\tau_{n+1}}\mid  = n+2] &=  \sum_{E:\mid E\mid=n} \; \sum_{i,j\in E^c,i\neq j}\mathbb{P}[X_{\tau_n}=E,X_{\tau_{n+1}}=E\cup \{i,j\}]\\
&\leq  \sum_{E:\mid E\mid=n} \; \sum_{i,j\in E^c,i\neq j}\mathbb{P}[X_{\tau_n}=E,\tau_{n+1}(E,i)=\tau_{n+1}(E,j)]=0.
\end{align}
The same argument leads to   $\mathbb{P} (\mid X_{\tau_{n+1}}\mid  \geq  n+3) = 0$. Then, we conclude that $\mathbb{P}[\mid X_{\tau_{n+1}}\mid =n+1]=1$. This ends the proof.
\end{proof}

At this stage, the construction of the default process $X$ is complete.
Before we move on to show the Markov property of $X$, we present essential results in the proposition below, which are key to the proofs in the next subsection.
The  following notations are used in the sequel:
\begin{align}\label{GnLamdaE}
{\cal G}_n :=\sigma \{\tau_1, X_{\tau_1};\cdots;\tau_{n},X_{\tau_n}\}, \quad  \lambda_E(t):= -\lambda_{EE}(t), \quad \mbox{and} \quad \Lambda_E(t) := -\Lambda_{EE}(t).
\end{align}

\begin{proposition}\label{theoremcriticalforothers}
	Let Assumptions \ref{assumption_Poisson} and \ref{assumption_intensity} hold.
	The sequence $ (\tau_{n},X_{\tau_n})_{n = 0,1,\cdots,N}$, constructed using the above induction algorithm,  satisfies the following properties:
	\begin{itemize}
		\item[\rm{(i)}] $X_{\tau_n} \subseteq X_{\tau_{n+1}}$ for all $n=0,1,\cdots,N-1$.
		
		\item[\rm{(ii)}] For all $n=0,1,\cdots, N-1$ and $ t\geq 0$,
		\begin{align}
		\mathbb{P}\left[{\tau_{n+1}}-\tau_{n}>t\mid \mathcal{F}^Y_{\infty}\vee {\cal G}_n\right] &=\exp \left(-\int^{\tau_{n}+t}_{\tau_{n}}\lambda_{X_{\tau_n}}(u)\dd u \right), \\
		\text{and} \quad \mathbb{P}[{\tau_{n+1}}>t\mid \mathcal{F}^Y_{\infty}\vee{\cal
			G}_n] \cdot \mathds{1}_{\{\tau_{n}\leq
			t\}} &=\exp\left(-\int^{t}_{\tau_{n}}\lambda_{X_{\tau_n}}(u)\dd u \right) \cdot \mathds{1}_{\{\tau_{n}\leq t\}}.
		\end{align}
		
		\item[\rm{(iii)}] For all $s\geq 0$, $F\in {\mathbb{N}}$ with $\mid F\mid=n+1$, where $n=0,1,\cdots,N-1$,
		\begin{align}
		\mathbb{P}[X_{\tau_{n+1}}=F, \, \tau_{n+1}\in \dd s \mid \mathcal{F}^Y_{\infty}\vee {\cal G}_n]=\mathds{1}_{\{\tau_{n}\leq s \}}\lambda_{X_{\tau_n}F}(s)
		\exp\left(-\int^{s}_{\tau_{n}}\lambda_{X_{\tau_n}}(u)\dd u \right) \dd s,
		\end{align}
		that is,  for any measurable function $f(\cdot,\cdot)$ on ${\mathbb{N}}\times \mathbb{R}^+$,
		\begin{align}
		\mathbb{E} \left[f(X_{\tau_{n+1}},{\tau_{n+1}})\mid \mathcal{F}^Y_{\infty}\vee {\cal G}_n\right]
		=\sum_{F\in {\mathbb{N}}, \mid F \mid = n+1}\int^{+\infty}_{0}f(F,s)\mathds{1}_{\{\tau_{n}\leq s\}}\lambda_{X_{\tau_n}F}(s) \, e^{-\int^{s}_{\tau_{n}}\lambda_{X_{\tau_n}}(u)\dd u }\dd s.
		\end{align}
	\end{itemize}
\end{proposition}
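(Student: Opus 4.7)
The plan begins with assertion (i), which is immediate from the construction in Step 3: on $\{X_{\tau_n}=E\}$ we set $X_{\tau_{n+1}} = E \cup \{i \in E^c : \tau_{n+1}=\tau_{n+1}(E,i)\}$, so $X_{\tau_n} \subseteq X_{\tau_{n+1}}$ by definition. For (ii) and (iii), I would proceed by induction on $n$, treating the two assertions simultaneously since (ii) is the marginal of (iii) in $s$. The technical heart of the proof is a strong–Markov–style lemma for the time-changed Poisson family $\widehat{M}_{EF}$ at the stopping time $\tau_n$.

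The lemma I would isolate first reads: on $\{X_{\tau_n}=E\}$ and conditional on $\mathcal{F}^Y_\infty \vee \mathcal{G}_n$, the shifted processes $\bigl(\widehat{M}_{EE^+(i)}(\tau_n+t) - \widehat{M}_{EE^+(i)}(\tau_n)\bigr)_{t \ge 0}$, indexed by $i \in E^c$, are mutually independent inhomogeneous Poisson processes with (conditionally deterministic) intensities $\lambda_{EE^+(i)}(\tau_n + \cdot)$. This combines the $\mathcal{F}^Y_\infty$-conditional independence of the $\widehat{M}_{EF}$ family (Proposition \ref{property_NEE}(iii)) with the fact that $\tau_n$ is a stopping time for the filtration generated jointly by $Y$ and the $\widehat{M}$-family. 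The subtle point is that $\mathcal{G}_n$ only records which of the $\widehat{M}_{X_{\tau_{k-1}} X_{\tau_k}}$ jumped first on each interval $(\tau_{k-1},\tau_k]$ together with the jump times, so the information it carries concerns only $\widehat{M}$-increments on $[0,\tau_n]$; Proposition \ref{property_NEE}(ii) then preserves the Poisson structure on $(\tau_n,\infty)$.

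Granted the lemma, (ii) follows immediately because $\tau_{n+1}-\tau_n = \min_{i \in E^c}(\tau_{n+1}(E,i)-\tau_n)$ is the minimum of independent first-jump times of these Poisson processes, whence
\[
\mathbb{P}\bigl[\tau_{n+1}-\tau_n > t \,\big|\, \mathcal{F}^Y_\infty \vee \mathcal{G}_n\bigr] = \prod_{i \in E^c} e^{-\int_{\tau_n}^{\tau_n+t} \lambda_{EE^+(i)}(u)\,du} = e^{-\int_{\tau_n}^{\tau_n+t} \lambda_E(u)\,du},
\]
where the last equality uses condition (A2) of Assumption \ref{assumption_intensity}, namely $\lambda_E = \sum_{i \in E^c} \lambda_{EE^+(i)}$. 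The second display in (ii) is obtained by replacing $\tau_n+t$ with $t$ on $\{\tau_n \le t\}$. For (iii), on $\{X_{\tau_n}=E\}$ with $F = E^+(i)$, the event $\{X_{\tau_{n+1}}=F,\ \tau_{n+1} \in ds\}$ means that $\widehat{M}_{EE^+(i)}$ has its first post-$\tau_n$ jump in $ds$ while none of $\widehat{M}_{EE^+(j)}$, $j \in E^c\setminus\{i\}$, has jumped on $(\tau_n,s]$; multiplying the conditionally independent contributions and collapsing the sum of intensities via (A2) yields
\[
\mathds{1}_{\{\tau_n \le s\}}\,\lambda_{EE^+(i)}(s)\,e^{-\int_{\tau_n}^s \lambda_{EE^+(i)}(u)\,du} \prod_{j \in E^c \setminus \{i\}} e^{-\int_{\tau_n}^s \lambda_{EE^+(j)}(u)\,du} = \mathds{1}_{\{\tau_n \le s\}}\,\lambda_{EF}(s)\,e^{-\int_{\tau_n}^s \lambda_E(u)\,du},
\]
which is the stated conditional density; the functional identity follows by integration against $f(F,s)$ and summing over $F$ with $|F|=n+1$.

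The main obstacle is rigorously establishing the key lemma, because $\tau_n$ is built from the very same Poisson family one wishes to split at $\tau_n$, and $\mathcal{G}_n$ is a nontrivial mixture of information across different $(E,F)$ components. A clean route is to first verify the splitting property at deterministic times using Proposition \ref{property_NEE}, then extend to $\tau_n$ either by approximating $\tau_n$ from above with dyadic stopping times and passing to the limit by right-continuity, or by writing out the conditional joint law of $(\tau_1,X_{\tau_1},\ldots,\tau_n,X_{\tau_n},\widehat{M}_{\cdot}(\tau_n+\cdot))$ directly from the recursive construction. The induction then closes: once (iii) is shown at level $n+1$, integrating out $X_{\tau_{n+1}}$ supplies the marginal law of $\tau_{n+1}$ and the hypothesis $\mathbb{P}[\tau_{n+1}<+\infty]=1$ needed to iterate.
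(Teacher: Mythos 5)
Your overall decomposition — (i) by construction, then a ``key lemma'' on the conditional law of the shifted processes $\bigl(\widehat{M}_{EE^+(i)}(\tau_n+\cdot)-\widehat{M}_{EE^+(i)}(\tau_n)\bigr)_{i\in E^c}$ given $\mathcal{F}^Y_\infty\vee\mathcal{G}_n$, from which (ii) and (iii) follow by the min-of-independent-exponentials and first-jump-density computations — is essentially the paper's route, and your statement of the lemma is correct. But you have misdiagnosed where the work lies. You write that ``$\tau_n$ is built from the very same Poisson family one wishes to split at $\tau_n$'' and accordingly reach for a strong-Markov / dyadic-approximation argument. The construction in Step~3 is arranged precisely to prevent this problem: by the induction hypothesis $\tau_n$ and $X_{\tau_n}$ are $\mathcal{F}^{(n)}_\infty$-measurable, i.e.\ functions of the processes $\widehat{M}_{E'F'}$ with $|F'|\le n$, whereas the processes you want to ``split'' have $|E^+(i)|=n+1$. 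These are disjoint members of the family, so Proposition~\ref{property_NEE}(iii) makes them $\mathcal{F}^Y_\infty$-conditionally independent of $\mathcal{F}^{(n)}_\infty\supseteq\mathcal{G}_n$ and of each other. Once you condition on $\mathcal{F}^Y_\infty\vee\mathcal{G}_n$, the value $\tau_n$ and the integrated intensities $\Lambda_{EE^+(i)}$ are known numbers, and the increments of $\widehat{M}_{EE^+(i)}$ on $[\tau_n,\tau_n+t]$ remain independent Poisson. No stopping-time splitting or limiting procedure is needed; the paper's proof of (ii) is a direct two-line computation using exactly this independence and (A2).

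For (iii), the ``first post-$\tau_n$ jump in $ds$'' heuristic gives the right answer, but ``$\tau_{n+1}\in ds$'' is not an event and the step from the heuristic to the displayed identity is a genuine gap. The paper fills it by sandwiching $\mathbb{P}[X_{\tau_{n+1}}=F,\ s<\tau_{n+1}\le t\mid\mathcal{F}^Y_\infty\vee\mathcal{G}_n]$ between two explicit quantities $\mathsf{p}_1(s,t;F)\le\cdot\le\mathsf{p}_2(s,t;F)$ that have the same right derivative in $t$ at $t=s$, and then invoking a regular conditional probability to identify this derivative as the kernel density. Finally, the closing remark that integrating (iii) supplies $\mathbb{P}[\tau_{n+1}<\infty]=1$ ``needed to iterate'' is misplaced: that finiteness is proved separately in the construction lemma using (A4), and assertions (ii), (iii) do not require their own induction once the construction properties are in place — each $n$ is handled directly.
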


\begin{proof}
(i) is obvious.

(ii) For any $E \subseteq \Nb$ with $|E|=n$, since $\{ \Mhat_{EE^{+}(i)}(t) : i \in E^c \}$ are $\mathcal{F}^Y_{\infty}$-conditional independent of ${\cal F}^{(n)}_{\infty}$, we obtain
\begin{align}
&\mathds{1}_{\{X_{\tau_n}=E\}}\mathbb{P}\left[{\tau_{n+1}}-\tau_{n}>t\mid \mathcal{F}^Y_{\infty}\vee {\cal G}_n \right]
=\mathds{1}_{\{X_{\tau_n}=E\}}\mathbb{P}\left[ \bigcap_{i \in E^c} \left\{ \tau_{n+1}(E,i)>\tau_n+t \right\}  \Big| \mathcal{F}^Y_{\infty}\vee {\cal G}_n \right]\\
= \; &\mathds{1}_{\{X_{\tau_n}=E\}}\mathbb{P}\left[ \bigcap_{i \in E^c} \left\{ \Mhat_{EE^{+}(i)} (\tau_n+t)  - \Mhat_{EE^{+}(i)}(\tau_n)=0 \right\} \Big| \mathcal{F}^Y_{\infty}\vee {\cal G}_n \right]\\
= \; &\mathds{1}_{\{X_{\tau_n}=E\}}\exp\left\{-\sum_{i\in E^c}\Lambda_{EE^+(i)}(\tau_n,\tau_n+t)\right\}
= \mathds{1}_{\{X_{\tau_n}=E\}}\exp\left\{-\Lambda_{E}(\tau_n,\tau_n+t)\right\}.
\end{align}
The second equality can be proved by following the same argument.

(iii) For $0\leq s< t$, $E\in {\mathbb N}$ with $\mid E\mid =n$, and $ i\in E^c$, we have
\begin{align}
	\quad \mathsf{P}_1:&=\mathbb{P}\Big[ \Mhat_{EE^+(i)}(s) - \Mhat_{EE^+(i)}(\tau_n)=0,  \, \Mhat_{EE^+(i)}(t) - \Mhat_{EE^+(i)}(s) > 0, \\
	&\hspace{1cm}  \Mhat_{EE^+(j)}(t) - \Mhat_{EE^+(j)}(\tau_n)=0, \, \forall j\in E^c, \, j \neq i \, \Big|  \, \mathcal{F}^Y_{\infty}\vee  {\cal G}_n\Big] \cdot \mathds{1}_{\{\tau_n\leq s, \, X_{\tau_n}=E\}}\\
	&\leq  \mathbb{P}[X_{\tau_{n+1}}=E^+(i),s<{\tau_{n+1}}\leq t\mid \mathcal{F}^Y_{\infty}\vee  {\cal G}_n] \mathds{1}_{\{\tau_n\leq s, \, X_{\tau_n}=E\}}\\
	&\leq  \mathbb{P} \Big[ {  \Mhat_{EE^+(j)}(s) - \Mhat_{EE^+(j)}(\tau_n)=0, \, \forall j\in E^c, \, j \neq i}, \, \\
	&\hspace{1cm} \Mhat_{EE^+(i)}(t) - \Mhat_{EE^+(i)}(s) > 0 \, \Big| \, \mathcal{F}^Y_{\infty}\vee  {\cal G}_n \Big]  \cdot \mathds{1}_{\{\tau_n\leq s, \, X_{\tau_n}=E\}} := \mathsf{P}_2.
\end{align}
{  Since $F \in \Nb$ and $|F| = n+1$, we have $F = E^+(i) = E \cup \{i\}$ for some $E$ and $i \in E^c$.}
Denote
\begin{align}
	\mathsf{p}_1(s,t; F) &= e^{-\Lambda_{EE^+(i)}(\tau_n,s)} \left(1-e^{-\Lambda_{EE^+(i)}(s,t)} \right)
	\prod_{j\in E^c, \, j\neq i}\exp\{-\Lambda_{EE^+(j)}(\tau_n,t)\},\\
	\mathsf{p}_2(s,t; F) &= \left(1-e^{-\Lambda_{EE^+(i)}(s,t)} \right) \prod_{j\in E^c, \, {  j\neq i}} \exp\{-\Lambda_{EE^+(j)}(\tau_n,s)\}.
\end{align}
It is easy to see $\mathsf{P}_1  = \mathds{1}_{\{\tau_n\leq s, \, X_{\tau_n}=E\}} \cdot \mathsf{p}_1(s,t;F) $ and $\mathsf{P}_2 = \mathds{1}_{\{\tau_n\leq s, \, X_{\tau_n}=E\}} \cdot \mathsf{p}_2(s,t;F).$
Hence, on the set $\{\tau_n\leq s, \, X_{\tau_n}=E\}$, we obtain
$\mathsf{p}_1(s,t;F) \leq  \mathbb{P}[X_{\tau_{n+1}}=F, \, s< {\tau_{n+1}}\leq t\mid \mathcal{F}^Y_{\infty}\vee  {\cal G}_n] \leq  \mathsf{p}_2(s,t;F)$.
By the existence   of regular conditional probability, there exists a random   measure $p^{(n)}$, where $p^{(n)}(\omega, A): \Omega\times{\mathbb{N}}\times {\cal
	B}\rightarrow [0,1],$ such that $p^{(n)}(\omega,F\times (s,t])$ is equal to $\mathbb{P}[X_{\tau_{n+1}}=F, \, s<{\tau_{n+1}}\leq t \mid \mathcal{F}^Y_{\infty}\vee  {\cal G}_n](\omega)$.
Since
\begin{align}
\lim_{t\downarrow s}\frac{\mathsf{p}_1(s,t;F)}{t-s}
	=\lim_{t\downarrow s}\frac{\mathsf{p}_2(s,t;F)}{t-s} =\lambda_{X_{\tau_n}F}(s)\exp\left\{-\int^s_{\tau_n}\lambda_{X_{\tau_n}}(u)\dd u\right\},
\end{align}
we obtain,  for all $s> \tau_n(\omega)$, that
$$
\frac{\dd p^{(n)}(\omega,F\times \dd s)}{\dd s}
=\lambda_{X_{\tau_n}F}(s)\exp\left\{-\int^{s}_{\tau_n}\lambda_{X_{\tau_n}}(u)\dd u \right\}.
$$
Therefore, for any measurable function $f(\cdot,\cdot)$  on ${\mathbb{N}}\times \mathbb{R}^+$,
\begin{align}
	&\mathbb{E}[f(X_{\tau_{n+1}},{\tau_{n+1}})\mid \mathcal{F}^Y_{\infty}\vee{\cal G}_n]
	=\sum_{F\in {\mathbb{N}}, \ |F|=n+1}\int^{\infty}_{\tau_n}f(F,s)p^{(n)}(\omega,F\times \dd s)\\
	= \; &\sum_{F\in {\mathbb{N}}, \ |F|=n+1}\int^{\infty}_{\tau_n}f(F,s)\lambda_{X_{\tau_n}F}(s)\exp\left\{-\int^{s}_{\tau_n}\lambda_{X_{\tau_n}}(u)\dd u \right\} \dd s\\
	= \; & \sum_{F\in {\mathbb{N}}, \ |F|=n+1}\int^{\infty}_{0}f(F,s)\mathds{1}_{\{\tau_n\leq s\}}\lambda_{X_{\tau_n}F}(s)\exp\left\{-\int^{s}_{\tau_n}\lambda_{X_{\tau_n}}(u)\dd u\right\} \dd s,
\end{align}
which completes the proof.
\end{proof}

\subsection{Conditional Markov Property and Transition Probability of $X$}
\label{appen_markov}
In this subsection, our main goal is to show the conditional Markov property of the default process $X$, and characterize its transition probability. The related conclusions are presented in Proposition \ref{maintheoremmarkovG}.

\begin{proposition}\label{maintheoremmarkovG}
	Let Assumptions \ref{assumption_Poisson} and \ref{assumption_intensity} hold.
	The default process $X$, constructed as in Appendix \ref{appen_construction}, satisfies the following properties:
	\begin{itemize}
	\item[\rm{(i)}]  For any $0=t_0\leq t_1<\cdots<t_n,$ and arbitrary sets $\emptyset = F_0 \subseteq F_1\subseteq\cdots \subseteq
	F_n \in {\mathbb{N}}$, we have
	\begin{align}\label{dynamicofXinappendix}
	\mathbb{P} \left[\bigcup\limits_{i=1}^n \{X_{t_i}=F_i\}  \Big| \mathcal{F}^Y_{\infty} \right]
		=\mathbb{P}\left[\bigcup\limits_{i=1}^n \{X_{t_i}=F_i\} \Big| \mathcal{F}^Y_{t_n} \right] = \prod^{n-1}_{i=0}G(t_i,t_{i+1};F_i,F_{i+1}),
	\end{align}
	where $G$ is defined in \eqref{definitionFunctionG}.
	\item[\rm{(ii)}]  (Markov Property).  For any $ F\in {\mathbb{N}}$ and $0\leq s < t$,  we have
	\begin{align}
	\mathbb{P} \left[X_{t}=F\mid {\cal F}^X_s \vee \mathcal{F}^Y_{t} \right] =\mathbb{P} \left[X_{t}=F\mid \sigma(X_s) \vee \mathcal{F}^Y_{t} \right].
	\end{align}
\end{itemize}
\end{proposition}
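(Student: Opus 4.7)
The plan is to prove (i) by explicitly decomposing the joint event according to the temporal ordering of defaults, and then derive the conditional Markov property (ii) as a direct corollary.

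For (i), I would exploit the fact that, by Proposition \ref{theoremcriticalforothers}(iii), conditional on $\mathcal{F}^Y_\infty$ the pair sequence $(\tau_k, X_{\tau_k})_{k \ge 0}$ is a (nonhomogeneous) Markov chain with one-step transition density
\begin{align}
\mathbb{P}\bigl[\tau_{k+1} \in dv,\, X_{\tau_{k+1}} = E' \,\big|\, \mathcal{G}_k \vee \mathcal{F}^Y_\infty\bigr] = \lambda_{X_{\tau_k} E'}(v) \exp\!\left(-\int_{\tau_k}^{v}\lambda_{X_{\tau_k}}(u)\,du\right) \mathds{1}_{\{v > \tau_k\}}\, dv.
\end{align}
Setting $m_i := |F_i|$ and $k_i := |F_i \setminus F_{i-1}|$, the event $\bigcap_{i=1}^n \{X_{t_i} = F_i\}$ decomposes as a disjoint union over permutations $\bm{\pi}=(\pi_1,\ldots,\pi_{m_n}) \in \Pi(F_n)$ compatible with the block structure, i.e., satisfying $\{\pi_{m_{i-1}+1},\ldots,\pi_{m_i}\} = F_i \setminus F_{i-1}$ for every $i$. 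For each such $\bm{\pi}$, iterating the above one-step density and appending the ``no-further-default'' survival $\exp(-\int_{\tau_{m_n}}^{t_n}\lambda_{F_n}(u)\,du)$ produces an $m_n$-fold integral representing the probability of the joint event $\{X_{\tau_j} = F^{\bm{\pi}}_j, \ j=1,\ldots,m_n;\ \tau_{m_n+1} > t_n;\ \tau_{m_{i-1}+1},\ldots,\tau_{m_i} \in (t_{i-1},t_i]\}$ on $\mathcal{F}^Y_\infty$.

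The key algebraic step is to show that this $m_n$-fold integral factorizes across intervals $(t_{i-1},t_i]$. Between the last default $\tau_{m_i}$ in block $i$ and the first default $\tau_{m_i+1}$ in block $i+1$ the state is constantly $F_i$, so the survival exponent $\int_{\tau_{m_i}}^{\tau_{m_i+1}} \lambda_{F_i}(u)\,du$ splits cleanly at $u=t_i$: one half supplies the outer survival term of $H_{k_i}$ for interval $i$, while the other half provides the innermost $H_0(t_i,\tau_{m_i+1};F_i)$ nested inside $H_{k_{i+1}}$ for interval $i+1$. Summing over the compatible permutations $\bm{\pi}$ (equivalently, over permutations within each block $F_i \setminus F_{i-1}$) collapses the $i$-th interval's contribution to $G(t_{i-1},t_i;F_{i-1},F_i)$ by the recursion \eqref{definitionFunctionG}--\eqref{defFuntionH}, giving the claimed product. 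The equality of the conditional expectations given $\mathcal{F}^Y_\infty$ and $\mathcal{F}^Y_{t_n}$ then follows from the tower property, since every $\lambda_\cdot(u)$ in the integrand is evaluated at $u \le t_n$ and $\lamb$ is $\mathbb{F}^Y$-adapted, so the resulting product is $\mathcal{F}^Y_{t_n}$-measurable.

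For (ii), I would use that $\mathcal{F}^X_s$ is generated by cylinder events $A := \bigcap_{i=1}^{n-1}\{X_{t_i} = F_i\} \cap \{X_s = E\}$ with $0 < t_1 < \cdots < t_{n-1} < t_n := s$ and nested $F_0 := \emptyset \subseteq F_1 \subseteq \cdots \subseteq F_{n-1} \subseteq E$. Applying (i) to the enlarged grid obtained by adjoining $t_{n+1} := t$ and $F_{n+1} := F$, and using that the resulting expression is $\mathcal{F}^Y_t$-measurable, yields
\begin{align}
\mathbb{P}\bigl[A \cap \{X_t = F\} \,\big|\, \mathcal{F}^Y_t\bigr] = G(s,t;E,F) \cdot \prod_{i=0}^{n-1} G(t_i,t_{i+1};F_i,F_{i+1}),
\end{align}
whereas $\mathbb{P}[A\mid \mathcal{F}^Y_t] = \prod_{i=0}^{n-1} G(t_i,t_{i+1};F_i,F_{i+1})$ by (i) applied to $A$ alone. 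Since the prefactor $G(s,t;E,F)$ depends only on $E = X_s$ and on $Y|_{[s,t]}$, a standard conditional-expectation argument (valid without formal division, by testing against arbitrary bounded $\mathcal{F}^Y_t$-measurable $\phi$ and all such cylinder $A$) gives $\mathbb{P}[X_t = F \mid \mathcal{F}^X_s \vee \mathcal{F}^Y_t] = \sum_{E \subseteq F} \mathds{1}_{\{X_s = E\}}\, G(s,t;E,F)$, which is $\sigma(X_s) \vee \mathcal{F}^Y_t$-measurable; a monotone-class argument extends this to all of $\mathcal{F}^X_s$. The main obstacle is the combinatorial bookkeeping in (i): correctly enumerating the compatible permutations and verifying that the iterated integral factors interval-by-interval into the nested form $H_{k_i}$. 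The splitting of the between-block survival exponent at the interval endpoint is the linchpin of this factorization, and is a special feature of the exponential survival law arising from the Markov structure.
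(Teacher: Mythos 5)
Your proposal is correct and reaches the same conclusion, but the route to (i) is genuinely different from the paper's. The paper first isolates a two-time identity: for $A \in \mathcal{G}_{|E|}$ and $E \subseteq F$, $\mathbb{P}[A\cap\{X_s=E,X_t=F\}\mid\mathcal{F}^Y_\infty] = G(s,t;E,F)\cdot\mathbb{P}[A\cap\{X_s=E\}\mid\mathcal{F}^Y_\infty]$, proved by induction on $|F/E|$ — the inductive step peels off the final default, invokes Proposition~\ref{theoremcriticalforothers}(ii)--(iii) to produce one layer of the $H$-recursion, and reinserts the inductive hypothesis — and then iterates this identity backward over the grid $t_n,\ldots,t_1$. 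You instead attack the multi-time event head-on: partition $\bigcap_i\{X_{t_i}=F_i\}$ by the full default order (compatible permutations of $F_n$), write the resulting $\mathcal{F}^Y_\infty$-conditional probability as one $m_n$-fold integral by iterating the one-step density, and observe that the between-block survival exponent $\int_{\tau_{m_i}}^{\tau_{m_i+1}}\lambda_{F_i}$ splits at $u=t_i$ so that the integral factorizes block by block into the $H_{k_i}$'s, with the permutation sum then factoring across blocks to recover $\prod_i G(t_{i-1},t_i;F_{i-1},F_i)$. Both computations are organized around the same underlying decomposition-by-default-order; the paper's induction amortizes the bookkeeping one jump at a time and is shorter on any single step, while your direct approach makes the block factorization structure visible at once and sidesteps the mild measurability caveat in the paper's iteration (that the cylinder event $\{X_{t_i}=F_i, i\le n-2\}$ be $\mathcal{G}_{|F_{n-1}|}$-measurable, which requires $F_{n-2}\subsetneq F_{n-1}$ or an extra argument when they coincide). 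Your treatment of (ii) — applying (i) to the enlarged grid, canceling the common product by testing against bounded $\mathcal{F}^Y_t$-measurable functions, and extending by monotone class — is essentially identical to the paper's.
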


\begin{proof}
(i) Recall ${\cal G}_n$ defined in \eqref{GnLamdaE}. We first show, for any $(E,F)\in {\mathbb{N}}^2_+$, $A\in {\cal G}_{\mid
	E\mid},$ and $0 \le s< t$,
\begin{align}\label{eqXSXT1}
\mathbb{P}\left[A \{X_s=E,X_t=F \}\mid \mathcal{F}^Y_{\infty}\right] = G(s,t; E,F) \cdot \mathbb{P}\left[A \{ X_s=E \}\mid \mathcal{F}^Y_{\infty} \right],
\end{align}
where $G$ is defined in \eqref{definitionFunctionG}.  Suppose   $\mid E\mid  =m, \mid F/E\mid=n$, where $0\leq m, n\leq N$. We   prove \eqref{eqXSXT1} by  induction.

{\bf Step 1:} If $n=0$, i.e., $E=F$. By assertion (ii) of Proposition \ref{theoremcriticalforothers}, we have
\begin{align}
	& \mathbb{P} \left[A\{X_s=E, \, X_t=F\} \mid \mathcal{F}^Y_{\infty} \right]\\
	=\; &\mathbb{E} \left[ \mathds{1}_{A\{\tau_m\leq s, \, X_{\tau_m}=E\}} \cdot \mathbb{P} \left(\tau_{m+1}>t\mid \mathcal{F}^Y_{\infty}\vee {\cal G}_m \right) \big| \mathcal{F}^Y_{\infty} \right]  \\
	=\; &\exp\left(-\int^t_{s}\lambda_E(u)\dd u \right) \cdot \mathbb{E} \left[\mathds{1}_{A\{\tau_m\leq s, \, X_{\tau_m}=E\}} \exp\left(-\int^s_{\tau_m}\lambda_E(u)\dd u \right) \Big| \mathcal{F}^Y_{\infty}\right]\\
	=\; &H_0(s,t;E) \cdot \mathbb{E} \Big[\mathds{1}_{A\{\tau_m\leq s, \, X_{\tau_m}=E\}} \, \mathbb{P}[\tau_{m+1}>s\mid \mathcal{F}^Y_{\infty}\vee {\cal G}_m] \Big| \mathcal{F}^Y_{\infty}\Big]\\
	=\; &H_0(s,t;E) \cdot \mathbb{P}\Big[A\{\tau_m\leq s<\tau_{m+1}, \, X_{\tau_m}=E \} \Big| \mathcal{F}^Y_{\infty}\Big]\\
	=\; &G(s,t;E,F) \cdot \mathbb{P}[A\{ X_s=E \}\mid \mathcal{F}^Y_{\infty}].
\end{align}

{\bf Step 2:} Suppose that (\ref{eqXSXT1}) holds for all  pairs $(E,F) \in \Nb^2_{++}$ with $| F/E| = k$, where $0\leq k<N$.
Now consider a pair $(E,F)\in \Nb^2_{++}$ with $|F/E| =k+1$. By  (ii) and (iii) of Proposition \ref{theoremcriticalforothers}, we deduce
\begin{align}
&\mathbb{P}[A \{X_s=E,X_t=F \}\mid \mathcal{F}^Y_{\infty}] \\
=\; &\mathbb{E}\left[\mathds{1}_{A \{X_s=E, \, \tau_{m+k+1}\leq t, \, X_{m+k+1}=F \}} \, \mathbb{P} \big[\tau_{m+k+2}>t \big| \mathcal{F}^Y_{\infty}\vee{\cal G}_{m+k+1} \big] \Big| \mathcal{F}^Y_{\infty}\right] \\
=\; &\mathbb{E}\left[\mathds{1}_{A \{ X_s=E\}} \mathds{1}_{\{ X_{\tau_{m+k+1}}=F, \, \tau_{m+k+1}\leq t\}}
\exp\left\{-\int^t_{\tau_{m+k+1}}\lambda_F(u)\dd u\right\} \Bigg| \mathcal{F}^Y_{\infty}\right]\\
=\; &\sum_{i\in F/E}\mathbb{E}\Big[\mathds{1}_{A\{X_s=E,\, X_{\tau_{m+k}} = F^-(i)\}}\int^t_s\mathds{1}_{\{ \tau_{m+k}\leq h \}}\lambda_{F^-(i)F}(h) \\
 & \; \times \exp\left\{-\int^h_{\tau_{m+k}}\lambda_{F^-(i)}(u)\dd u
-\int^t_h\lambda_{F}(u)\dd u\right\}\dd h \Big| \mathcal{F}^Y_{\infty}\Big] \\
=\; & \sum_{i\in F/E}\int^t_s\lambda_{F^{-}(i)F}(h)\exp\{-\int^t_h\lambda_{F}(u)\dd u\} \\
& \; \times \mathbb{E}\left[\mathds{1}_{A\{X_s=E, \, \tau_{m+k}\leq h, \, X_{\tau^{m+k}}=F^{-}(i)\}}\exp\left\{-\int^h_{\tau^{m+k}}\lambda_{F^{-}(i)}(u)\dd u\right\} \Big| \mathcal{F}^Y_{\infty}\right] \dd h \\
=\; & \sum_{i\in F/E}\int^t_s\lambda_{F^{-}(i)F}(h)\exp\left\{-\int^t_h\lambda_{F}(u)\dd u\right\} \\
& \; \times \mathbb{E}\Big[\mathds{1}_{A\{X_s=E, \, \tau_{m+k}\leq h, \, X_{\tau_{m+k}}=F^{-}(i)\}}\mathbb{P}\left[\tau_{m+k+1}>h \Big| {\cal G}_{m+k}\vee\mathcal{F}^Y_{\infty}\right] \Big] \dd h \\
= \; & \sum_{i\in F/E}\int^t_s\lambda_{F^{-}(i)F}(h)\exp\left\{-\int^t_h\lambda_{F}(u)\dd u \right\} \, \mathbb{P}\left[A\{X_s=E, \, X_h=F^{-}(i)\}\mid \mathcal{F}^Y_{\infty}\right] \dd h  \label{eqinportmantinducetion1} \\
= \; &\sum_{i\in F/E}\int^t_s\lambda_{F^{-}(i)F}(h)\exp\left\{-\int^t_h\lambda_{F}(u)\dd u \right\} \sum_{\pi\in \Pi(F^{-}(i)/E)}H_k(s,h; F^{\pi}_0,\cdots,F^{\pi}_k)\mathbb{P}\left[A \{X_s=E\}\mid \mathcal{F}^Y_{\infty}\right] \dd h\\
= \; &\mathbb{P}[A\{X_s=E\}\mid \mathcal{F}^Y_{\infty}]  \sum_{i\in F/E}\sum_{\pi\in \Pi(F^{-}(i)/E)}\int^t_s\lambda_{F^{-}(i)F}(h)\exp\left\{-\int^t_h\lambda_{F}(u)\dd u\right\}H_k(s,h; F^{\pi}_0,\cdots,F^{\pi}_k) \dd h\\
= \; &\mathbb{P}[A\{X_s=E\}\mid \mathcal{F}^Y_{\infty}]  \sum_{\pi\in \Pi(F/E)}H_{k+1}(s,h; F^{\pi}_0,\cdots,F^{\pi}_{k+1})\\
= \; & G(s,t;E,F) \, \mathbb{P}[A\{X_s=E\}\mid\mathcal{F}^Y_{\infty}],
\end{align}
which shows equality \eqref{eqXSXT1} holds true for $\mid F/E\mid =k+1$.

Now by taking $A$ in \eqref{eqXSXT1} as $A=\{ X_{t_i}=F_i,i=1,\cdots,n-2\}\in {\cal G}_{\mid F_{n-1} \mid}$, we  get
\begin{align}
	\mathbb{P}\left[\bigcup\limits_{i=1}^n \{X_{t_i}=F_i\}  \Bigg| \mathcal{F}^Y_{\infty} \right]   &= \mathbb{P}[A \{X_{t_{n-1}}=F_{n-1},X_{t_n}=F_n\}\mid\mathcal{F}^Y_{\infty}]\\
	&=G(t_{n-1},t_n;F_{n-1},F_n) \, \mathbb{P} \left[A \{X_{t_{n-1}}=F_{n-1}\} \big| \mathcal{F}^Y_{\infty}\right]\\
	&=G(t_{n-1},t_n;F_{n-1},F_n) \, \mathbb{P} \left[\bigcup_{i=1}^{n-1} X_{t_i}=F_i  \Bigg| \mathcal{F}^Y_{\infty} \right] \\
	&= \prod^{n-1}_{i=0}G(t_i,t_{i+1};F_i,F_{i+1}).
\end{align}
The second equality of \eqref{dynamicofXinappendix} follows from the fact that $\prod^{n-1}_{i=0}G(t_i,t_{i+1};F_i,F_{i+1})$ is ${\cal F}^Y_{t_n}$-measurable. This completes the proof of assertion (i).

(ii) Notice that, for any $F \in \Nb$ and $0 \le s <t$, using \eqref{dynamicofXinappendix} in (i), we derive
\begin{align}
	\mathbb{P} \left[X_{t}=F \big| \sigma(X_s) \vee \mathcal{F}^Y_{t} \right] &=\sum_{E\subseteq F}\mathds{1}_{ \{  X_s=E \} }\dfrac{\mathbb{P}[X_s=E, \, X_{t}=F\mid \mathcal{F}^Y_{t}]}{\mathbb{P}[X_s=E \mid \mathcal{F}^Y_{t}]}\\
	&= \sum_{E\subseteq F}\mathds{1}_{\{X_s=E\}} \cdot G(s,t;E,F).
\end{align}
Hence, it is enough to prove, for all $(E,F)\in {\mathbb{N}}^2_+$, that
\begin{align}
	\mathds{1}_{\{X_s=E \}}\mathbb{P}[X_{t}=F \mid \F_s^X \vee \mathcal{F}^Y_{t}]
	=\mathds{1}_{\{ X_s=E\}} \cdot G(s,t;E,F),
\end{align}
or, equivalently, $\forall  \ n\geq 1$, $\forall\ s \le t_1<t_2<\cdots<t_n\leq t$,
 $E_1\subseteq \cdots\subseteq E_n\subseteq E$, and $B\in \mathcal{F}^Y_{t},$
\begin{align}
	\mathbb{E}\Big[\mathds{1}_{B\{X_s=E, \, X_{t_i}=E_i, \, i=1,\cdots,n\}}G(s,t;E,F)\Big] = \mathbb{P}\left[B\{X_s=E,X_{t_i}=E_i,i=1,\cdots,n,X_{t}=F\}\right],
\end{align}
which is obvious by (i). The proof is then complete.
\end{proof}

\subsection{Martingale Property of $X$}
\label{appen_martingale}
In this subsection, we complete the last part of the proof to Theorem \ref{theorem_existence} by showing that the family $\Lamb:=(\Lambda_{EF}(t))_{t\geq 0}$ as specified by Assumption \ref{assumption_intensity} is the default intensity of the default process $X$.
The key results are summarized in the proposition below.

\begin{proposition}\label{propIndensity}
	For any $ F\in {\mathbb{N}}$, the process $X_F = (X_F(t))_{t \ge 0}$, defined by
	\begin{align}
	X_{F}(t) := \mathds{1}_{F}(X_t)-\int^t_0\lambda_{X_uF}(u)\dd u,
	\end{align}
	{ is an $\check{\Fb}$-martingale, where $\check{\Fb} = (\check{\F}_t)_{t \ge 0}:= (\F_t^X \vee \F_t^Y )_{t \ge 0}$.}
\end{proposition}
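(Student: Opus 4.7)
The plan is to check the martingale identity $\mathbb{E}[X_F(t)-X_F(s)\mid\check{\F}_s]=0$ directly, using Theorem \ref{maintheoremforcaluclation} to evaluate both the ``jump'' part $\mathbb{E}[\mathds{1}_F(X_t)-\mathds{1}_F(X_s)\mid\check{\F}_s]$ and the ``compensator'' part $\mathbb{E}[\int_s^t\lambda_{X_uF}(u)\,du\mid\check{\F}_s]$, then matching them via a forward Kolmogorov-type identity for the transition kernel $G$. Adaptedness of $X_F$ to $\check{\Fb}$ is immediate, and integrability on each finite interval follows from Assumption \ref{assumption_intensity}, which makes $\Lambda_{E'F}(t)$ finite for every $t$ and every $E'\subseteq F$.

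For the jump part, identity \eqref{eqmain2} with $\xi\equiv 1$, together with the fact that $G(s,s;E,F)=\mathds{1}_{\{E=F\}}$, gives
\begin{equation*}
\mathbb{E}[\mathds{1}_F(X_t)-\mathds{1}_F(X_s)\mid\check{\F}_s]
=\sum_{E\subseteq F}\mathds{1}_{\{X_s=E\}}\,\mathbb{E}\bigl[G(s,t;E,F)-G(s,s;E,F)\,\big|\,\F_s^Y\bigr].
\end{equation*}
The key technical ingredient is therefore the forward Kolmogorov equation
\begin{equation*}
\partial_t G(s,t;E,F)=\sum_{E\subseteq E'\subseteq F} G(s,t;E,E')\,\lambda_{E'F}(t),
\end{equation*}
which I would prove by induction on $n=|F\setminus E|$, starting from the trivial case $E=F$ in which $G(s,t;F,F)=\exp(-\int_s^t\lambda_F(u)\,du)$. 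For $E\subset F$, differentiating $H_{k+1}$ in $t$ via Leibniz gives $\partial_t H_{k+1}=\lambda_{F^{\bm{\pi}}_k F^{\bm{\pi}}_{k+1}}(t)\,H_k-\lambda_{F^{\bm{\pi}}_{k+1}}(t)\,H_{k+1}$. Summing over $\bm{\pi}\in\Pi(F/E)$ and regrouping permutations by their last letter $\pi_n=i\in F\setminus E$ (so that $F^{\bm{\pi}}_{n-1}=F^-(i)$ and the remaining $n-1$ letters range over $\Pi(F^-(i)/E)$) turns the first term into $\sum_{i\in F\setminus E}\lambda_{F^-(i)F}(t)\,G(s,t;E,F^-(i))$, while the second term collapses to $\lambda_{FF}(t)\,G(s,t;E,F)$. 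Assumption \ref{assumption_intensity}(A1)--(A2) then identifies this with $\sum_{E\subseteq E'\subseteq F} G(s,t;E,E')\,\lambda_{E'F}(t)$.

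For the compensator part, write $\lambda_{X_uF}(u)=\sum_{E\subseteq F}\mathds{1}_{\{X_u=E\}}\lambda_{EF}(u)$ and apply \eqref{eqmain2} with $t$ replaced by $u$, $F$ replaced by $E$, and $\xi=\lambda_{EF}(u)$ (nonnegative by (A3) and $\F_u^Y$-measurable) to obtain
\begin{equation*}
\mathbb{E}\bigl[\mathds{1}_{\{X_u=E\}}\lambda_{EF}(u)\,\big|\,\check{\F}_s\bigr]
=\sum_{E'\subseteq E}\mathds{1}_{\{X_s=E'\}}\,\mathbb{E}\bigl[\lambda_{EF}(u)\,G(s,u;E',E)\,\big|\,\F_s^Y\bigr].
\end{equation*}
Summing over $E\subseteq F$, swapping the order of summation, invoking Fubini on $\int_s^t\!du$ (legitimate by integrability of the intensities), and substituting the forward Kolmogorov equation to collapse $\sum_{E'\subseteq E\subseteq F}\lambda_{EF}(u)\,G(s,u;E',E)=\partial_u G(s,u;E',F)$ makes the $u$-integral telescope to $G(s,t;E',F)-\mathds{1}_{\{E'=F\}}$. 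The resulting expression coincides with the formula obtained for the jump part, and the martingale identity follows.

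The main obstacle is the forward Kolmogorov equation itself: the recursive, permutation-indexed definition of $G$ through the functions $H_k$ requires one to regroup the permutations of $F/E$ by their last letter so that the differentiated sum reorganizes cleanly into a sum over the immediate predecessors $E'=F^-(i)$ weighted by $\lambda_{F^-(i)F}$. Once this identity is in hand, the rest of the argument is a routine combination of \eqref{eqmain2}, iterated conditioning and Fubini.
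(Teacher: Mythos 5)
Your proposal is correct and follows essentially the same route as the paper's proof in Appendix B.3: the paper establishes exactly the integrated form of your forward Kolmogorov identity, $\int_s^t \lambda_{FF}(u)G(s,u;E,F)\,du = G(s,t;E,F) - \sum_{k\in F/E}\int_s^t\lambda_{F^-(k)F}(v)G(s,v;E,F^-(k))\,dv$, by manipulating the recursion for $H_m$ and regrouping permutations by their last letter, which is the same computation you propose to run in differential form. The only cosmetic differences are that the paper reduces to generating events $A\cap B$ via the monotone class theorem and conditions on $\F^Y_\infty$ through Proposition \ref{maintheoremmarkovG}(i), whereas you invoke the already-packaged Theorem \ref{maintheoremforcaluclation} (itself a consequence of Proposition \ref{maintheoremmarkovG}) and isolate the Kolmogorov equation as a named lemma.
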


\begin{proof}
	It is enough to prove, for all $0\leq s\leq t$,  $E\in {\mathbb{N}}$, with $E\subseteq F$, that  $\mathds{1}_{\{X_s=E\}}\mathbb{E}[X_{F}(t)-X_{F}(s)\mid \check{\F}_s]=0,$ i.e.,
	\begin{align}
		\mathds{1}_{\{X_s=E\}}\mathbb{E}[\mathds{1}_{\{X_t=F\}}-\mathds{1}_{  \{X_s=F\}}\mid \check{\F}_s] = \mathds{1}_{\{X_s=E\}}\mathbb{E}\left[\int^t_s\lambda_{X_uF}(u)\dd u \Big| \check{\F}_s \right].
	\end{align}
	By the monotone class theorem, for any $s_1<s_2<\cdots <s_n<s$ and $ E_1\subseteq \cdots\subseteq E_n\subseteq E$, without loss of generality, we take  arbitrary $ A\in {\cal F }^Y_s$, $B=\{X_{s_i}=E_i, \, i=1,\cdots,n\}$, and show that
	\begin{equation}\label{eqmar2}
	\mathbb{E} \left[\mathds{1}_{AB\{ X_s=E  \}}\int^t_s\lambda_{X_uF}(u)\dd u \right] =\mathbb{E}\left[\mathds{1}_{AB} \left(\mathds{1}_{\{ X_s=E, \, X_t=F \}}-\mathds{1}_{  \{X_s=E, \, X_s=F \}} \right)\right].
	\end{equation}
	In the following, we will prove \eqref{eqmar2} for the cases $E=F$ and $E\subset F$. Recall   functionals $G$ and $H$ defined in \eqref{definitionFunctionG} and \eqref{defFuntionH}.
	
	\noindent
	{\bf Case 1:} $E\subset F$. Without loss of generality, we assume   $|F| = |E|+m $ and $m\geq
	1.$ Then, using assertions (i) and (ii) in Proposition \ref{maintheoremmarkovG},  we have
	\begin{eqnarray}
	&&\mathbb{E}\left[\mathds{1}_{AB\{X_s=E\}}\int^t_s\lambda_{X_uF}(u)\dd u \right] \\
	&=&\sum_{k\in F/E}\, \int^t_s\mathbb{E}\left[\mathds{1}_A \mathbb{E} \left[\mathds{1}_{B\{ X_s=E,X_u=F^{-}(k)\}} \Big| \mathcal{F}^Y_{\infty}\right] \lambda_{F^{-}(k)F}(u)\right]\dd u \\
	&&+\int^t_s \mathbb{E} \left[\mathds{1}_A \mathbb{E} \left[\mathds{1}_{B\{X_s=E,X_u=F\}} \Big| \mathcal{F}^Y_{\infty}\right] \lambda_{FF}(u)\right]\dd u \\
	&=&\sum_{k\in F/E}\mathbb{E} \left[\mathds{1}_AG(0,s_1;\emptyset,E_1)\cdots G((s_n,s;E_n,E)\int^t_s\lambda_{F^{-}(k)F}(u))G(s,u;E,F^-(k))\dd u \right] \\
	&&+\mathbb{E} \left[\mathds{1}_AG(0,s_1;\emptyset,E_1)\cdots G((s_n,s;E_n,E)\int^t_s\lambda_{FF}(u))G(s,u;E,F)\dd u\right].
	\end{eqnarray}
	Take any $\bm{\pi}=(\pi_1,\cdots,\pi_m)\in \Pi(F/E)$, and recall the definitions of $(F^{\bm{\pi}}_k)_{k=0,1,\cdots,m}$ in \eqref{eqFPI} and $\lambda_{FF}(u)=-\lambda_{F}(u)$, we obtain
	\begin{eqnarray}
		&&\int^t_s H_m(s,u;F^{\bm{\pi}}_0,\cdots,F^{\bm{\pi}}_m)\lambda_{FF}(u)\dd u\\
		&=&\int^t_s\lambda_{FF}(u)\int^u_s\lambda_{F^{\bm{\pi}}_{m-1}F^{\bm{\pi}}_m}(v)
		\exp\left\{-\int^u_v\lambda_{F^{\bm{\pi}}_m}(l)dl \right\} H_{m-1}(s,v;F^{\bm{\pi}}_0,\cdots,F^{\bm{\pi}}_{m-1}) \dd v \dd u\\
		&=&\int^t_s\lambda_{F^{\bm{\pi}}_{m-1}F^{\bm{\pi}}_m}(v)H_{m-1}(s,v;F^{\bm{\pi}}_0,\cdots,F^{\bm{\pi}}_{m-1})\int^t_v\exp \left\{-\int^u_v\lambda_{F}(l) \dd l \right\}\lambda_{FF}(u)) \dd u \dd v\\
		&=&\int^t_s\lambda_{F^{\bm{\pi}}_{m-1}F^{\bm{\pi}}_m}(v)H_{m-1}(s,v;F^{\bm{\pi}}_0,\cdots,F^{\bm{\pi}}_{m-1})\left(\exp\{-\int^t_v\lambda_{F}(l) \dd l\}-1\right)dv\\
		&=&\int^t_s\lambda_{F^{\bm{\pi}}_{m-1}F^{\bm{\pi}}_m}(v)\exp \left\{-\int^t_v\lambda_{F}(l) \dd l\right\} H_{m-1}(s,v;F^{\bm{\pi}}_0,\cdots,F^{\bm{\pi}}_{m-1}) \dd v\\
		&&-\int^t_s\lambda_{F^{\bm{\pi}}_{m-1}F^{\bm{\pi}}_m}(v)H_{m-1}(s,v;F^{\bm{\pi}}_0,\cdots,F^{\bm{\pi}}_{m-1}) \dd v\\
		&=&H_m(s,t;F^{\bm{\pi}}_0,\cdots,F^{\bm{\pi}}_m)-\int^t_s\lambda_{F^{\bm{\pi}}_{m-1}F}(v)H_{m-1}(s,v;F^{\bm{\pi}}_0,\cdots,F^{\bm{\pi}}_{m-1}) \dd v.
	\end{eqnarray}
	Thus,
	\begin{eqnarray}
		&&\int^t_s\lambda_{FF}(u)G(s,u;E,F)\dd u = \sum_{\pi\in \Pi(F/E)}\int^t_s H_m(s,u;F^{\bm{\pi}}_0,\cdots,F^{\bm{\pi}}_m)\lambda_{FF}(u)\dd u\\
		&=&\sum_{\pi\in \Pi(F/E)} H_{m}(s,t;F^{\bm{\pi}}_{0}\cdots,F^{\bm{\pi}}_{m}) -\sum_{\pi\in \Pi( F/E)}\int^t_s\lambda_{F^{\bm{\pi}}_{m-1}F}(v)H_{m-1}(s,v;F^{\bm{\pi}}_0,\cdots,F^{\bm{\pi}}_{m-1}) \dd v\\
		&=&G(s,t;E,F)-\sum_{k\in F/ E}\sum_{\pi\in \Pi(F^-(k)/E)}\int^t_s\lambda_{F^-(k)F}(v)H_{m-1}(s,v;F^{\bm{\pi}}_0,\cdots,F^-(k)) \dd v\\
		&=&G(s,t;E,F)-\sum_{k\in F/ E}\int^t_s\lambda_{F^{-}(k)F}(v)G(s,v;E,F^{-}(k)) \dd v.
	\end{eqnarray}
	Finally, we are able to show that
	\begin{eqnarray}
		&&\mathbb{E}\left[\mathds{1}_{AB\{ X_s=E\}}\int^t_s\lambda_{X_uF}(u)\dd u \right]\\
		&=&\mathbb{E}\left[\mathds{1}_AG(0,s_1;\phi,E_1)G(s_1,s_2;E_1,E_2)\cdots G(s_n,s;E_n,E)G(s,t;E,F)\right]\\
		&=&\mathbb{E}\left[\mathds{1}_A \, \mathbb{P}[X_{s_i}=E_i, \, i=1,\cdots,n,X_s=E,X_t=F\mid \mathcal{F}^Y_{\infty}] \right]\\
		&=&\mathbb{P}[AB\{X_s=E, \, X_t=F\}],
	\end{eqnarray}
	which completes the proof for the case of $E\subset F$.
	
	\noindent
	{\bf Case 2:} $E=F$. Since  $G(s,t;E,E)=H_0(s,t;E)=\exp\{-\int^t_s\lambda_E(u) \dd u\}$, we derive
	\begin{eqnarray*}
		&&\mathbb{E}\left[\mathds{1}_{AB\{ X_s=E\}}\int^t_s\lambda_{X_uF}(u)\dd u\right] =  \int^t_s \mathbb{E}\left[\mathds{1}_{AB\{ X_s=E,X_u=E\}}\lambda_{EE}(u)\right]\dd u\\
		&=&\mathbb{E}\left[\mathds{1}_{A}G(0,s_1;\emptyset,E_1)\cdots G(s_n,s;E_n,E)\int^t_s\lambda_{EE}(u)G(s,u;E,E)\dd u\right]\\
		&=&\mathbb{E}\left[\mathds{1}_{A}G(0,s_1;\emptyset,E_1)\cdots G(s_n,s;E_n,E)\left(\exp\left\{-\int^t_s\lambda_E(l)dl\right\}-1\right)\right]\\
		&=&\mathbb{E}\left[\mathds{1}_{AB\{X_s=E,X_t=F\}}\right] - \mathbb{E}\left[\mathds{1}_{AB\{X_s=E\}}\right] \\
		&=&\mathbb{E}\left[\mathds{1}_{AB}(\mathds{1}_{\{X_s=E,X_t=F\}}-\mathds{1}_{\{X_s=E\}})\right].
	\end{eqnarray*}
	This proves the case of $E=F$, and thus completes the proof of the proposition.
\end{proof}

\section{Technical Proofs}
\label{sec_proofs}

\subsection{Proof of Theorem \ref{maintheoremforcaluclation}}
\label{appen_maintheoremforcaluclation}
\begin{proof}
From assertion (i) of  Proposition \ref{maintheoremmarkovG}, we deduce, for any $F \in \Nb$ and $0 \le s < t$, that
\begin{align}
\mathbb{P}[X_t=F\mid \F^X_s \vee \F^Y_t] &=\sum_{E\subseteq F}\mathds{1}_{\{X_s=E\}}\frac{\mathbb{P}[X_s=E,X_t=F\mid \mathcal{F}^Y_t]}{\mathbb{P}[X_s=E\mid\mathcal{F}^Y_t]}\\
&= \sum_{E\subseteq F}\mathds{1}_{\{X_s=E\}} \cdot G(s,t;E,F),
\end{align}
which proves (\ref{eqmain1}).

To show \eqref{eqmain2}, notice that, for any bounded $\mathcal{F}^Y_{t}$-measurable $\xi$,
 \begin{eqnarray*}
 &&\mathbb{E}\Big[\mathds{1}_{\{X_{t}=F\}}\xi \Big| \F^X_s \vee \F^Y_s \Big]=\mathbb{E}\Big[ \mathbb{E} \Big [\mathds{1}_{\{X_{t}=F\}}\xi \Big| \F^X_s \vee \F^Y_t \Big] \Big| \F^X_s \vee \F^Y_s \Big ]\\
 &=&\sum_{E\subseteq F}\mathds{1}_{\{X_s=E\}}\mathbb{E}\Big[\xi G(s,t;E,F)\Big\vert  \F^X_s \vee \F^Y_s \Big] = \sum_{E\subseteq F}\mathds{1}_{\{X_s=E\}}\frac{\mathbb{E}\Big[\xi  G(s,t;E,F)\mathds{1}_{\{X_s=E\}}\Big\vert\mathcal{F}^Y_s\Big]}{\mathbb{P}[X_s=E |  \mathcal{F}^Y_s]}\\
   &=&\sum_{E\subseteq F}\mathds{1}_{\{X_s=E\}}\frac{\mathbb{E}\Big[\xi G(s,t;E,F)G(0,s;\emptyset,E)\Big\vert\mathcal{F}^Y_s\Big]}{G(0,s;\emptyset,E)}=\sum_{E\subseteq F}\mathds{1}_{\{X_s=E\}}\mathbb{E}\Big[\xi  G(s,t;E,F)\Big\vert\mathcal{F}^Y_s\Big].
\end{eqnarray*}
This ends the proof of Theorem \ref{maintheoremforcaluclation}.
\end{proof}

The proof of Corollay \ref{theoremparticular} relies on the following lemma.
\begin{lemma}
\label{lemmaHalphafunction}
Let {\blue $z(\cdot)$} be a nonnegative  function defined on $\mathbb{R}^+$ and {\blue $Z(\cdot, \cdot)$} be a nonnegative function on $\mathbb{R}^+ \times \mathbb{R}^+$ given by $Z(s,t) :=  \int^t_s z(h)\dd h$.
Let $l_0,l_1,\cdots,l_n$ be $n+1$ real numbers, where $n$ is a positive integer.
For any $ 0\leq s<t$, define the sequence $({\cal H}_m)_{m=0,1,\cdots,n}$ by
\begin{align}
{\cal H}_m(s,t;l_0,\cdots,l_m)= \int^t_se^{-l_m \, Z(u,t)} \cdot {\cal H}_{m-1}(s,u;l_0,\cdots,l_{m-1})\dd Z(s,u), \quad m=1,2,\cdots,n,
\end{align}
and ${\cal H}_0(s,t;l_0)=e^{-l_0 \, Z(s,t)}$.

Then ${\cal H}_m$ can be reduced to
\begin{equation}\label{eqHfunction}
{\cal H}_m(s,t;l_0,\cdots,l_m)=\sum^m_{i=0}\alpha^{(m)}_i(l_0,\cdots,l_m)e^{-l_i \, Z(s,t)},
\end{equation}
where $(\alpha^{(m)}_i)$s are defined in \eqref{eqn_alpha}.
\end{lemma}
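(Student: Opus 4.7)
The natural approach is induction on $m$, exploiting the telescoping identity $Z(u,t)=Z(s,t)-Z(s,u)$ which factors the outer exponential and reduces each nested integral to an elementary exponential integral. The base case $m=0$ is immediate from the definitions, since $\mathcal{H}_0(s,t;l_0)=e^{-l_0 Z(s,t)}=\alpha_0^{(0)}(l_0)\,e^{-l_0 Z(s,t)}$.

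For the inductive step, assume the representation \eqref{eqHfunction} holds for $m-1$. Substituting into the recursion and using $e^{-l_m Z(u,t)}=e^{-l_m Z(s,t)}e^{l_m Z(s,u)}$ gives
\begin{align}
\mathcal{H}_m(s,t;l_0,\ldots,l_m)
&= e^{-l_m Z(s,t)}\sum_{i=0}^{m-1}\alpha_i^{(m-1)}(l_0,\ldots,l_{m-1})\int_s^t e^{(l_m-l_i)Z(s,u)}\,dZ(s,u).
\end{align}
The change of variables $v=Z(s,u)$ (valid because $z\ge 0$ makes $Z(s,\cdot)$ monotone) converts the inner integral to $\int_0^{Z(s,t)} e^{(l_m-l_i)v}\,dv=\bigl(e^{(l_m-l_i)Z(s,t)}-1\bigr)/(l_m-l_i)$, assuming the $l_i$'s are pairwise distinct (the hypothesis needed in the invocation from Corollary \ref{theoremparticular}). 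Distributing then yields
\begin{align}
\mathcal{H}_m(s,t;l_0,\ldots,l_m)
&= \sum_{i=0}^{m-1}\frac{\alpha_i^{(m-1)}(l_0,\ldots,l_{m-1})}{l_m-l_i}\,e^{-l_i Z(s,t)}
- \Bigl(\sum_{i=0}^{m-1}\frac{\alpha_i^{(m-1)}(l_0,\ldots,l_{m-1})}{l_m-l_i}\Bigr)e^{-l_m Z(s,t)}.
\end{align}

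Recognizing the first sum's coefficients as $\alpha_i^{(m)}(l_0,\ldots,l_m)$ by the defining recursion \eqref{eqn_alpha}, and the coefficient of $e^{-l_m Z(s,t)}$ as $-\sum_{i=0}^{m-1}\alpha_i^{(m)}=\alpha_m^{(m)}$ (again by \eqref{eqn_alpha}), one obtains $\mathcal{H}_m=\sum_{i=0}^m \alpha_i^{(m)}\,e^{-l_i Z(s,t)}$, completing the induction.

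There is no real obstacle here; the only point requiring mild care is verifying that the algebraic coefficient matching actually lines up with the recursive definition of $\alpha^{(m)}_i$ rather than some reindexed variant, and noting explicitly that distinctness of the $l_i$ is required so that no denominator $l_m-l_i$ vanishes (this is exactly the hypothesis imposed when this lemma is invoked in the proof of Corollary \ref{theoremparticular}).
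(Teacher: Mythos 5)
Your proof is correct and takes essentially the same route as the paper's: induction on $m$, factoring the outer exponential via $Z(u,t)=Z(s,t)-Z(s,u)$, evaluating the resulting elementary exponential integral, and matching coefficients against the recursion for $\alpha^{(m)}_i$. Your explicit remark that pairwise distinctness of the $l_i$ is needed to avoid vanishing denominators is a worthwhile clarification that the paper leaves implicit in the lemma statement (it is only imposed when the lemma is invoked in Corollary~\ref{theoremparticular}).
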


\begin{proof}
We prove this lemma by induction.
The base case of $k=0$ is trivial.

Next suppose \eqref{eqHfunction} holds true for $k=0,1,\cdots,m$, where $m<n$.
Regarding ${\cal H}_{m+1}$, we have
\begin{align}
&\;{\cal H}_{m+1}(s,t;l_0,\cdots,l_{m+1})=\int^t_se^{-l_{m+1} \, Z(u,t)}{\cal H}_m(s,u;l_0,\cdots,l_m)\dd Z(s,u)\\
=&\;\sum^m_{i=0}\alpha^{(m)}_i(l_0,\cdots,l_m)\int^t_se^{-l_{m+1} \, Z(u,t)}e^{-l_i \, Z(s,u)}\dd Z(s,u)\\
=&\;\sum^m_{i=0}\alpha^{(m)}_i(l_0,\cdots,l_m)e^{-l_{m+1} \, Z(s,t)}\int^t_se^{-(l_i-l_{m+1})Z(s,u)}\dd Z(s,u)\\
=&\;\sum^m_{i=0}\alpha^{(m)}_i(l_0,\cdots,l_m)e^{-l_{m+1}Z(s,t)} \frac{1}{l_{m+1} - l_i} \left( e^{-(l_i-l_{m+1})Z(s,t)}-1 \right) \\
=&\;\sum^m_{i=0}  \frac{1}{l_{m+1} - l_i} \alpha^{(m)}_i(l_0,\cdots,l_m)e^{-l_{i}Z(s,t)}-\sum^m_{i=0} \frac{1}{l_{m+1} - l_i} \alpha^{(m)}_i(l_0,\cdots,l_m)e^{-l_{m+1}Z(s,t)}\\
=&\;\sum^m_{i=0}\alpha^{(m+1)}_i(l_0,\cdots,l_m,l_{m+1})e^{-l_{i}Z(s,t)}- \sum^m_{i=0}\alpha^{(m+1)}_i(l_0,\cdots,l_{m+1}) e^{-l_{m+1}Z(s,t)}\\
=&\;\sum^{m+1}_{i=0}\alpha^{(m+1)}_i(l_0,\cdots,l_m,l_{m+1})e^{-l_{i}Z(s,t)},
\end{align}
where in the last equality, we used the equation $\alpha^{(m+1)}_{m+1}:=-\sum^m_{i=0}\alpha^{(m+1)}_i(l_0,\cdots,l_{m+1})$ in \eqref{eqn_alpha}. The proof is then complete.
\end{proof}

\subsection{Proof of Corollary  \ref{theoremparticular}}
\label{appen_theoremparticular}

\begin{proof}
It is easy to see, for all $(E,F)\in {\mathbb{N}}_{++}^2 $ with $\mid F/E\mid=n$ and $ \bm{\pi} \in \Pi(F/E)$, that
\begin{align}
\lambda_{F^{\bm{\pi}}_kF^{\bm{\pi}}_{k+1}}(t)=\Lc_{F^{\bm{\pi}}_k}(\pi_{k+1})\Phi(t,Y_t) \ \ \mbox{and} \ \  \lambda_{F^{\bm{\pi}}_k}(t)= -\lambda_{F^{\bm{\pi}}_k F^{\bm{\pi}}_k}(t)=\overline{\Lc}_{F^{\bm{\pi}}_k}\Phi(t, Y_t).
\end{align}
From \eqref{eqHfunction} of Lemma \ref{lemmaHalphafunction}, by setting $z(t)=\Phi(t,Y_t)$ for all $t \ge 0$ and $l_i = \overline{\Lc}_{ {F}^{\bm \pi}_i}$, we obtain
\begin{eqnarray*}
H_n(s,t;F^{\bm{\pi}}_0,\cdots, F^{\bm{\pi}}_n)
&=&\prod^{n-1}_{k=0}\Lc_{F^{\bm{\pi}}_k}(\pi_{k+1}){\cal H}_n(s,t;  \overline{\Lc}_{F^{\bm{\pi}}_0},\cdots, \overline{\Lc}_{F^{\bm{\pi}}_n})\nonumber\\
&=&\widehat{\Lc}^{\pi}(n)\sum^{n}_{i=0}\alpha^{(n)}_i( \overline{\Lc}_{F^{\bm{\pi}}_0},\cdots, \overline{\Lc}_{F^{\bm{\pi}}_n})\exp\left({- \overline{\Lc}_{F^{\bm{\pi}}_i} {\mathcal I}(s,t)}\right)\nonumber\\
&=&\widehat{\Lc}^{\pi}(n)\sum^{n}_{i=0}\alpha^{(n)}_i(\pi)\exp\left({- \overline{\Lc}_{F^{\bm{\pi}}_i}{\mathcal I}(s,t)}\right),
\end{eqnarray*}
where $\mathcal{I}(s,t) = \int_s^t \Phi(u, Y_u) \dd u$.
The above equality, together with the results from Theorem \ref{maintheoremforcaluclation}, completes the proof to Corollary  \ref{theoremparticular}.
\end{proof}

\subsection{Proof of Proposition \ref{propExampleHomo}}
\label{appen_propExampleHomo}

\begin{proof}
For any $F \in \Nb$ with $|F|=n$, and $ \bm{\pi}=(\pi_1,\cdots,\pi_n)\in\Pi(F/\emptyset)$, we have
\begin{align}
\widehat{\Lc}^{\bm \pi}(n)=(n-1)!\rho^{n-1}e^{-\delta n(n-1)/2} { \beta_{\pi_1 }} \quad \text{and} \quad
\overline{\Lc}_{F^{\bm{\pi}}_k}=a_k, \quad k=0,1,\cdots,N.
\end{align}
In addition, recall $\mathcal{A}(n) = \{ F \in \Nb: |F| =n\} $, we have
\begin{align}
\sum\limits_{F \in \mathcal{n}} \; \sum\limits_{{\bm \pi} \in \Pi(F/\emptyset)}\beta_{\pi_1} = \sum\limits_{i=1}^{N}\beta_i (n-1)! \, C^{n-1}_{N-1} = a_0 \dfrac{(N-1)!}{(N-n)!},
\end{align}
and, for any doubled indexed sequence   $(w_j^n)_{j,n}$,
\begin{align}
\sum_{n= \parallel V_{i-1} \parallel +1}^{N} \; \sum_{j=0}^{n} \; w_j^n =
\sum_{j=0}^{N} \; \sum_{n=\max(j, \parallel V_{i-1} \parallel+1)}^{N}w^n_j,
\end{align}
where $V_{i-1} = \frac{N}{1-R} p_{i-1}$ and $\parallel V_{i-1} \parallel$ is the integer part of $V_{i-1}$. Recall $I^{(i)}(\cdot)$ is defined by \eqref{eqn_I} in Proposition \ref{proposition_spread}. Using the above results, we derive
\begin{eqnarray*}
   \mathbb{E}[L^{(i)}(X_{t_{k}})]  &=&\sum_{n= \parallel V_{i-1} \parallel+1}^N \sum_{F \in \mathcal{A}(n)} I^{(i)}(n) \sum_{{\bm \pi} \in\Pi(F/\emptyset)} (n-1)!\rho^{n-1}e^{-\delta n(n-1)/2}\beta_{\pi_1} \sum_{i=0}^n\alpha^{(n)}_i( \bm{a} )    \mathbb{E}\left[e^{- a_i {\mathcal I}(0,t_k)}\right] \\
 &=&\sum_{n=\parallel V_{i-1} \parallel+1}^N  I^{(i)}(n)  (n-1)!\rho^{n-1}e^{-\delta n(n-1)/2}\sum_{i=0}^n\alpha^{(n)}_i(\bm{a})    \mathbb{E}\left[e^{- a_i {\mathcal I}(0,t_k)}\right]
 \sum_{F \in \mathcal{A}(n)}\sum_{{\bm \pi} \in\Pi(F/\emptyset)}\beta_{\pi_1} \\
 &=&\sum_{n=\parallel V_{i-1} \parallel+1}^N  I^{(i)}(n)  (n-1)!\rho^{n-1}e^{-\delta n(n-1)/2}\sum_{i=0}^n\alpha^{(n)}_i(\bm{a})    \mathbb{E}\left[e^{- a_i {\mathcal I}(0,t_k)}\right] a_0 \frac{(N-1)!}{(N-n)!},\\
&=&  \sum_{i=0}^N \mathbb{E}\left[e^{- a_i {\mathcal I}(0,t_k)}\right] a_0\sum_{n=\max(i,\parallel V_{i-1} \parallel+1)}^N  I^{(i)}(n)  \frac{(n-1)!(N-1)!}{(N-n)!}\rho^{n-1}e^{-\delta n(n-1)/2}\alpha^{(n)}_i(\bm{a}), \\
 &=&  \sum_{i=0}^{N-1} \mathbb{E}\left[e^{- a_i {\mathcal I}(0,t_k)}\right] \Gamma_i + a_0 \left( (N-1)!\right)^2 \rho^{N-1}I^{(i)}(N)e^{-\delta N(N-1)/2}\alpha^{(N)}_N(\bm{a}),
\end{eqnarray*}
where $(\bm{a}) := (a_0, a_1,\cdots,a_n)$ and $\mathcal{I}(0,t) = \int_0^t \Phi(u, Y_u) \dd u$. Since $a_i >0$ for all $i=0,1,\cdots,N-1$, by letting $t_k\rightarrow +\infty$, we derive
$$
 1= \lim_{t_k\rightarrow +\infty} \mathbb{E}[L^{(i)}(X_{t_{k}})] = a_0 \left( (N-1)!\right)^2 \rho^{N-1}I^{(i)}(N)e^{-\delta N(N-1)/2}\alpha^{(N)}_N(\bm{a}).
$$
The desired result is then obtained.
\end{proof}

\subsection{Proof of Proposition \ref{propexamplenonhomo}}
\label{appen_propexamplenonhomo}

\begin{proof}
Recall the definition of $\widehat{\Lc}^{\bm \pi}(n)$ in \eqref{eqn_Lc_hat}. Due to the contagion structure of NCM model, each obligor will only impact its two nearest neighbors. Hence,     $\widehat{\Lc}^{\bm \pi}(n)$ is 	
non-zero only if $F$ is a consecutive sequence of the circle  $\{1\rightarrow 2\rightarrow 3 \rightarrow \cdots \rightarrow N\rightarrow 1\}$.
Denote by $S(i)$ the sequence which has $n$ elements and starts with $1+i\% N$, i.e., $S(i) = \{1+i\% N, 1+(i+1)\% N, \cdots, 1+(i+n-1)\% N\}$. Here, $\%$ stands for the residue of two integers. We have:
\begin{align}
  \sum_{\pi\in\Pi( {  S(i)/\emptyset })}\widehat{\Lc}^{\bm  \pi}(n) &= \; \sum_{{\bm  \pi} \in\Pi(S(i)/\emptyset)} \sum_{j=0}^{n-1} \mathds{1}_{\{\pi_1 = 1+(i+j)\% N\}}\widehat{\Lc}^{\bm  \pi}(n) \\
  &= \; e^{-\delta n(n-1)/2}   \sum_{j=0}^{n-1} \beta_{1+(i+j)\%N} \, C_{n-1}^j \, \mathfrak{p}^{n-1-j}\mathfrak{q}^j,
\end{align}
where $C_n^k$ is the combination number of taking $k$ distinct elements out of $n$ elements.

Using the above result, we derive
\begin{eqnarray*}
  \sum_{F \in \mathcal{A}(n)} \sum_{{\bm \pi} \in\Pi(F/\emptyset)}\widehat{\Lc}^{\bm  \pi}(n) &=& \sum_{i=1}^N \sum_{{\bm \pi} \in\Pi(S(i)/\emptyset)}\widehat{\Lc}^{\bm  \pi}(n) = e^{-\delta n(n-1)/2}\sum_{i=1}^N \sum_{j=0}^{n-1} \beta_{1+(i+j)\%N} C_{n-1}^j \mathfrak{p}^{n-1-j} \mathfrak{q}^j \\
  &=& e^{-\delta n(n-1)/2}   \sum_{i=1}^N \beta_i    \sum_{j=0}^{n-1} C_{n-1}^j \mathfrak{p}^{n-1-j}\mathfrak{q}^j = {  \bar{a}_0} e^{-\delta n(n-1)/2}   (\mathfrak{p}+\mathfrak{q})^{n-1}.
\end{eqnarray*}
Notice that we have
 \begin{eqnarray*}
   \overline{\Lc}_{F^{\bm{\pi}}_0} &=& \overline{\Lc}_{\emptyset} = \sum_{i=1}^{N} \beta_i = \overline{a}_0, \ \  \overline{\Lc}_{F^{\bm{\pi}}_N} = \overline{\Lc}_{\cal N} = 0 = \overline{a}_N,\\
    \overline{\Lc}_{F^{\bm{\pi}}_k} &=& \sum_{i\in (F^{\bm{\pi}}_k)^c} \widehat{\Lc}_{F^{\bm{\pi}}_k}(i)    = e^{-\delta k} \sum_{i\in (F^{\bm{\pi}}_k)^c}\sum_{j\in F^{\bm{\pi}}_k}\rho_{ji}= e^{-\delta k}(\mathfrak{p}+\mathfrak{q}) := \overline{a}_k, \ \mbox{for all } k=1,2,\cdots, N-1.
 \end{eqnarray*}

Now we are ready to calculate $\mathbb{E}[L^{(i)}(X_{t_{k}})]$  from assertion (ii) of Proposition \ref{maintheoremmarkovG}
\begin{align}
\mathbb{E}[L^{(i)}(X_{t_{k}})]  &= \; \sum_{n=\parallel V_{i-1} \parallel +1}^N \; \sum_{F \in \mathcal{A}(n)} I^{(i)}(n) \sum_{{\bm \pi} \in\Pi(F/\emptyset)} \widehat{\Lc}^{\bm \pi}(n)\sum_{i=0}^n\alpha^{(n)}_i(\bm{\bar{a}}_n)    \mathbb{E}\left[e^{- \overline{a}_i {\mathcal I}(0,t_k)}\right] \\
 &= \;\sum_{n= \parallel V_{i-1} \parallel +1}^N  I^{(i)}(n)\sum_{i=0}^n\alpha^{(n)}_i(\bm{\bar{a}}_n)    \mathbb{E}\left[e^{- \overline{a}_i {\mathcal I}(0,t_k)}\right] \sum_{F \in \mathcal{A}(n)} \sum_{{\bm \pi}\in\Pi(F/\emptyset)} \widehat{\Lc}^{\bm \pi}(n) \\
 &= \; \sum_{n= \parallel V_{i-1} \parallel +1}^{N}  I^{(i)}(n) e^{-\delta n(n-1)/2} \, \overline{a}_0 (\mathfrak{p}+\mathfrak{q})^{n-1} \sum_{i=0}^n\alpha^{(n)}_i(\bm{\bar{a}}_n)    \mathbb{E}\left[e^{- \overline{a}_i {\mathcal I}(0,t_k)}\right] \\
 &= \; \overline{a}_0 \sum_{n= \parallel V_{i-1} \parallel +1}^{N-1}  I^{(i)}(n) e^{-\delta n(n-1)/2}  (\mathfrak{p}+\mathfrak{q})^{n-1} \sum_{i=0}^n\alpha^{(n)}_i(\bm{\bar{a}}_n)    \mathbb{E}\left[e^{- \overline{a}_i {\mathcal I}(0,t_k)}\right]  \\
 & \quad + \overline{a}_0 I^{(i)}(N)   e^{-\delta N(N-1)/2}(\mathfrak{p}+\mathfrak{q})^{N-1}\left( \sum_{i=0}^{N-1} \alpha^{(N)}_i(\bm{\bar{a}}_N)     \mathbb{E}\left[e^{- \overline{a}_i {\mathcal I}(0,t_k)}\right]  + \alpha^{(N)}_N(\bm{\bar{a}}_N) \right),
\end{align}
where in the last equality above, we used the fact $\overline{a}_N = 0$, and $(\bm{\bar{a}}_n) :=(\overline{a}_0,\overline{a}_1,\cdots \overline{a}_n) $ for a positive integer $n$. Since $\overline{a}_i>0$ for all $i=0,1,\cdots,N-1$, by letting $t_k\rightarrow +\infty$, it gives
$  1 = \lim_{t_k\rightarrow +\infty}  \mathbb{E}[L^{(i)}(X_{t_{k}})] =  \overline{a}_0 I^{(i)}(N)   e^{-\delta N(N-1)/2}(\mathfrak{p}+\mathfrak{q})^{N-1} \alpha^{(N)}_N(\bm{\bar{a}}_N).$  This completes the proof.
\end{proof}

\bibliographystyle{apalike}
\bibliography{CDO-reference}

\begin{thebibliography}{}

\bibitem[Bielelcki et~al., 2011]{bielecki09}
Bielelcki, T.~R., Cr{\'e}pey, S., and Herbertsson, A. (2011).
\newblock Markov chain models of portfolio credit risk.
\newblock In {\em The Oxford Handbook of Credit Derivatives}.

\bibitem[Black and Cox, 1976]{black1976valuing}
Black, F. and Cox, J.~C. (1976).
\newblock Valuing corporate securities: Some effects of bond indenture
  provisions.
\newblock {\em Journal of Finance}, 31(2):351--367.

\bibitem[Black and Scholes, 1973]{black1973pricing}
Black, F. and Scholes, M. (1973).
\newblock The pricing of options and corporate liabilities.
\newblock {\em Journal of Political Economy}, 81(3):637--654.

\bibitem[Collin-Dufresne et~al., 2004]{cpr2004}
Collin-Dufresne, P., Goldstein, R., and Hugonnier, J. (2004).
\newblock A general formula for valuing defaultable securities.
\newblock {\em Econometrica}, 72(5):1377--1407.

\bibitem[Cont et~al., 2010]{cont2010}
Cont, R., Deguest, R., and Kan, Y.~H. (2010).
\newblock Default intensities implied by cdo spreads: Inversion formula and
  model calibration.
\newblock {\em SIAM Journal on Financial Mathematics}, 1(1):555--585.

\bibitem[Cont and Minca, 2013]{rama2013}
Cont, R. and Minca, A. (2013).
\newblock Recovering portfolio default intensities implied by cdo quotes.
\newblock {\em Mathematical Finance}, 23(1):94--121.

\bibitem[Cox et~al., 1985]{cox1985}
Cox, J., Ingersoll~Jr, J., and Ross, S. (1985).
\newblock A theory of the term structure of interest rates.
\newblock {\em Econometrica}, 53(2):385--408.

\bibitem[Das et~al., 2007]{DasDuffie2007}
Das, S.~R., Duffie, D., Kapadia, N., and Saita, L. (2007).
\newblock Common failings: How corporate defaults are correlated.
\newblock {\em Journal of Finance}, 62(1):93--117.

\bibitem[Ding et~al., 2009]{DGT2009}
Ding, X., Giesecke, K., and Tomecek, P. (2009).
\newblock Time-changed birth processes and multiname credit derivatives.
\newblock {\em Operations Research}, 57(4):990--1005.

\bibitem[Duffie et~al., 2009]{duffieEHS2009}
Duffie, D., Eckner, A., Horel, G., and Saita, L. (2009).
\newblock Frailty correlated default.
\newblock {\em Journal of Finance}, 64(5):2089--2123.

\bibitem[Duffie and Garleanu, 2001]{duffiegarleanu2001}
Duffie, D. and Garleanu, N. (2001).
\newblock Risk and valuation of collateralized debt obligations.
\newblock {\em Financial Analysts Journal}, 57(1):41--59.

\bibitem[Duffie et~al., 2000]{duffiepansingleton2000}
Duffie, D., Pan, J., and Singleton, K. (2000).
\newblock Transform analysis and asset pricing for affine jump-diffusions.
\newblock {\em Econometrica}, 68(6):1343--1376.

\bibitem[Duffie and Singleton, 1999]{DuffieSingleton1999}
Duffie, D. and Singleton, K. (1999).
\newblock Modeling term structures of defaultable bonds.
\newblock {\em Review of Financial Studies}, 12(4):687--720.

\bibitem[Eom et~al., 2004]{eom2004structural}
Eom, Y.~H., Helwege, J., and Huang, J.-z. (2004).
\newblock Structural models of corporate bond pricing: An empirical analysis.
\newblock {\em Review of Financial Studies}, 17(2):499--544.

\bibitem[Errais et~al., 2010]{errais2010}
Errais, E., Giesecke, K., and Goldberg, L. (2010).
\newblock Affine point processes and portfolio credit risk.
\newblock {\em SIAM Journal on Financial Mathematics}, 1(1):642--665.

\bibitem[Errais et~al., 2007]{EKGB2007}
Errais, E., Giesecke, K., Goldberg, L., and Barra, M. (2007).
\newblock Pricing credit from the top down with affine point processes.
\newblock {\em Numerical Methods for Finance}, pages 195--201.

\bibitem[Feller, 1951]{feller1951}
Feller, W. (1951).
\newblock Two singular diffusion problems.
\newblock {\em Annals of Mathematics}, pages 173--182.

\bibitem[Financial-Crisis-Inquiry-Commission-Report, 2011]{fcic2011}
Financial-Crisis-Inquiry-Commission-Report (2011).
\newblock {\em The financial crisis inquiry report, authorized edition: Final
  report of the National Commission on the Causes of the Financial and Economic
  Crisis in the United States}.
\newblock Public Affairs.

\bibitem[Frey and Backhaus, 2010]{frey2010dynamic}
Frey, R. and Backhaus, J. (2010).
\newblock Dynamic hedging of synthetic cdo tranches with spread risk and
  default contagion.
\newblock {\em Journal of Economic Dynamics and Control}, 34(4):710--724.

\bibitem[Frey et~al., 2001]{frey2001}
Frey, R., McNeil, A., and Nyfeler, M. (2001).
\newblock Copulas and credit models.
\newblock {\em Risk}, 10(111114.10).

\bibitem[Giesecke et~al., 2011]{giesecke2010}
Giesecke, K., Goldberg, L.~R., and Ding, X. (2011).
\newblock A top-down approach to multiname credit.
\newblock {\em Operations Research}, 59(2):283--300.

\bibitem[Giesecke and Kim, 2007]{giesecke2007}
Giesecke, K. and Kim, B. (2007).
\newblock Estimating tranche spreads by loss process simulation.
\newblock In {\em Proceedings of the 39th Conference on Winter Simulation},
  pages 967--975. IEEE Press.

\bibitem[Herbertsson, 2008]{herb08}
Herbertsson, A. (2008).
\newblock Pricing synthetic cdo tranches in a model with default contagion
  using the matrix analytic approach.
\newblock {\em Journal of Credit Risk}, 4:3--35.

\bibitem[Hull and White, 2006]{hullwhite2006}
Hull, J.~C. and White, A.~D. (2006).
\newblock Valuing credit derivatives using an implied copula approach.
\newblock {\em Journal of Derivatives}, 14(2):8.

\bibitem[Jarrow et~al., 1997]{jarrowLando1997}
Jarrow, R.~A., Lando, D., and Turnbull, S.~M. (1997).
\newblock A markov model for the term structure of credit risk spreads.
\newblock {\em Review of Financial Studies}, 10(2):481--523.

\bibitem[Jarrow and Turnbull, 1995]{jarrowTurbul1995}
Jarrow, R.~A. and Turnbull, S.~M. (1995).
\newblock Pricing derivatives on financial securities subject to credit risk.
\newblock {\em Journal of Finance}, 50(1):53--85.

\bibitem[Jorion and Zhang, 2007]{jorinZhang2007}
Jorion, P. and Zhang, G. (2007).
\newblock Good and bad credit contagion: Evidence from credit default swaps.
\newblock {\em Journal of Financial Economics}, 84(3):860--883.

\bibitem[Lando, 1998]{lando1998}
Lando, D. (1998).
\newblock On cox processes and credit risky securities.
\newblock {\em Review of Derivatives Research}, 2(2-3):99--120.

\bibitem[Laurent et~al., 2011]{laurent2011hedging}
Laurent, J.-P., Cousin, A., and Fermanian, J.-D. (2011).
\newblock Hedging default risks of cdos in markovian contagion models.
\newblock {\em Quantitative Finance}, 11(12):1773--1791.

\bibitem[Laurent and Gregory, 2005]{laurent2005}
Laurent, J.-P. and Gregory, J. (2005).
\newblock Basket default swaps, cdos and factor copulas.
\newblock {\em Journal of Risk}, 7(4):103--122.

\bibitem[Li, 2000]{li1999}
Li, D.~X. (2000).
\newblock On default correlation: A copula function approach.
\newblock {\em Journal of Fixed Income}, 9(4):43--54.

\bibitem[Merton, 1974]{merton1974pricing}
Merton, R.~C. (1974).
\newblock On the pricing of corporate debt: The risk structure of interest
  rates.
\newblock {\em Journal of Finance}, 29(2):449--470.

\bibitem[Mortensen, 2006]{Mortensen2005}
Mortensen, A. (2006).
\newblock Semi-analytical valuation of basket credit derivatives in
  intensity-based models.
\newblock {\em Journal of Derivatives}, 13(4):8--26.

\bibitem[Nickerson and Griffin, 2017]{nickerson2017}
Nickerson, J. and Griffin, J.~M. (2017).
\newblock Debt correlations in the wake of the financial crisis: What are
  appropriate default correlations for structured products?
\newblock {\em Journal of Financial Economics}, 125(3):454--474.

\bibitem[O'Kane and Livesey, 2004]{okaneLivesey2004}
O'Kane, D. and Livesey, M. (2004).
\newblock Base correlation explained.
\newblock {\em Lehman Brothers, Fixed Income Quantitative Credit Research},
  346.

\bibitem[Sch{\"o}nbucher and Schubert, 2001]{schonbucher2001}
Sch{\"o}nbucher, P.~J. and Schubert, D. (2001).
\newblock Copula-dependent default risk in intensity models.
\newblock In {\em Working paper, Department of Statistics, Bonn University}.
  Citeseer.

\bibitem[Sundaresan, 2013]{sundaresan2013review}
Sundaresan, S. (2013).
\newblock A review of merton’s model of the firm’s capital structure with
  its wide applications.
\newblock {\em Annual Review of Financial Economics}, 5(1):21--41.

\end{thebibliography}

\end{document}